\g@addto@macro{\endabstract}{\@setabstract}
\setlist[enumerate]{itemsep=2pt,topsep=3pt}
\setlist[itemize]{itemsep=2pt,topsep=3pt}
\setlist[enumerate,1]{label=(\alph*)}
\newcommand{\distas}[1]{\mathbin{\overset{#1}{\kern\z@\sim}}}%
\newsavebox{\mybox}\newsavebox{\mysim}
\newcommand{\distras}[1]{%
  \savebox{\mybox}{\hbox{\kern3pt$\scriptstyle#1$\kern3pt}}%
  \savebox{\mysim}{\hbox{$\sim$}}%
  \mathbin{\overset{#1}{\kern\z@\resizebox{\wd\mybox}{\ht\mysim}{$\sim$}}}%
}
\renewcommand{\leq}{\leqslant}
\renewcommand{\geq}{\geqslant}
\newcommand\blfootnote[1]{%
  \begingroup
  \renewcommand\thefootnote{}\footnote{#1}%
  \addtocounter{footnote}{-1}%
  \endgroup
}
\theoremstyle{plain}
\newtheorem{theorem}{Theorem}[section]
\newtheorem{lemma}[theorem]{Lemma}
\newtheorem{proposition}[theorem]{Proposition}
\theoremstyle{definition}
\newtheorem{assumption}{Assumption}[section]
\newcommand{\setntn}[2]{ \{ #1 : #2 \} }
\newcommand{\1}{\mathbbm 1}
\newcommand{\la}{\langle}
\newcommand{\ra}{\rangle}
\renewcommand{\phi}{\varphi}
\renewcommand{\epsilon}{\varepsilon}
\newcommand{\cC}{\mathscr C}
\newcommand{\rR}{\mathcal R}
\newcommand{\RR}{\mathbbm R}
\newcommand{\NN}{\mathbbm N}
\newcommand{\EE}{\mathbbm E}
\newcommand{\XX}{\mathbbm X}
\newcommand{\YY}{\mathbbm Y}
\newcommand{\lrM}{\mathcal M_C}
\newcommand{\olrM}{\overline{\mathcal M}_C}
\newcommand{\diff}{\mathrm d}
\begin{document}

\thispagestyle{empty}

\title{}

\begin{center}
  \LARGE
    Necessary and Sufficient Conditions for Existence and Uniqueness of  Recursive Utilities\blfootnote{The authors thank
        Anmol Bhandari, Tim Christensen, Ippei Fujiwara, Jinill Kim,
        Daisuke Oyama, and Guanlong Ren for useful comments and suggestions.
        Special thanks are due to Miros\l{}awa Zima for her valuable input on
    local spectral radius conditions.  The second author gratefully
acknowledges financial support from ARC grant FT160100423.  The views
expressed herein are those of the authors and not necessarily those of the
Federal Reserve Bank of Minneapolis or the Federal Reserve System.}

    \vspace{1em}

  \large
  Jaroslav Borovi\v{c}ka\textsuperscript{a}
  and John Stachurski\textsuperscript{b} \par \bigskip

  \small
  \textsuperscript{a} New York University, Federal Reserve Bank of Minneapolis and NBER \par
  \textsuperscript{b} Research School of Economics, Australian National University \bigskip

  \normalsize
  \today
\end{center}

%\author{Jaroslav Borovi\v{c}ka}
%\address{Department of Economics, New York University}
%\email{jb4457@nyu.edu}

%\author{John Stachurski}
%\address{Research School of Economics, Australian National University}
%\email{john.stachurski@anu.edu.au}

%\date{\today}

%\maketitle

\begin{abstract}
    We obtain exact necessary and sufficient conditions for existence and
    uniqueness of solutions of a class of homothetic recursive utility models
    postulated by \cite{epstein1989}.  The conditions center on a single test value
    with a natural economic interpretation.  The test sheds light on the
    relationship between valuation of cash flows, impatience, risk adjustment and intertemporal
    substitution of consumption.  We propose two methods to compute the test
    value when an analytical solution is not available.  Several applications
    are provided.

    \vspace{1em}

    \noindent
    \noindent
    \textit{JEL Classifications:} D81, G11 \\
    \textit{Keywords:} Recursive preferences, existence, uniqueness
\end{abstract}

\clearpage
\thispagestyle{empty}
{\bf Disclosure statement for Jaroslav Borovi\v{c}ka}

The views expressed in the paper are my own and not necessarily those of the
Federal Reserve Bank of Minneapolis or the Federal Reserve System. I have nothing else to disclose.

\clearpage
\thispagestyle{empty}
{\bf Disclosure statement for John Stachurski}

I have nothing to disclose.

\clearpage
\setcounter{page}{1}
%section
\section{Introduction}

Recursive preference models such as those discussed in
\cite{koopmans1960stationary}, \cite{epstein1989} and \cite{Weil1990}
play an important role in macroeconomic and financial
modeling.  For example, the long-run risk models analyzed in
\cite{bansal2004risks}, \cite{hansen2008consumption},
\cite{bansal2012empirical} and \cite{schorfheide2018identifying} have
employed such preferences in discrete time infinite horizon settings
with a variety of consumption path specifications to
help resolve long-standing empirical puzzles identified in the literature.

In recursive utility models, the lifetime value of a consumption stream from
a given point in time is expressed as the solution to a nonlinear
forward-looking equation.  While this representation is convenient and intuitive, it
can also be vacuous, in the sense that no finite solution to the forward
looking recursion exists.  Moreover, even when a solution is found, this
solution lacks predictive content unless some form of uniqueness can
also be established.  In general, identifying restrictions that imply existence
and uniqueness of a solution for an empirically relevant class of consumption
streams is challenging.

The aim of the present paper is to obtain existence and uniqueness results
that are as tight as possible in a range of empirically plausible settings,
while restricting attention to practical conditions that can be tested in
applied work.  To this end, we provide conditions for existence and uniqueness
of solutions to the class of homothetic preferences studied in
\cite{epstein1989}, while admitting both stationary and nonstationary
consumption paths.  These conditions are both necessary and sufficient, and
hence as tight as possible in the setting we consider.  In particular, if the
conditions hold then a unique, globally attracting solution exists, while if
not then no finite solution exists. Existence of a finite solution is
equivalent to the existence of a finite wealth-consumption ratio, a central
object of interest in asset pricing.

To give more detail on that setting, let preferences be defined recursively by
\begin{equation}
    \label{eq:agg}
    V_t = \left[
            (1 - \beta) C_t^{1-1/\psi}
            + \beta \left\{ \rR_t \left(V_{t+1}
            \right) \right\}^{1-1/\psi}
          \right]^{1/(1-1/\psi)},
\end{equation}
where $\{C_t\}$ is a consumption path, $V_t$ is the utility value of the path
extending on from time $t$ and $\rR_t$ is the Kreps--Porteus
certainty equivalent operator
\begin{equation}
    \label{eq:ce}
    \rR_t(V_{t+1})
    = ( \EE_t  V^{1-\gamma}_{t+1} )^{1/(1-\gamma)}.
\end{equation}
The parameter $\beta \in (0, 1)$ is a time discount factor, while $\gamma
\not=1$
governs risk aversion and $\psi \not=1$ is the elasticity of intertemporal
substitution.  We take
the consumption stream as given and seek a solution for normalized utility
$V_t / C_t$.

The first step in our approach is to associate to each consumption process the risk-adjusted
long-run mean consumption growth rate
\begin{equation}
    \label{eq:drc}
    \lrM
    := \lim_{n \to \infty }
        \left[
            \rR \left( \frac{C_n}{C_0} \right)
            \right]^{1/n} ,
\end{equation}
where $\rR$ is the unconditional version of the Kreps--Porteus certainty
equivalent operator.  Beginning with the case where the state vector driving
the conditional distribution of consumption growth takes values in a compact
set---which is where the sharpest results can be obtained---we show that a
unique solution exists if and only if $\Lambda < 1$, where
\begin{equation}
    \label{eq:kc}
    \Lambda := \beta \, \lrM^{1-1/\psi}.
\end{equation}
Under the same compactness restriction, we also show that the condition
$\Lambda < 1$ is both necessary and sufficient for global convergence of
successive approximations associated with a natural fixed point mapping. In
fact our results establish that convergence of successive approximations
itself implies that a unique solution exists, and that the limit produced
through this process is equal to the solution.  Furthermore, we prove that
when the condition $\Lambda < 1$ fails, not only does existence and uniqueness
of a solution fail, but existence fails specifically.

The value $\Lambda$ represents the risk-adjusted long-term consumption growth
rate modified by
impatience and the intertemporal substitutability
of consumption.  Despite the fact that the preference recursion \eqref{eq:agg} intertwines the
contributions of impatience, intratemporal risk aversion and intertemporal elasticity of
substitution to value, the condition $\Lambda < 1$ effectively separates these forces.
Details of the consumption growth process, such as its
persistence or higher moments of its innovations, matter
only through the long-run distribution of consumption growth encoded in $\lrM$.  Additional discussion of the intuition behind condition
\eqref{eq:kc} is provided in section~\ref{s:ez} and the applications.

In addition to the preceding results, we use a local spectral theorem to show that
\begin{equation}
    \label{eq:deftheta}
    \lrM = r(K)^{1/(1-\gamma)},
\end{equation}
where $r(K)$ is the spectral radius of a valuation operator $K$ determined by the
primitives and clarified below.
This result is useful on two levels. First, spectral radii and dominant
eigenfunctions associated with valuation operators have
increasingly been used to understand long-run risks and long-run values in
macroeconomic and financial applications by inducing a decomposition of the
stochastic discount factor (see, e.g., \cite{alvarez2005using, hansen2009,
qin2017long, christensen2017nonparametric}). The identification
in \eqref{eq:deftheta} allows us to connect to and draw
insights from this literature.  Second, on a computational level, when the
state space for the state process is
finite, the valuation operator $K$ is just a matrix, and the spectral radius
is easily computed.  From this one can compute the test statistic $\Lambda$
via \eqref{eq:deftheta}.  When the state space is not
finite, one can still implement this idea after discretization.

When the state space is high dimensional, accurate discretization is
nontrivial and calculation of the spectral radius becomes computationally
expensive.  For these scenarios, we propose instead a Monte Carlo method to
calculate the test value $\Lambda$, based around the idea of simulating
consumption paths from a given specification and calculating the risk-adjusted
expectation on the right hand side of \eqref{eq:drc} by averaging over these
paths.  This approach is straightforward to implement and relatively
insensitive to the dimension of the state space.  Another advantage is that
the routine is easily be parallelized by simulating independent consumption
paths along multiple execution threads.

All of the theoretical results on existence, uniqueness and convergence of
successive approximations discussed above are stated in the context of a
compact-valued state process, which drives the persistent component of
consumption growth.  In this setting we apply a fixed point theorem due to \cite{du1990fixed},
which extends to abstract vector space the idea that an increasing concave
function $f$ from $\RR_+$ to itself has at most one strictly positive fixed
point---and at least one such fixed point if the slope conditions $f'(0) > 1$
and $f'(\infty) < 1$ are satisfied.  In the case of the valuation problem
considered in this paper, monotonicity and concavity are inherent in the
preference specification, while the condition $\Lambda < 1$ is the key to the
slope conditions.  The arguments required for the last step are nontrivial and
compactness of the state space plays a significant role.

At the same time, we do provide some guidance on the implications of the
condition $\Lambda < 1$ when the state space is unbounded.  In this setting we
show that, under an auxiliary technical restriction related to compactness of
iterates of the operator $K$, the condition $\Lambda < 1$ is again necessary
and sufficient for existence of a solution.  As before, convergence of
successive approximations to a given function implies that the limiting
function is a solution, and this can only occur when $\Lambda < 1$.  The
identification in \eqref{eq:deftheta} continues to be valid when these
conditions hold, and the Monte Carlo method described above can still be
applied.  The proofs use an approximation argument that bootstraps previously
stated results from the compact case.

We provide a series of applications of the numerical and theoretical results
in section~\ref{s:a}. We start with the model of a trend-stationary
consumption process from \cite{alvarez2005using}, in which we can characterize
our condition analytically. The result reveals that transitory uncertainty in
the consumption process, such as stationary fluctuations around a
deterministic time trend, are immaterial for the existence of the continuation
value.

Next, we focus on a frequently used case in which the dynamics of conditional
moments of consumption growth are encoded using a Markov chain. Specifically,
we use the calibration from \cite{johannes2016learning} and consider two
information structures, one in which the agent observes the realizations of
the Markov chain and another in which the agent must learn about the
underlying state from consumption growth realizations that reveal the state
only imperfectly. In the former case, given the simple structure of the state
space, evaluating the test condition involves computing the spectral radius in
(\ref{eq:deftheta}) as the largest eigenvalue of a small matrix. In the latter
case, the state space is continuous and encodes subjective probabilities of
unobserved states of the Markov chain, updated using Bayes rule. Despite the
fact that the state space and transition dynamics are now much more complex,
our results reveal that the continuation value exists exactly for the same set
of parameters as in the full information case. The underlying reason is the
transitory role of state uncertainty for the conditional distribution of
future consumption growth that has no impact on the value of $\lrM$ in
(\ref{eq:drc}).

Our main quantitative application is the long-run
risk model specified in \cite{schorfheide2018identifying}.  Existence of finite
continuation values is a nontrivial issue in such models because
(i) discounting is extremely small in order to raise the
importance of persistent risk components, pushing them towards the boundary
between stability and instability, and (ii) the state dynamics are nonlinear
and relatively high dimensional.  Using numerical methods, we show that the
condition $\Lambda < 1$ is satisfied 
with arbitrarily small amounts of truncation and almost certainly satisfied in 
the original model without truncation.  We also provide several robustness checks that relate
alternative approximation procedures and the impact of truncation.

Finally, we show that our results can also be applied to 
production economies where consumption is endogenously determined. In many
applications, consumption is cointegrated with an exogenously specified
process that drives uncertainty in the model. Since transitory fluctuations in
the consumption process are irrelevant, the risk-adjusted long-run growth rate
can be directly computed using the exogenous driving process, without the
knowledge of specific details of the consumption process.

Regarding existing literature, sufficient conditions for existence and
uniqueness of recursive utilities were provided by \cite{epstein1989} and
\cite{marinacci2010}.  These conditions require a finite bound $B_c$ on
consumption growth $C_{t+1} / C_t$ that holds asymptotically with probability
one.  As a result, they cannot be applied to many recent specifications of the
consumption processes, such as the long-run risk specification given in
section~\ref{ss:lrr}, as consumption growth in those settings is unbounded
above.  Even if a finite bound is obtained by truncation of the shocks, we
show that the resulting conditions are always stricter than the ones presented
in this paper, and typically far too strict for realistic parameterizations.  This is
due to the fact that probability one bounds restrict utility uniformly along
every future consumption trajectory, while the results in this paper consider
what happens on average across all paths.  In other words, our results are sharper because
recursive utility specifications, while nonlinear, are still defined using
integration over future continuation values. Conversely, focusing only on the
upper tail of the consumption growth process leads to excessively tight
stability conditions.

Another condition for existence of recursive utilities can be found in
\cite{alvarez2005using}, which focuses on the case where consumption has a
deterministic time trend.  Our condition $\Lambda < 1$ is also weaker than
their condition, as shown in section~\ref{s:pob}.  The intuition behind this
is that \cite{alvarez2005using} use a fixed point argument that requires
contraction in one step.  In contrast, the restriction $\Lambda < 1$ is an
asymptotic condition that ignores short-run fluctuations in consumption.

Also related is \cite{hansen2012recursive}, who study Epstein--Zin
utility models with unbounded consumption growth.  Their approach is to
connect the solution to the Epstein--Zin utility recursion and the
Perron--Frobenius eigenvalue problem associated with a linear operator,
denoted in their paper by $\mathbb T$, that is proportional to the operator
$K$ discussed above.  Consumption growth is a function of an unbounded
exogenous state process.  In this setting they show that a solution exists
when a joint restriction holds on the spectral radius of $\mathbb T$ and the
preference parameters, along with integrability conditions on the
eigenfunctions of $\mathbb T$ already mentioned.  They also obtain a
uniqueness result for some parameter values (although not the most empirically
relevant ones).

For the case where $\XX$ is compact, our approach has the following
advantages: First, we obtain uniqueness of the solution for all
parameterizations.  Second, we show our conditions are necessary as well as
sufficient, both for existence and for uniqueness.  Third, we obtain a
globally convergent method of computation, and show that it converges if and
only if a solution exists.  Fourth, we provide multiple representations of the
test value, strengthening the economic interpretation, as well as a method of
computation that can be applied in high dimensional settings.  Fifth, we avoid
the auxiliary conditions in \cite{hansen2012recursive} involving integrability
restrictions on the eigenfunctions of the operator $\mathbb T$, which means
that all our conditions are straightforward to test in applied settings. For
the case where $\XX$ is not compact, our results also serve to augment those
of \cite{hansen2012recursive} by showing that the condition $\Lambda < 1$ is
necessary as well as sufficient for existence of a solution.

In another related study, \cite{guo2016} consider an extension to the
Epstein--Zin recursive utility model that includes utility measures for
investment gains and losses.  As a part of that study they obtain results for
existence, uniqueness and convergence of solutions to Epstein--Zin recursive
utility models with consumption specifications analogous to those in
\cite{hansen2012recursive}, except that the state space is restricted to be
finite.  In comparison, we allow for the state space to be countably or
uncountably infinite and we establish not just sufficiency but also necessity.

%In \cite{hansen2012recursive} and \cite{guo2016}, the condition
%$r(K)^{1/\theta} < 1$ is expressed in different notation but the corresponding
%condition (e.g., Assumption~3 in \cite{hansen2012recursive}) is in each case
%equivalent, modulo the different function spaces and hence the definition of
%the spectral radius.

The paper is structured as follows:  Section~\ref{s:cp} considers
the risk-adjusted long-run mean consumption growth rate in more depth.
Section~\ref{s:ez} states our main
results. Section~\ref{s:a} discusses applications.  Section~\ref{s:pob}
contrasts our results with alternative sufficient conditions in the previous
literature.  Section~\ref{s:ub} treats the unbounded case and
section~\ref{s:c} concludes.  All proofs are deferred to the
appendix.\footnote{The repository
\texttt{https://github.com/jstac/recursive\_utility\_code} contains code that
replicates all of our numerical results.}

\section{Consumption Paths and Risk-Adjusted Growth}

\label{s:cp}

Before stating our main results, we introduce our baseline model for
consumption paths and address an important issue: Since $\Lambda  = \beta \,
\lrM^{1-1/\psi}$, the practicality of our condition $\Lambda < 1$ depends
on the ability to accurately evaluate the risk-adjusted long-run mean
consumption growth rate $\lrM$, as defined in \eqref{eq:drc}.  For some
specifications of the consumption path, an analytical expression for $\lrM$
exists.  For others, however, no such expression can be obtained.  In this
second case, we must turn to numerical methods to evaluate
$\lrM$.  This section  discusses two methods to compute $\lrM$ numerically.
The main aims of this section are to (a) build intuition on
$\lrM$ by treating some relatively simple cases and (b) provide
evidence affirming that $\lrM$ and hence $\Lambda$ can be evaluated
sufficiently accurately even when no closed form solution exists.

\subsection{Consumption Paths}

As in \cite{hansen2012recursive}, we suppose that consumption growth has
the generic specification
\begin{equation}
    \label{eq:kappa}
    \ln (C_{t+1}/ C_t) = \kappa(X_t, X_{t+1}, \epsilon_{t+1}),
\end{equation}
where $\kappa$ is a continuous function, $\{X_t\}$ is an exogenous
state process and $\{\epsilon_t\}$ is an {\sc iid} innovation process
supported on $\YY \subset \RR^k$ and independent of $\{X_t\}$.
The state process is assumed to be stationary and
Markov, taking values in a subset $\XX$ of $\RR^n$.  The unconditional density
of each $X_t$ is denoted by $\pi$. The function $q(x, \cdot)$ represents the
conditional density of $X_{t+1}$ given $X_t = x$.  All of our results include
the case where $\XX$ is finite and, in this case, the transition density $q(x,
y)$ should be interpreted as a transition matrix. More generally, the term
``density'' should be understood as a synonym for ``probability mass
function.''

\begin{assumption}
    \label{a:i}
    The function $q$ is continuous and the $\ell$-step transition density
    $q^\ell$ is everywhere positive at some $\ell \in \NN$.
\end{assumption}

Continuity can be ignored when $\XX$ is finite.
Positivity of $q^\ell$ for some $\ell$
means that $\{X_t\}$ is both aperiodic and irreducible,  guaranteeing
uniqueness of the stationary distribution $\pi$ and providing regularity for
asymptotic values such as the the risk-adjusted long-run mean consumption
growth rate.  Assumption~\ref{a:i} is either satisfied directly in our
applications or can be validated after an arbitrarily small perturbation (as is the case
for the learning application---see section~\ref{ss:learning}).

For example, consider the consumption growth specification
\begin{align}
    \ln \left( C_{t+1}/C_{t}\right)
        &=\mu_c + X_t + \sigma_c \, \epsilon_{t+1}
    \label{al:sar0}
        \\
    X_{t+1}
        &=\rho X_t + \sigma \, \eta _{t+1}
    \label{al:sar1}
\end{align}
from section~I.A of \cite{bansal2004risks}.  Here $-1 < \rho < 1$ and $\left\{
\epsilon _t \right\} $ and $\left\{ \eta _t \right\} $ are {\sc iid}
standard normal innovations.\footnote{These calculations can be further extended to the case where consumption
growth is a component of a VAR, as in \cite{hansen2008consumption} or
\cite{bansal_kiku_shaliastovich_yaron:2014}.}  With the state process set to $\{X_t\}$,
we have $q(x, \cdot) = N(\rho x, \sigma^2)$  and assumption~\ref{a:i} is satisfied with $\ell = 1$
whenever $\sigma > 0$.

For this model of consumption dynamics, an analytical expression for $\lrM$
exists, even though the same is not true for the continuation value
$V_t$.  To see this, observe that $C_n / C_0 = \exp
\left(n \mu_c + \sum_{t=1}^n H_t \right)$ where $H_t := X_t + \sigma_c
\epsilon_t$, so the risk-adjusted long-run mean consumption growth rate is
\begin{equation*}
    \lrM
    =
    \lim_{n \to \infty}
    \left\{
        \rR \exp \left( n \mu_c + \sum_{t=1}^n H_t \right)
    \right\}^{1/n}.
\end{equation*}
Here $\rR$ is the unconditional Kreps--Porteus certainty equivalent operator
defined by $\rR[Y] := \EE[Y^{1-\gamma}]^{1/(1-\gamma)}$.
Noting that $\{H_t\}$ is Gaussian and setting $s_n$ equal to the variance of $\sum_{t=1}^n H_t$,
we have
\begin{equation}
    \label{eq:re}
    \rR \exp \left( n \mu_c + \sum_{t=1}^n H_t \right)
    = \exp \left\{
         n \mu_c + \frac{ (1-\gamma) s^2_n}{2}
        \right\}.
\end{equation}
Using the fact that $\sum_{t=1}^n H_t$ is the sum of the independent terms
$\sum_{t=1}^n X_t$ and $\sigma_c \sum_{t=1}^n \epsilon_t$, along with the AR(1)
dynamics in \eqref{al:sar1}, straightforward calculations lead to
\begin{equation*}
    s^2_n
    = n \sigma_c
    +
    \frac{\sigma^2}{1-\rho^2}
    \left\{
        n +
         2(n-1) \frac{\rho}{1-\rho}
            -  2 \rho^2 \frac{1- \rho^{n-1}}{(1-\rho)^2}
    \right\}.
\end{equation*}
Raising the right hand side of \eqref{eq:re} to the power of $1/n$ and
taking the limit yields
\begin{equation}
    \label{eq:lrma}
    \lrM
    =
    \exp \left\{
        \mu_c + \frac{1}{2}(1-\gamma)
        \left( \sigma_c^2 + \frac{\sigma^2}{(1-\rho)^2} \right)
        \right\}.
\end{equation}
While the unconditional variance of the one-period consumption growth rate is
$\sigma_c + \sigma^2/(1-\rho^2)$, persistence in consumption growth implies that the
long-run variance of the consumption growth rate is given by
$\sigma_c + \sigma^2/(1-\rho)^2$. Typical calibrations of risk aversion set $\gamma > 1$
and higher persistence and volatility then contribute negatively to $\lrM$.
For example, under the parameterization of the consumption process in table~I
of \cite{bansal2004risks},
the expression in \eqref{eq:lrma} evaluates to $1.00045$ when $\gamma=7.5$ and
$0.99964$ when $\gamma=12.5$.\footnote{\label{fn:mp}The parameter values in question are $\mu_c = 0.0015$, $\rho=0.979$, $\sigma=0.00034$ and $\sigma_c = 0.0078$.}

%For example, in the class of long-run risk models that we study in
%section~\ref{ss:lrr}, high persistence of the conditional consumption growth
%assures the existence of a finite continuation value even for time preference
%parameters very close to (or above) one, as long as the elasticity parameter
%$\psi$ in (\ref{eq:kc}) is above one.

\subsection{The Finite State Case}

\label{ss:fsc}

The finite state setting gives us an alternative view on the risk-adjusted
long-run mean consumption growth rate $\lrM$ and an alternative way to
compute it.  In particular,
if $\XX$ is finite with typical elements $x, y$ and $K$ is the matrix with
$(x, y)$-th element
\begin{equation}
    \label{eq:kij}
    K(x, y) :=
    \sum_{y \in \XX} \int \exp[ (1-\gamma) \kappa(x, y, \epsilon) ] \nu(\diff \epsilon) q(x, y) ,
\end{equation}
then \eqref{eq:deftheta} holds; that is,
\begin{equation}
    \label{eq:lrfc}
    \lrM = r(K)^{1/(1-\gamma)}
    \quad \text{when} \quad
    r(K) = \max_{\lambda \in E} |\lambda|.
\end{equation}
Here $E$ is the set of eigenvalues of $K$, so $r(K)$ is the spectral radius of
the matrix $K$.

To gain some understanding as to why the alternative representation of $\lrM$
in \eqref{eq:lrfc} is valid, note that, for all $x$ in $\XX$ and all $n$ in $\NN$, we have
\begin{equation}
    \label{eq:gsi}
    K^n \1(x)
    = \EE_x
    \exp \left\{
        (1-\gamma) \sum_{t=1}^n \kappa(X_{t-1}, X_t, \epsilon_t)
    \right\}.
\end{equation}
Here $K^n$ is the $n$-th power of $K$ in \eqref{eq:kij} and $\1$ is a vector
of ones.  The expectation $\EE_x$ conditions on $X_0 = x$.
That the identity in \eqref{eq:gsi} is true at $n=1$ follows immediately from
the definition of $K$ in \eqref{eq:kij}, and the case of
general $n$ can be confirmed by induction.
Let $\| \cdot \|$ be the $L_1$ vector norm defined by $\| h \| = \sum_{x \in
\XX} |h(x)| \pi(x)$.  By \eqref{eq:gsi} and the Law of Iterated Expectations, we have
\begin{equation}
    \| K^n \1 \|
    = \sum_{x \in \XX} K^n \1(x) \pi(x)
    = \EE
    \exp \left\{
        (1-\gamma) \sum_{t=1}^n \kappa(X_{t-1}, X_t, \epsilon_t)
    \right\}.
\end{equation}
In other words,
\begin{equation}
    \label{eq:vne}
    \| K^n \1 \|
    = \EE
    \,
    \left( \frac{C_n}{C_0} \right)^{1-\gamma}
    =
    \left[ \rR \left( \frac{C_n}{C_0} \right) \right]^{1-\gamma}.
\end{equation}

Gelfand's formula for the spectral radius tells us that $\| K^n \|^{1/n} \to r(K)$ as $n \to \infty$
whenever $\| \cdot \|$ is a matrix norm, and this result can, in the present
context, be modified to
show that $\| K^n \1 \|^{1/n} \to r(K)$ also holds.\footnote{This is a finite
dimensional version of the local spectral radius result discussed in the
introduction.  A complete proof that
extends to infinite state settings is given in theorem~\ref{t:lsr} of the appendix.}
Connecting the last result with \eqref{eq:vne} gives
\begin{equation}
    \label{eq:piv}
    \lrM^{1-\gamma}
    = \lim_{n  \to \infty}
    \left[ \rR \left( \frac{C_n}{C_0} \right) \right]^\frac{1-\gamma}{n}
    = \lim_{n  \to \infty}
         \| K^n \1 \|^{1/n}
    = r(K),
\end{equation}
which justifies the claim in \eqref{eq:lrfc}.  In several applications below
we use this identity by calculating $r(K)$ numerically and then recovering
$\lrM$ via $\lrM = r(K)^{1/(1-\gamma)}$.

\subsection{Discretization}

\label{ss:dis}

As discussed in the previous section, if the state space is finite, then we
can use the identity $\lrM = r(K)^{1/(1-\gamma)}$ obtained in \eqref{eq:piv}
to calculate $\lrM$, which in turn allows us to compute the stability exponent
$\Lambda$.  If, on the other hand, $\XX$ is not finite, then one option is to discretize the
model, leading to a finite state space and a finite set of transition
probabilities $q(x, y)$, and then proceed as above.  Here we investigate the
accuracy of this procedure.

Our experiment is based on the Bansal--Yaron consumption growth dynamics in
\eqref{al:sar0}--\eqref{al:sar1}, where an analytical expression for $\lrM$ was obtained in
\eqref{eq:lrma}.  We first discretize the Gaussian AR(1) state process
\eqref{al:sar1} using Rouwenhorst's method (\citeauthor{rouwenhorst1995},
\citeyear{rouwenhorst1995}). Then we compute the matrix $K$ in \eqref{eq:kij}
corresponding to this discretized state process, calculate the spectral radius
$r(K)$ using linear algebra routines and, from there, compute the associated
value for $\lrM$ as $r(K)^{1/(1-\gamma)}$. Finally, we compare the result with
the true value of $\lrM$ obtained from the analytical expression
\eqref{eq:lrma}.

Table~\ref{tab:dmp} shows such a comparison at a range of parameter values and
levels of discretization.  The preference parameter $\gamma$ is varied across
the rows, as shown in the first column, while the remaining parameters are
sourced from table~I of \cite{bansal2004risks}, as in footnote~\ref{fn:mp}.
The second column shows the true value of
$\lrM$ for the nondiscretized model.  The remaining columns show the value
computed using the numerical procedure discussed in the previous paragraph at
different levels of discretization.  For example, $D=5$ means that the AR(1)
process \eqref{al:sar1} was discretized into a 5 state Markov chain using the
Rouwenhorst method.  The results show that the discretization based method is
accurate for this model, even for relatively coase approximations.

{\small
\begin{table}
    \centering
    \begin{tabular}{l|c|cccc}
    \toprule
     &  true value &  $D=5$  & $D=50$   &  $D=100$     &     $D=200$ \\
    \midrule
    $\gamma=7.5$ & 1.0004504 & 1.0004998 & 1.0004549 & 1.0004527 & 1.0004516 \\
    $\gamma=10.0$ & 1.0000466 & 1.0001658 & 1.0000584 & 1.0000525 & 1.0000496 \\
    $\gamma=12.5$ & 0.9996430 & 0.9998662 & 0.9996673 & 0.9996552 & 0.9996491 \\
    \bottomrule
    \end{tabular}
    \vspace{0.8em}
    \caption{\label{tab:dmp}True value and discrete approximation of $\lrM$, Bansal--Yaron model}
\end{table}
}

\subsection{A Monte Carlo Method}

\label{ss:mcm}

One potential issue with the discretization based method discussed in
section~\ref{ss:dis} is that the algorithm is computationally inefficient
when the state space is large.  For this reason we also propose a Monte Carlo
method to calculate an approximation to $\lrM $ that is less susceptible to
the curse of dimensionality.  The first step is to replace the limit in the
definition of $\lrM$ by some finite but large $n$, which leads to
\begin{equation}
    \label{eq:drcn}
    \lrM(n)
    := \left\{
        \EE \left( \frac{C_n}{C_0} \right)^{1-\gamma}
    \right\}^{\frac{1}{1-\gamma} \frac{1}{n} }
\end{equation}

Next, the expectation in \eqref{eq:drcn} is replaced by a sample
average over $m$ independent consumption paths, generated according to the
specifications of the model in question.   In particular, with $\{C_t^{(j)}\}$
as the $j$-th of the $m$ consumption paths,\footnote{Since $\EE$ in
\eqref{eq:drcn} is the unconditional expectation, we draw the initial state
$X^{(j)}_0$ from its stationary distribution when computing $\{C_t^{(j)}\}$.}
we take the approximation
\begin{equation}
    \label{eq:drcmn}
    \lrM(m, n)
    := \left\{
        \frac{1}{m} \sum_{j=1}^m \left( \frac{C_n^{(j)}}{C_0^{(j)}} \right)^{1-\gamma}
    \right\}^{\frac{1}{1-\gamma} \frac{1}{n} }
\end{equation}

With fixed $n$ and $Y^{(j)} := ( C_n^{(j)}/C_0^{(j)})^{1-\gamma}$,
the Strong Law of Large Numbers
yields
$\frac{1}{m} \sum_{j=1}^m Y^{(j)}\to \EE \, Y $ as $m \to \infty$ with
probability one.  However, this result is only asymptotic and our main concern is
whether the estimator $\lrM(m, n)$ has good properties when $m$ and $n$ are
moderate.  To test this, we again use the lognormal consumption model and
compare our approximations of $\lrM$ with the true value obtained via the
analytical expression given in \eqref{eq:lrma}.

%\footnote{We also use the fact that $g(Z_n) \to g(Z)$ almost
    %surely whenever $g \colon \RR \to \RR$ is continuous and $Z_n \to Z$
    %almost surely.  From this and the Strong Law of Large Numbers, we obtain
%convergence of $\lrM(m, n)$ to $\lrM(n)$ with probability one as $m \to
%\infty$.}

Table~\ref{tab:t1} illustrates our results.  The consumption path parameters are again chosen to
match \cite{bansal2004risks}, as in footnote~\ref{fn:mp}. The parameter
$\gamma$ is set to 7.5, which matches the first row of table~\ref{tab:dmp}.
The true value of $\lrM$ calculated from the analytical expression
\eqref{eq:lrma} is $1.0004504$, as shown in the caption for the table.  The
interpretation of $n$ and $m$ in the table is consistent with the right hand
side of \eqref{eq:drcmn}.  For each $n, m$ pair, we compute $\lrM(n, m)$ a
total of 1,000 times using independent draws, and then present the mean and
the standard deviation of the sample in the corresponding table cell.
The Monte Carlo approximation is accurate up to three decimal places in
all simulations. Standard deviations are small and decline with $m$.\footnote{One additional
advantage of the Monte Carlo method centered on \eqref{eq:drcmn} is that
simulation of the independent consumption processes can be 
parallelized.  This leads to speed gains approaching two
orders of magnitude in some of our implementations.}

{\small
\begin{table}
    \centering
    \begin{tabular}{l|rrrrr}
    \toprule
    &  $m = 1000$ & $m = 2000$ & $m = 3000$ & $m = 4000$ & $m = 5000$ \\
    \midrule
    $n = 250$ & 1.0006940 & 1.0006905 & 1.0006940 & 1.0006932 & 1.0006934 \\
         & (0.000076) & (0.000048) & (0.000032) & (0.000037) & (0.000029) \\
    $n = 500$ & 1.0006208 & 1.0006091 & 1.0005813 & 1.0005733 & 1.0005775 \\
             & (0.000084) & (0.000066) & (0.000068) & (0.000084) & (0.000062) \\
    $n = 750$ & 1.0005979 & 1.0005762 & 1.0005764 & 1.0005611 & 1.0005523 \\
             & (0.000112) & (0.000096) & (0.000076) & (0.000074) & (0.000092) \\
    \bottomrule
    \end{tabular}
    \vspace{0.6em}
    \caption{Realizations of $\lrM(m, n)$ when $\lrM = 1.0004504$. Values
    shown are mean and standard deviation over 1,000 independent draws.}
    \label{tab:t1}
\end{table}
}

High accuracy in estimating $\lrM$ translates into similarly
high accuracy in estimating $\Lambda = \beta
\lrM^{1-1/\psi}$ once we
introduce the additional preference parameters $\beta$ and $\psi$.
Table~\ref{tab:t1l} illustrates this point.  Each $(m, n)$ cell in the table
gives the mean and standard deviation of 1,000 draws of
\begin{equation}
    \label{eq:drlm}
    \Lambda(m, n) := \beta \lrM(m, n)^{1 - 1/\psi}.
\end{equation}
The draws of $\lrM(m, n)$ are computed as in table~\ref{tab:t1}, while the
remaining parameters are set to $\beta = 0.998$ and $\psi = 1.5$, as in
\cite{bansal2004risks}.   The true value $\Lambda = 0.9981498$ from the caption of
table~\ref{tab:t1l} is calculated as $\Lambda = \beta \lrM^{1 - 1/\psi}$ where
$\lrM$ is obtained from the analytical expression \eqref{eq:lrma}.  The
significance of the actual values is discussed below.

{\small
\begin{table}
    \centering
    \begin{tabular}{l|rrrrr}
    \toprule
    &  $m = 1000$ & $m = 2000$ & $m = 3000$ & $m = 4000$ & $m = 5000$ \\
    \midrule
    $n = 250$ & 0.9982308 & 0.9982297 & 0.9982308 & 0.9982305 & 0.9982306 \\
             & (0.000025) & (0.000016) & (0.000011) & (0.000012) & (0.000010) \\
    $n = 500$ & 0.9982065 & 0.9982026 & 0.9981934 & 0.9981907 & 0.9981921 \\
             & (0.000028) & (0.000022) & (0.000022) & (0.000028) & (0.000021) \\
    $n = 750$ & 0.9981989 & 0.9981916 & 0.9981917 & 0.9981866 & 0.9981837 \\
             & (0.000037) & (0.000032) & (0.000025) & (0.000025) & (0.000031) \\
    \bottomrule
    \end{tabular}
    \vspace{0.6em}
    \caption{Realizations of $\Lambda(m, n)$ when $\Lambda = 0.9981498$}
    \label{tab:t1l}
\end{table}
}

\section{Existence and Uniqueness of Recursive Utilities}

\label{s:ez}

We now state our main theoretical results. For this section we restrict
attention to the case where the state space is compact.  This also covers the
finite state case---and hence any numerical representation of consumption
dynamics.  For the sake of brevity, most of our exposition focuses on the
continuous state case.  Translations to the finite state setting are
straightforward.

Our interest centers on existence, uniqueness and computability of $V_t / C_t$
in \eqref{eq:agg}--\eqref{eq:ce}, although it turns out to be convenient to
solve first for
\begin{equation}
    \label{eq:defg}
    G_t := \left( \frac{V_t}{C_t} \right)^{1-\gamma}.
\end{equation}
We use \eqref{eq:ce},
\eqref{eq:kappa} and \eqref{eq:defg} to rewrite the preference recursion \eqref{eq:agg} as
\begin{equation}
    \label{eq:seq2}
    G_t = \left\{
        1 - \beta + \beta
        \left\{
        \EE_t \, G_{t+1}
        \exp \left[
                (1-\gamma) \kappa(X_t, X_{t+1}, \epsilon_{t+1})
            \right]
        \right\}^{1/\theta}
        \right\}^\theta
\end{equation}
where
\begin{equation*}
    \theta := \frac{1-\gamma}{1-1/\psi}.
\end{equation*}
In terms of a stationary Markov solution $G_t = g(X_t)$, the restriction in
\eqref{eq:seq2} translates to
\begin{equation}
    \label{eq:fpe0}
    g(x) = \left\{
        1 - \beta + \beta
        \left\{
            \int g(y) \int \exp \left[ (1-\gamma) \kappa(x, y, \epsilon)
                \right]
            \nu(\diff \epsilon) q(x, y) \diff y
        \right\}^{1/\theta}
    \right\}^\theta
\end{equation}
for all $x \in \XX$, where $\nu$ is the distribution of
$\epsilon_{t+1}$.\footnote{\label{fn:wc}These transformations utilize homotheticity of
    Epstein--Zin utility. In particular, existence and uniqueness of $V_t/C_t$
    is equivalent to existence and uniqueness of the wealth-consumption ratio,
    which is equal to $(1 - \beta)^{-1} G_t^{1/\theta}$ in this model.}

We solve \eqref{eq:fpe0} for the unknown function $g$ by converting it into a
fixed point problem.  As a first step, we define $Kg$ by
\begin{equation}
    \label{eq:defk}
    K g(x)
    = \int g(y)
    \int \exp[ (1-\gamma) \kappa(x, y, \epsilon) ] \nu(\diff \epsilon)
    q(x, y) \diff y
\end{equation}
The action of the linear operator $K$ on $g$ generalizes the idea of applying the
matrix $K$ in \eqref{eq:kij} to a column vector, as per our discussion of the
finite state case in section~\ref{ss:fsc}.  Now let $\phi$ be the scalar function
\begin{equation}
    \label{eq:defphi}
    \phi(t) = \left\{ 1 - \beta +  \beta \, t^{1/\theta} \right\}^\theta
\end{equation}
on $\RR_+$.\footnote{We adopt the convention $\infty^\alpha =
0$ whenever $\alpha < 0$ so that, in particular, $\phi(0)=0$ when $\theta <
0$.}  Then we define $A$ as the operator mapping $g$ into $Ag$ where
\begin{equation}
    \label{eq:defa}
    A g(x) = \phi( Kg(x)).
\end{equation}
Now \eqref{eq:fpe0} can be written as $g(x) = Ag(x)$. Thus, fixed points
of $A$ coincide with solutions to the recursive utility problem.

\begin{assumption}
    \label{a:c}
    The state space $\XX$ is compact.
\end{assumption}

Assumption~\ref{a:c} includes the case where $\XX$ is finite.  It
holds for any numerical application but is not always satisfied in the
theoretical models we consider.  A treatment
of the unbounded case is provided in section~\ref{s:ub}.

Let $\cC$ be the set of positive functions $g$ on $\XX$ such that $g(X_t)$ has
finite first moment.  With $\Lambda$ as defined in \eqref{eq:kc}, we can state
our main findings:

\begin{theorem}
    [Existence and Uniqueness]
    \label{t:bkl1c}
    If assumptions~\ref{a:i} and \ref{a:c} hold, then $\Lambda$ is well
    defined and the following statements are equivalent:
    \begin{enumerate}
        \item $\Lambda < 1$.
        \item $A$ has a fixed point in $\cC$.
        \item There exists a $g \in \cC$ such that $\{A^n g\}_{n \geq 1}$
            converges to an element of $\cC$.\footnote{Here and below,
                convergence is in terms of absolute mean error (i.e., $L_1$
                deviation).  Thus, the statement that $g_n \to g$ in $\cC$ means that $\int |g_n(x) - g(x)|
            \pi(x) \diff x \to 0$ as $n \to \infty$.}
        \item $A$ has a unique fixed point in $\cC$.
        \item $A$ has a unique fixed point $g^*$ in $\cC$ and $A^n g \to
            g^*$ as $n \to \infty$ for any $g$ in $\cC$.
    \end{enumerate}
\end{theorem}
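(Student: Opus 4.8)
The plan is to establish the claimed well-definedness of $\Lambda$ and then prove the equivalences via the cycle (a) $\Rightarrow$ (e) $\Rightarrow$ (d) $\Rightarrow$ (b) $\Rightarrow$ (a) together with the side implications (e) $\Rightarrow$ (c) $\Rightarrow$ (b). Well-definedness follows from the local spectral radius result (theorem~\ref{t:lsr}): under assumptions~\ref{a:i}--\ref{a:c}, $K$ in \eqref{eq:defk} is a positive, compact operator on $C(\XX)$ whose $\ell$-th iterate has strictly positive kernel, so $r(K) \in (0,\infty)$ and $\|K^n \1\|^{1/n} \to r(K)$; combining this with \eqref{eq:piv} shows the limit defining $\lrM$ exists in $(0,\infty)$ and equals $r(K)^{1/(1-\gamma)}$, whence $\Lambda = \beta\, r(K)^{1/\theta}$ is well defined. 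I would record at the outset the elementary fact that $\Lambda < 1$ is equivalent to $\beta^\theta r(K) < 1$ when $\theta > 0$ and to $\beta^\theta r(K) > 1$ when $\theta < 0$, since the remaining arguments split on the sign of $\theta$.

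The heart of the argument is (a) $\Rightarrow$ (e), to which I would apply the fixed point theorem of \cite{du1990fixed} to the operator $A = \phi \circ K$. First, $\phi$ in \eqref{eq:defphi} is increasing and concave on $\RR_+$ for either sign of $\theta$ (a direct calculation), so $A$ is increasing and concave; moreover $A$ maps $\cC$ into $C(\XX) \cap \cC$, because $Kg$ is continuous and, for $g>0$, strictly positive whenever $g$ has finite $\pi$-mean---here one uses that the kernel of $K$ is bounded and $\pi$ is bounded away from zero on the compact set $\XX$, the latter since $\pi \geq \min q^\ell > 0$. Next, with $e \gg 0$ the Krein--Rutman eigenfunction ($Ke = r(K)e$), I would check that $v := \delta e$ satisfies $Av \geq v$ for $\delta$ small and $w := c e$ satisfies $Aw \leq w$ for $c$ large: these reduce to the scalar map $t \mapsto \phi(t)/t$, which decreases from its $t\to 0$ value ($+\infty$ if $\theta>0$, $\beta^\theta$ if $\theta<0$) to its $t\to\infty$ value ($\beta^\theta$ if $\theta>0$, $0$ if $\theta<0$), and $\Lambda < 1$ is precisely the statement that $1/r(K)$ lies strictly within this range of values. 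Finally, strict concavity of $\phi$ together with the fact that $Kg$ ranges over a compact subset of $(0,\infty)$ as $g$ ranges over an order interval $[\delta e, c e]$ gives the $e$-concavity of $A$, so Du's theorem produces a unique fixed point $g^* \in [\delta e, c e]$ with $A^n g \to g^*$ for every $g$ in that interval. The extension to all of $\cC$ is routine: for arbitrary $g \in \cC$ the function $Ag$ lies in $C(\XX)\cap\cC$ and is therefore squeezed between two positive multiples of $e$, so $A^n g = A^{n-1}(Ag) \to g^*$.

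The implications (e) $\Rightarrow$ (d) $\Rightarrow$ (b) and (e) $\Rightarrow$ (c) are immediate. For (c) $\Rightarrow$ (b): if $A^n g \to \bar g$ in $\cC$, then since $K$ is bounded from $L_1(\pi)$ into $(C(\XX),\|\cdot\|_\infty)$---again using boundedness of the kernel and $\inf\pi > 0$---we get $K(A^n g) \to K\bar g$ uniformly, hence $A^{n+1}g = \phi(K A^n g) \to \phi(K\bar g) = A\bar g$ uniformly and so in $\cC$; comparing with $A^{n+1} g \to \bar g$ gives $A\bar g = \bar g \in \cC$. For (b) $\Rightarrow$ (a): let $g^* = Ag^* \in \cC$. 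Then $Kg^*$ is continuous and strictly positive on $\XX$, hence bounded between constants $0 < m \le M < \infty$; since $1 - \beta + \beta t^{1/\theta} > \beta t^{1/\theta}$ strictly on $(0,\infty)$, raising to the power $\theta$ and using continuity on $[m,M]$ yields a uniform gap, namely $g^* \ge (1+\epsilon)\beta^\theta K g^*$ when $\theta>0$ and $g^* \le (1-\epsilon)\beta^\theta K g^*$ when $\theta<0$, for some $\epsilon > 0$. Iterating with the positive operator $K$, taking $L_1(\pi)$-norms and $n$-th roots, and invoking $\|K^n g^*\|^{1/n} \to r(K)$ gives $\beta^\theta r(K) \le (1+\epsilon)^{-1} < 1$ (resp. $\beta^\theta r(K) \ge (1-\epsilon)^{-1} > 1$), which in each case is exactly $\Lambda < 1$.

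I expect the main obstacle to be the verification step inside (a) $\Rightarrow$ (e): one must simultaneously handle the qualitatively different shapes of $\phi$ for $\theta > 0$ and $\theta < 0$, obtain the Perron eigenfunction of $K$ and its two-sided bounds, and confirm the slope and $e$-concavity conditions uniformly over order intervals. Compactness of $\XX$ is used essentially and repeatedly here---to make $Kg$ continuous and two-sidedly bounded, to bound the kernel of $K$, and to keep $\pi$ bounded away from zero---which is why the unbounded case must be treated separately.
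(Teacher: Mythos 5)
Your overall architecture matches the paper's quite closely: well-definedness of $\Lambda$ via the local spectral radius identity $\Lambda = \beta\, r(K)^{1/\theta}$, the implication (a) $\Rightarrow$ (e) via Du's fixed point theorem applied to $A$ between sub- and supersolutions $\delta e$ and $ce$ built from the Krein--Rutman eigenfunction, and (c) $\Rightarrow$ (b) via continuity of $A$. Your treatment of (b) $\Rightarrow$ (a) is a genuinely different and valid route: the paper pairs the pointwise inequality between $g^*$ and $\beta^\theta K g^*$ with a positive eigenfunction $e^* \gg 0$ of the adjoint $K^*$ (Krein--Rutman for $K^*$ plus Riesz representation), whereas you iterate the uniform-gap inequality with the positive operator $K$, take $L_1(\pi)$-norms and $n$-th roots, and invoke $\|K^n g^*\|^{1/n} \to r(K)$. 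Your version avoids constructing the adjoint eigenfunction at the cost of leaning harder on continuity and compactness to get the uniform $\epsilon$-gap; both are correct.

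There is, however, one genuine error in the central step. You assert that $\phi(t) = \{1-\beta+\beta t^{1/\theta}\}^\theta$ is ``concave on $\RR_+$ for either sign of $\theta$'' and build the application of Du's theorem on $e$-concavity of $A$. This is false: $\phi$ is a weighted power mean of $1$ and $t$ with exponent $1/\theta$, which is concave precisely when $1/\theta \leq 1$ (i.e., $\theta < 0$ or $\theta \geq 1$) and \emph{convex} when $0 < \theta < 1$. The convex regime is reachable for admissible parameters (e.g., $\gamma > 1$ and $\psi < 1$ give $\theta = (1-\gamma)/(1-1/\psi) \in (0,1)$ for many calibrations), and there your $e$-concavity verification fails outright, so the claimed application of Du's theorem does not go through as written. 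The paper handles this by noting that \cite{du1990fixed} (and theorem~2.1.2 of \cite{zhang2013}) covers isotone \emph{convex} operators as well as isotone concave ones, and by running the same sub/supersolution argument in the convex case. Your proof needs the same case split; once that is added, the slope computations you describe for $\phi(t)/t$ (which are correct in both regimes) deliver the required sub- and supersolutions and the argument closes.
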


From a practical perspective, the most useful result in
theorem~\ref{t:bkl1c} is that a unique solution exists precisely when $\Lambda
< 1$.  Another interesting result is the logical equivalence of
(b) and (d), which tells us that at most one solution exists for every parameterization.
A third is that, since (c) is equivalent to
(e), convergence of successive approximations from any starting point in
$\cC$ implies that a unique solution exists and this solution is equal to
the limit of the successive approximations from every initial condition.
Thus, if computing the solution to the model at a given set of parameters is
the primary objective, then convergence of the iterative method itself
justifies the claim that the limit is a solution, and that no other solution exists
in $\cC$.

Condition (a) in theorem~\ref{t:bkl1c}, which translates to
$\beta \, \lrM^{1-1/\psi} < 1$, separates the
contributions of impatience, intratemporal risk adjustment and intertemporal substitutability
of consumption to the valuation of the consumption stream.
More impatience (lower $\beta $) reduces
$\Lambda$, promoting finite valuation. Higher intratemporal risk adjustment $\gamma$ decreases
$\lrM$ but its impact on $\Lambda$ depends on the
elasticity of intertemporal substitution. When preferences are elastic, with $\psi >1$, the income effect of
a higher value of future consumption arising from an increase in
$\lrM $ is stronger than the change in the marginal rate
of substitution between current and future consumption. Since $V_t/C_t$ is
denominated in utils per unit of current consumption, it
diverges to infinity as $\lrM $ increases. The
relative strength of the income and substitution effect switches when $\psi
<1$, and $\Lambda<1$ is violated when $\lrM$ is sufficiently small.

Note that, since the test value $\Lambda$ only depends on consumption through
the risk-adjusted long-run mean consumption growth rate $\lrM$, transitory
details of the consumption process are irrelevant for existence of the
continuation value. This insight is used in the applications below,
particularly when we compare models with and without learning
(sections~\ref{ss:msd} and \ref{ss:learning} respectively).

The condition $\Lambda < 1$ imposes a bound on the average growth rate in the
value of long-dated consumption strips as their maturity increases. In an {\sc iid} growth
setting, $\log\Lambda$ is exactly equal to this (constant) growth rate, which
must be negative for a finite wealth-consumption ratio, and hence the
continuation value, to exist.\footnote{Section~\ref{oapp-s:iid_growth} in the
    online appendix outlines these calculations.}  In other words, the risk
adjusted growth rate of consumption $\lrM$ has to be sufficiently low relative
to the risk-free rate.  In more general settings, $\Lambda$ is tied to the
principal eigenvalue of a valuation operator used in the pricing of
consumption strips.  See
\cite{borovicka_hansen_scheinkman:2016} and \cite{qin2017long} for more
details on these operators.

The sufficiency component of theorem~\ref{t:bkl1c} uses fixed point theory for
monotone operators, applied to the operator $A$ and focusing on specific
existence and uniqueness results originally due to \cite{du1990fixed}.  In the
introduction we discussed these results briefly, where discussion was
limited to the monotone concave case.  In fact the results in \cite{du1990fixed} can
handle both monotone concave and monotone convex operators.  The operator $A$
always falls into one of these categories, being concave when $\theta < 0$ or
$\theta \geq 1$ and convex otherwise---as follows from the convexity and
concavity properties of the function $\phi$ defined in \eqref{eq:defphi},
combined with the linearity of $K$.

%\footnote{The setting is $L_1(\pi)$, the collection of Borel measurable
%functions $g$ from $\XX$ to $\RR$ such that $\| g \| := \int |g| \diff \pi$
%is finite, and $\cC$ is understood as a subset of this space. As usual,
%elements of $L_1(\pi)$ that are equal $\pi$-almost everywhere are identified.
%Convergence in theorem~\ref{t:bkl1c} refers to convergence in the metric
%induced by the $L_1$ norm (i.e., absolute mean deviation).  }

The proof also uses spectral results based around the Krein--Rutman
theorem, which extends Perron--Frobenius theory to function spaces.  In
particular,  both the sufficiency and the necessity arguments exploit the
spectral radius identity \eqref{eq:deftheta} from the introduction,  which
relies on an extension of a local spectral radius result originally due to V.\
Ya.\ Stet'senko.  See theorem~\ref{t:lsr} in the appendix for the original
result, theorem~\ref{t:lsr2} for the extension and proposition~\ref{p:lsr} for
the proof of \eqref{eq:deftheta}.

\section{Applications}

\label{s:a}

Next we demonstrate how theorem~\ref{t:bkl1c} can
be used to obtain existence and uniqueness of solutions in applied settings.
We begin with relatively simple applications and then progress to more
realistic specifications of consumption streams.

\subsection{Consumption With a Deterministic Time Trend}

\label{ss:aj}

Consider a model where consumption obeys the geometric trend specification
\begin{equation}
    \label{eq:ac}
    C_t = \tau^t X_t,
\end{equation}
where $\tau$ is a positive scalar and $\{X_t\}$ is positive, compactly
supported and {\sc iid}, as in, say, \cite{alvarez2005using}.  Then
the risk-adjusted long-run mean consumption growth rate is
\begin{equation}
    \label{al:a1}
    \lrM
    = \lim_{n \to \infty }
            \left\{
                \rR
                \left(
                    \frac{\tau^n X_n}{X_0}
                \right)
                \right\}^{\frac{1}{n}}
    = \tau
        \lim_{n \to \infty }
            \left\{
             \EE \left( \frac{X_n}{X_0} \right)^{1-\gamma}
         \right\}^{1/(n (1-\gamma))}
    = \tau,
\end{equation}
where the last equation follows from the fact that
the expectation is constant (since the state process is {\sc iid}).
Appealing to theorem~\ref{t:bkl1c}, we find the exact necessary and sufficient
condition for a unique solution to exist is
\begin{equation}
    \label{eq:lddt}
    \Lambda = \beta \, \tau^{1-\frac{1}{\psi }} < 1.
\end{equation}

Since the consumption process \eqref{eq:ac} is trend-stationary, its
stochastic component is immaterial for the long-run distribution of
consumption growth. This is why the risk aversion parameter $\gamma$ does not
enter into the expression for $\Lambda$ obtained in
\eqref{eq:lddt}.\footnote{The absence of the risk aversion parameter in
    $\Lambda$ contrasts with an earlier condition provided by
    \cite{alvarez2005using}, which does depend on $\gamma$.  However, our
    condition $\Lambda < 1$ is both necessary and sufficient, from which we
can conclude that the risk aversion parameter is indeed irrelevant to
existence and uniqueness of a solution.  See section~\ref{ss:ca} for more
discussion.}

%\subsection{Linear Gaussian State Dynamics}

%\label{ss:lgd}

%While risk aversion does not appear in the test value value $\Lambda$ in
%\eqref{eq:lddt}, it does in other settings [to be continued].

\subsection{Markov Switching Dynamics}

\label{ss:msd}

To illustrate the finite state case, consider the two state Markov switching
specification for consumption growth of \cite{johannes2016learning}, with
\begin{equation}
    \label{eq:kappaj}
    \ln (C_{t+1}/ C_t) = \mu(X_{t+1}) + \sigma(X_{t+1}) \, \epsilon_{t+1}
\end{equation}
where $\XX = \{1, 2\}$ and baseline parameter values as in table~\ref{tab:sometab}.
The process $\{\epsilon_t\}$ is {\sc iid} and standard normal.
We set aside for now the learning component of
\cite{johannes2016learning} and assume that $X_t$ is fully observable.
(The learning problem is treated in section~\ref{ss:learning}.)
The conditions of assumption~\ref{a:i} are satisfied with $\ell=1$.

\begin{table}
  \centering
  \begin{tabular}{l|ccccccc}
      \toprule
      parameter & $\mu(1)$ & $\mu(2)$ & $\sigma(1)$ & $\sigma(2)$ & $q(1, 1)$ & $q(2, 2)$
               & $\gamma$
      \\
      \midrule
      value & 0.007 & 0.0013 & 0.0015 & 0.0063 & 0.93 & 0.83 & 10.0 \\
      \bottomrule
  \end{tabular}
      \vspace{1em}
  \caption{Parameter values for the Markov switching model} \label{tab:sometab}
\end{table}

Our first step is to compute $\lrM$ for this model using
the spectral radius method dicussed in section~\ref{ss:fsc}.
We take $K$ from \eqref{eq:kij}, which in the
present setting reduces to the $2 \times 2$ matrix
\begin{equation}
    \label{eq:kij2}
    K(i, j) = K_{ij} =
    \exp
        \left[
            (1-\gamma) \mu(j) + \frac{1}{2} (1-\gamma)^2 \sigma(j)^2
        \right] q(i, j) .
\end{equation}
After inserting the parameters from table~\ref{tab:sometab},
we calculate its eigenvalues, evaluate $r(K)$ as the maximal eigenvalue in
modulus and then apply $\lrM = r(K)^{1/(1-\gamma)}$ from \eqref{eq:piv}.  The
result is $\lrM = 1.005$.  Figure~\ref{f:gms} shows how $\lrM$ varies as
$\gamma$ and $\sigma(1)$ shift around their baseline values.  Both decrease
the risk-adjusted long-run mean consumption growth rate as they rise, with the negative
effect of $\gamma$ intensifying as volatility in the good state grows.

\begin{figure}
    \centering
    \hspace{-1em}
    \scalebox{0.55}{\includegraphics[clip=true, trim=0mm 0mm 0mm 0mm]{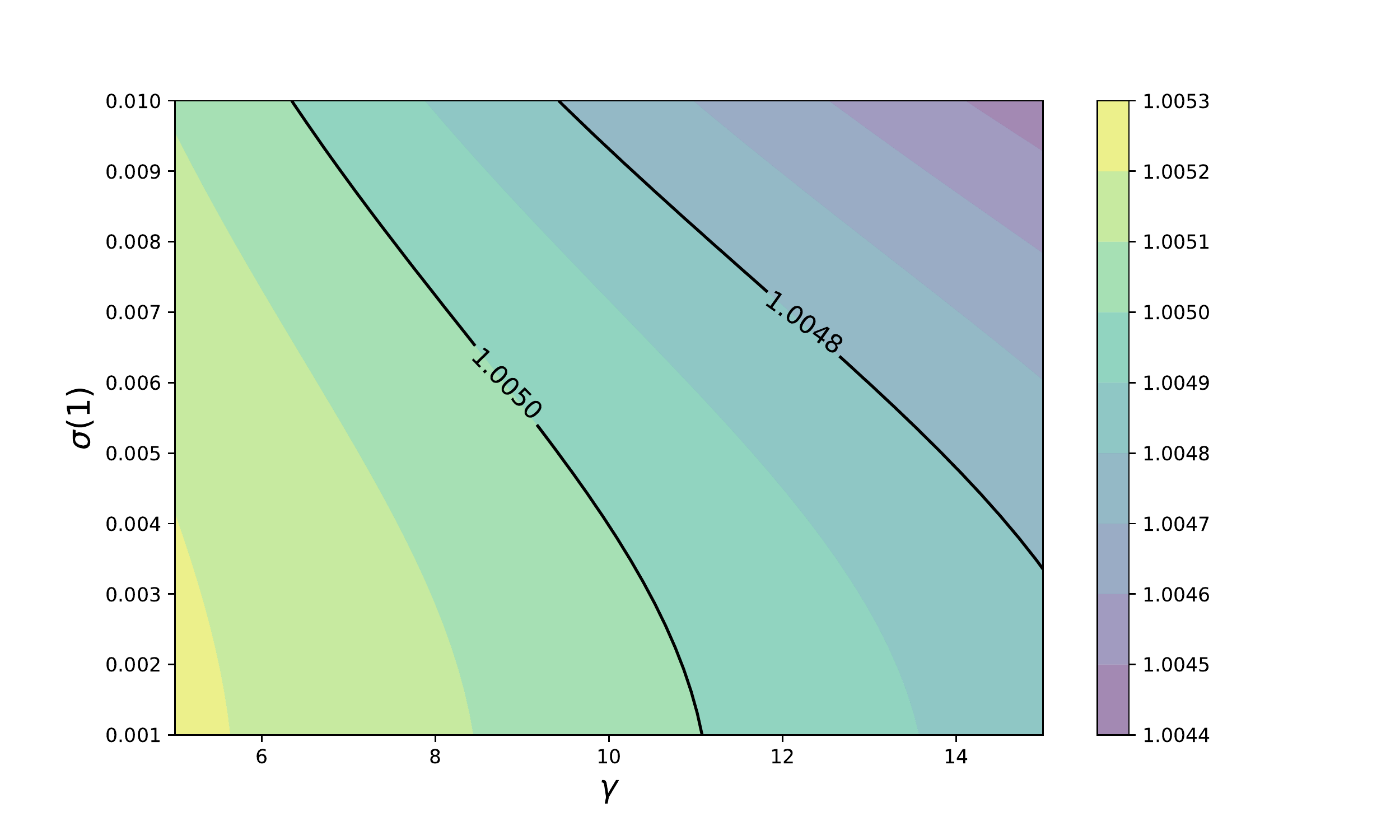}}
    \caption{\label{f:gms} $\lrM$ as a function of $\gamma$ and $\sigma(1)$ in the Markov switching model}
\end{figure}

With $\lrM$ in hand, we can calculate $\Lambda = \beta \lrM^{1-1/\psi}$.
With preference parameters $\psi=1.5$ and
$\beta=0.998$ from \cite{johannes2016learning}, we obtain $\Lambda =
0.99567$.  Hence condition (a) in theorem~\ref{t:bkl1c} is satisfied and
  statements (b)--(e) hold true.  In particular, a unique and globally
  attracting solution to the recursive utility problem exists.
  Conversely, if we (arbitrarily) pair the values $\psi=1.97$ and $\beta=0.999$ from
  \cite{schorfheide2018identifying} with this consumption process, then $\Lambda = 1.00147$ and, by part
  (b) of theorem~\ref{t:bkl1c}, no solution exists.

These results turn out to be significant even for the model that does include
learning, as the next section details.

\subsection{Markov Switching with Learning}

\label{ss:learning}

Consider again the Markov switching dynamics from the previous section
but suppose as in \cite{johannes2016learning} that the investor does not observe the state process $\{X_t\}$. Instead,
she forms a Bayesian posterior belief about $X_t$ based on her time-$0$ prior
and observations of consumption growth up to time $t$. The investor's filtering
problem can be represented recursively, with $\overline{X}_t \in [0,1]$
denoting the probability that $X_t=1$ under the investor's information set.
This new state variable can be shown to follow the law of motion $\overline{X}_{t+1} =
h(\overline{X}_t,Z_{t+1})$, where $Z_{t+1} = \ln(C_{t+1}/C_t)$ is observed
consumption growth and
\begin{equation}
    h(x,z) = \frac{\left[xq(1,1)+(1-x)q(2,1)\right]\varphi_1(z)}{\left[xq(1,1)+(1-x)q(2,1)\right](\varphi_1(z)-\varphi_2(z))+\varphi_2(z)}.
    \label{eq:defh}
\end{equation}
Here $\varphi_j$ is the density of $Z_{t+1}$ conditional on $X_t=j$.
Details of the derivation of this standard filtering problem are
provided in the online appendix.
The matrix $K$ in \eqref{eq:kij2}, which served as the valuation operator in
the Markov switching problem without learning, is replaced by the operator
\begin{equation}
    \overline{K}g(x)
    =  \int g(y) \, [\xi(1)y+\xi(2)(1-y)] \, \overline{q}(x,y) \diff y
    \label{eq:Kbar}
\end{equation}
where
\begin{equation*}
    \xi(i) = \exp\left( (1-\gamma)\mu(i) + \frac12 (1-\gamma)^2 \sigma^2(i)\right)
\end{equation*}
for $i=1, 2$ and $\overline{q}$ represents the transition density arising from the
subjective belief of the Bayesian learner.  In other words, $\overline{q}(x,
\cdot)$ is the distribution of $h(x, Z_{t+1})$ when $h$ is as defined in
\eqref{eq:defh} and $Z_{t+1} = \ln(C_{t+1}/C_t)$.\footnote{While the normal
    densities $\varphi_j(z)$ specified in
    \cite{johannes2016learning} do not directly imply a transition density
    $\overline{q}(x,y)$ that satisfies assumption~\ref{a:i}, we
    construct in the online appendix an arbitrarily small perturbation of
these densities such that assumption~\ref{a:i} does hold.}

We now evaluate the risk-adjusted mean consumption growth rate \eqref{eq:drc}
under the model with learning, which we denote $\olrM $, in order
to determine the test value $\Lambda$. Denote $\overline{\EE}_t$ the
expectations operator under the investor's time-$t$ information set and
$\overline{\EE}$ its unconditional version. Hence
\begin{equation*}
    \olrM = \lim_{n\to\infty}
        \left\{ \overline{\EE}
            \left[ \left( C_n/C_0 \right)^{(1-\gamma)} \right]^\frac{1}{1-\gamma}\right\}^{1/n}.
\end{equation*}
The Law of Iterated Expectations implies that for any $n\in\NN$
\begin{equation*}
    \min_{X_0}\EE_0 \left[ \left( C_n/C_0 \right)^{(1-\gamma)} \right] \leq \overline{\EE}
            \left[ \left( C_n/C_0 \right)^{(1-\gamma)} \right] \leq \max_{X_0}\EE_0 \left[ \left( C_n/C_0 \right)^{(1-\gamma)}\right].
\end{equation*}
Raising this pair of inequalities to power $1/(n(1-\gamma))$ and taking the limit as $n\to\infty$, we obtain
\begin{equation*}
    \lim_{n\to\infty}
    \left\{ \min_{X_0} \EE_0 \left[ \left( C_n/C_0 \right)^{(1-\gamma)} \right]^\frac{1}{1-\gamma}\right\}^{1/n} \leq \olrM \leq
    \left\{ \max_{X_0} \EE_0 \left[ \left( C_n/C_0 \right)^{(1-\gamma)} \right]^\frac{1}{1-\gamma}\right\}^{1/n}.
\end{equation*}
However, the limits of the conditional expectations on the left- and right-hand side of this relationship are independent of
the initial state $X_0$ due to ergodicity of the two-state Markov chain.
Consequently, $\olrM  = \lrM $, and the parameters for which
unique continuation values exist coincide in the full and partial information
models.

This result holds despite the fact that the two models have fundamentally
different state spaces and transition densities. The reason for the result is
the transitory impact of state uncertainty on the conditional distribution of future consumption growth. Our test value $\Lambda$
depends only on the long-run distribution of consumption growth under
the investor's belief. Transitory deviations in the investor's belief about future
consumption growth relative to the data generating process, driven by learning
about the unobserved state, are inconsequential in the long run and therefore
irrelevant for existence of a finite continuation value, despite the fact that
they affect the conditional expectations $\overline{\EE}_t$ in every step of
the continuation value recursion \eqref{eq:fpe0}.

As a consequence, in this class of learning models, existence and
uniqueness of the continuation value can be more easily evaluated by
computing $\Lambda$ in the full information model as in section~\ref{ss:msd}, which has a simpler state
space and transition density.

\subsection{Long-Run Risk}

\label{ss:lrr}

Next consider the consumption specification adopted in \cite{schorfheide2018identifying}, where
\begin{align}
    \ln (C_{t+1} /  C_t) &
    = \mu_c + z_t + \sigma_{c, t} \, \eta_{c, t+1},
    \label{al:ssyc1}
    \\
    z_{t+1} = \rho \, z_t
        & + \sqrt{1 - \rho^2} \, \sigma_{z, t} \, \eta_{z, t+1},
    \label{al:ssyc2}
    \\
    \sigma_{i, t} = \phi_i \, \bar{\sigma} \exp(h_{i, t})
    & \quad \text{with} \quad
    h_{i, t+1} = \rho_{h_i} h_i + \sigma_{h_i} \eta_{h_i, t+1},
    \quad i \in \{c, z\}.
    \label{al:ssyc3}
\end{align}

The innovations $\{\eta_{i, t}\}$ and $\{\eta_{h_i, t}\}$ are {\sc iid} and
standard normal for $i \in \{c, z\}$.  The state vector associated with these
consumption dynamics can be represented as $X_t = (h_{c, t}, h_{z,
t}, z_t)$.  The consumption process parameters used by
\cite{schorfheide2018identifying} are shown in table~\ref{tab:ssy_params}, while
the preference parameters are $\gamma=8.89$, $\beta=0.999$ and $\psi=1.97$.

\begin{table}
  \centering
  \begin{tabular}{ccccccccc}
      $\mu_c$ & $\rho$ & $\phi_z$ & $\bar \sigma$ & $\phi_c$ & $\rho_{h_z}$ & $\sigma_{h_z}^2$ & $\rho_{h_c}$ & $\sigma_{h_c}^2$  \\
      \hline
      0.0016 & 0.987 & 0.215 & 0.0035 & 1.0 & 0.992 & 0.0039 & 0.991 & 0.0096
      \vspace{0.6em}
  \end{tabular}
  \caption{Parameterization of the consumption process with long run risk.
  Parameter values are posterior median estimates from
  \cite{schorfheide2018identifying}, Table VII, estimation with
  consumption and financial markets data. \label{tab:ssy_params}}
\end{table}

All of the conditions of theorem~\ref{t:bkl1c} hold apart from compactness of
the state space, which fails because $X_t = (h_{c, t}, h_{z, t}, z_t)$ can take values in
all of $\RR^3$.    However, the fact that the correlation coefficients $\rho$,
$\rho_c$ and $\rho_z$ are less than one in absolute value means that the state
space can be compactified by truncating the standard normal shocks $\eta_z, \eta_{h_c}$ and $\eta_{h_z}$.  For now this is the path that we
pursue.    In this compactified setting, theorem~\ref{t:bkl1c} applies and a unique and
globally stable solution exists if and only if $\Lambda < 1$.

To evaluate $\Lambda$ at a specific level of truncation, we begin with the Monte Carlo method introduced in
section~\ref{ss:mcm}.  This necessitates the use of a random number generator,
so the compactification is implemented automatically by truncating the
innovations in absolute value to the largest double precision floating point
number, without the need to impose further restrictions on the state
space.\footnote{This evaluates to $2^{53}$, which is $\approx 10^{16}$.  A far
    smaller truncation point would be adequate but there are no obvious
    benefits to implementing such a modification, since any truncation of the
    innovations generates a compact state space, regardless of how large.
    Theorem~\ref{t:bkl1c} then applies.  Further discussion of the effects of
truncation is provided below.} Table~\ref{tab:cmc_mn} shows summary statistics
for 1,000 draws of $\Lambda(m, n)$ as $n$, the time series length for
consumption, is varied
across the rows, while $m$ is held fixed at 1,000.\footnote{See equations \eqref{eq:drcmn} and
\eqref{eq:drlm} for the definition of $\Lambda(m, n)$. Consumption paths are
generated according to \eqref{al:ssyc1}--\eqref{al:ssyc3}, with independent
normal variates supplied by a standard random number generator.} In all cases, the mean
estimate lies close to 0.9994 and the standard deviation in the sample is
small.  Other values of $m$ and $n$ produce similar numbers.
Recalling our earlier
finding that $\Lambda(m, n)$ closely approximates $\Lambda$ for moderate
choices of $m$ and $n$ (see, e.g., table~\ref{tab:t1l}), this leads us to infer with
a high degree of certainty that the valuation problem for this compactified model
 has a unique and globally stable solution.

\begin{table}
    \centering
    {\small
    \begin{tabular}{cccccccc}
        $n$ & mean & std & min & 25\% & 50\% & 75\% & max
        \\
        \hline
        \hline
        500 & 0.999408 & 0.000111 & 0.998564 & 0.999385 & 0.999441 & 0.999473 & 0.999529
        \\
        1000 & 0.999384 & 0.000093 & 0.998551 & 0.999351 & 0.999409 & 0.999446 & 0.999517
        \\
        1500 & 0.999401 & 0.000061 & 0.999024 & 0.999386 & 0.999421 & 0.999441 & 0.999466
        \\
        \hline
      \vspace{0.1em}
    \end{tabular}
    }
    \caption{Descriptive statistics for 1,000 draws of $\Lambda(m, n)$ when $m=5000$}
    \label{tab:cmc_mn}
\end{table}

While the values of $\Lambda(m, n)$ are all close to 1 at the given parameterization of
\cite{schorfheide2018identifying},  the outcome $\Lambda < 1$ survives reasonably large deviations in the parameters, which
further supports the conclusion that the valuation problem is globally stable.
Figure~\ref{f:ssy} illustrates this robustness by presenting the test value
$\Lambda$ at neighboring parameterizations.  Values are computed by Monte
Carlo, with $m=n=1000$.  The neighboring parameterizations are obtained by varying two
parameters while holding others constant.  Figure~\ref{f:psi_mu_ssy_spec_rad}
shows values obtained for $\Lambda$, represented by a contour plot, as $\psi$
and $\mu_c$ are varied.
Figure~\ref{f:beta_psi_ssy_spec_rad} shows the same as
$\beta$ and $\psi$ are varied.
Parameterizations to the southwest of the 1.0 contour line are
globally stable, while those to the northeast yield $\Lambda > 1$ and hence,
by part (b) of theorem~\ref{t:bkl1c}, no solution exists.  For both subfigures, the
\cite{schorfheide2018identifying} parameterization falls well within the
interior of the stable set.

\begin{figure}[htb]
    \begin{subfigure}[a]{0.9\textwidth}
        \centering
        %\vspace{-3em}
        \scalebox{0.5}{\includegraphics[clip=true, trim=0mm 0mm 0mm 0mm]{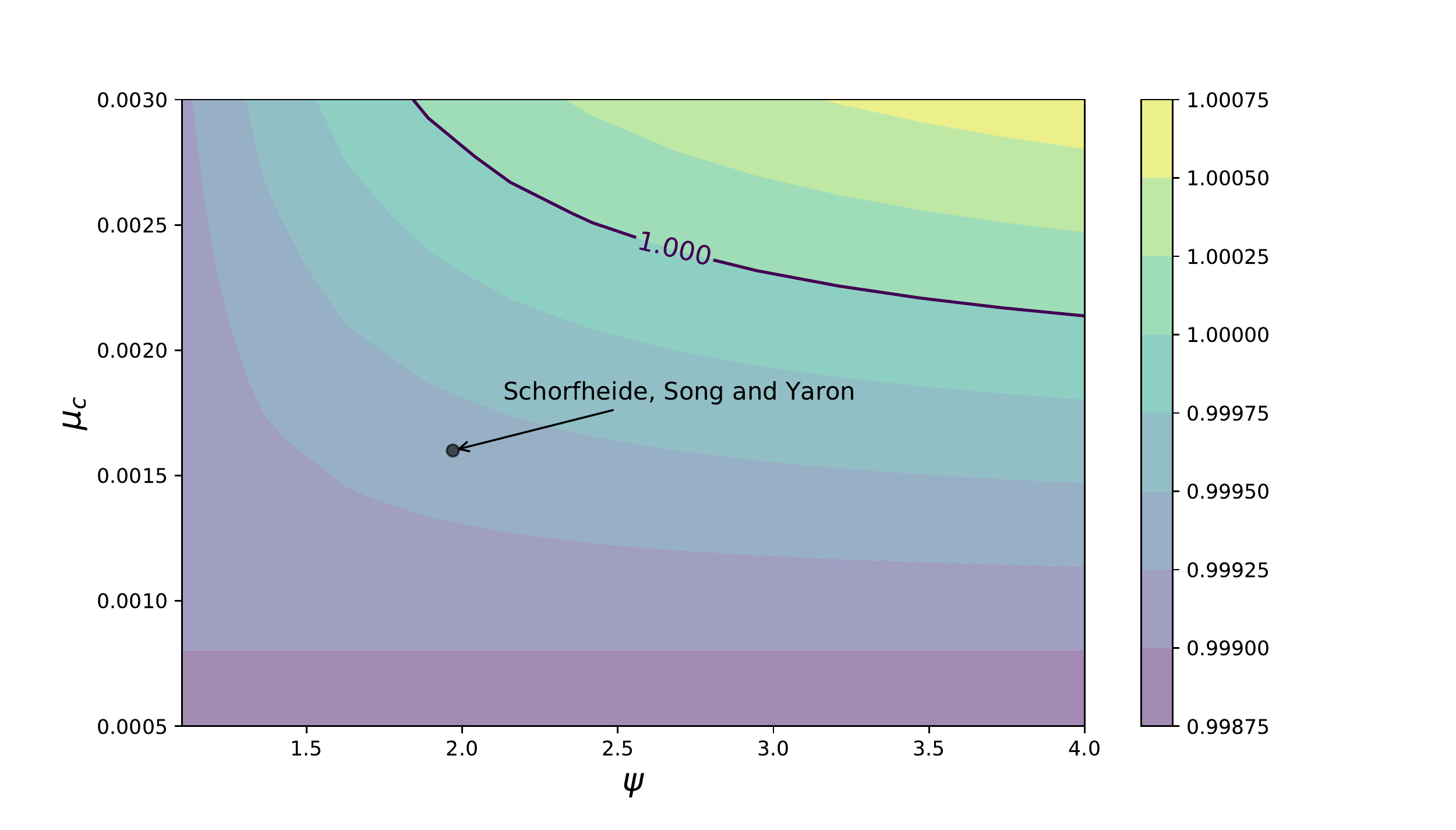}}
        \caption{\label{f:psi_mu_ssy_spec_rad} Changes in test value, $\psi$ vs $\mu_c$}
    \end{subfigure}

    \vspace{0.5em}
    \begin{subfigure}[c]{0.9\textwidth}
        \centering
        \scalebox{0.5}{\includegraphics[clip=true, trim=0mm 0mm 0mm 0mm]{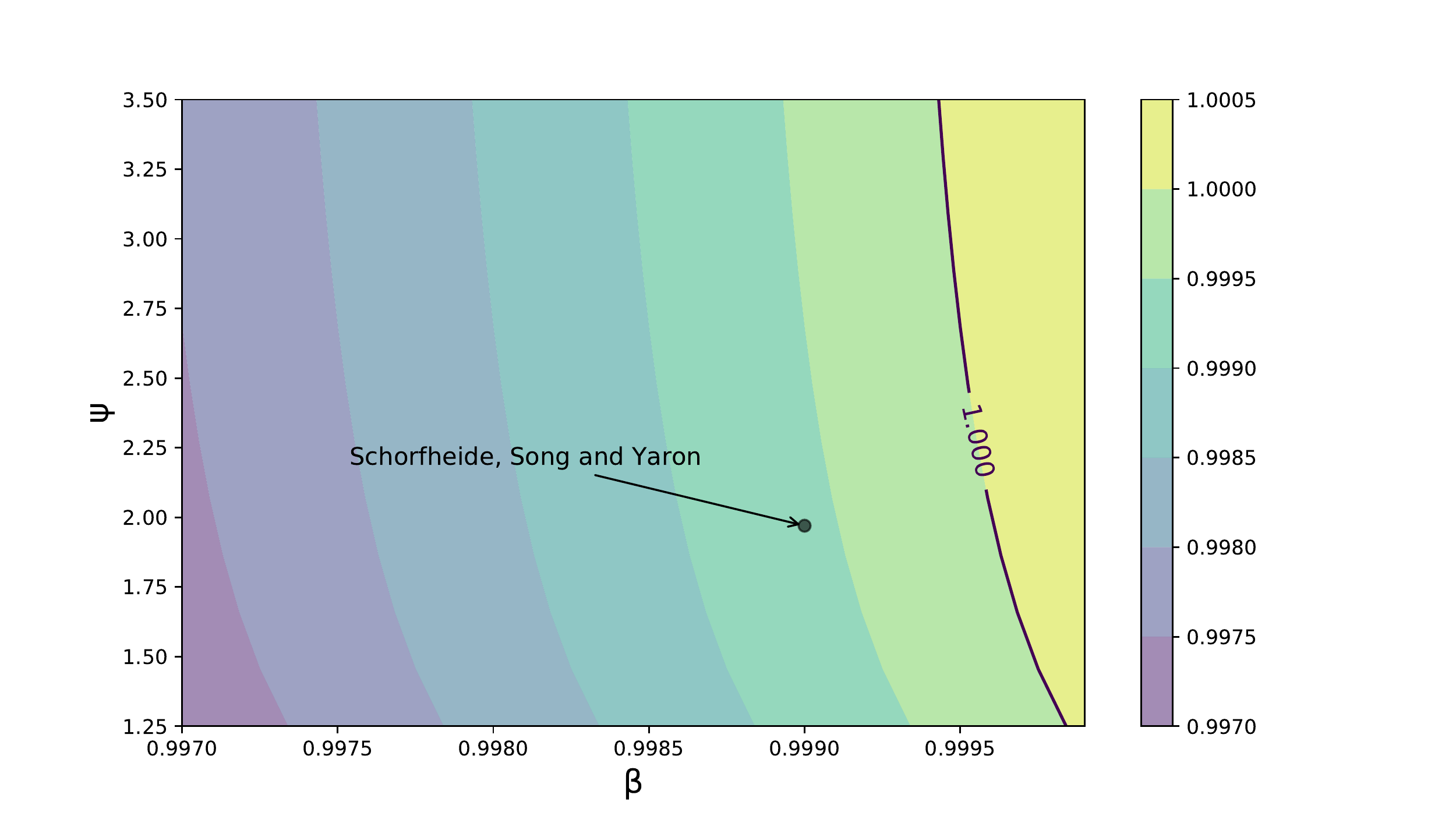}}
        \vspace{0.1em}
        \caption{\label{f:beta_psi_ssy_spec_rad} Changes in test value, $\beta$ vs $\psi$}
    \end{subfigure}
    \vspace{0.5em}
    \caption{\label{f:ssy} A contour map of $\Lambda$ values near the SSY parameterization}
\end{figure}

None of the results presented above speak directly to the existence and
uniqueness in the original theoretical model, where innovations are not
truncated and the state space is unbounded.   There are two independent
questions that need to be considered here.  First, what is the value of
$\Lambda$ at the \cite{schorfheide2018identifying} parameterization when
shocks are not truncated?  Second, what does the value of $\Lambda$ actually
imply for the unbounded state space, given that theorem~\ref{t:bkl1c} is not applicable?

Neither question is fully answered here, although we can make reasonable
conjectures.  Regarding the first question, figure~\ref{f:tmc} studies the
impact of truncation of the innovations in this long-run risk model.  The
truncation value on the horizontal axis is the number of standard deviations
at which the normal shocks in \eqref{al:ssyc2} and \eqref{al:ssyc3} are
truncated.  The values on the vertical axis are the corresponding values of
$\Lambda(m, n)$, estimated using the Monte Carlo method with $n=m=1000$.  The
dashed line shows the value of $\Lambda(m, n)$ when no explicit truncation is
imposed (although we are working with 64bit floating point numbers, so
truncation at around $10^{16}$ standard deviations is implicit).  The first
point made clear by the figure is that the impact of truncation becomes
negligible once the truncation level for the standard normal innovations
reaches 3.  This is consistent with our knowledge of the standard normal
distribution (i.e., rapidly decreasing tails and 99.7\% of probability mass
within three standard deviations of the mean).  The second point is that
relaxing truncation shifts $\Lambda$ down rather than up.\footnote{This is
    consistent with expectations because weakening truncation expands the
    tails of the shocks, increasing volatility. Given the relatively large
    value of $\gamma$ adopted in \cite{schorfheide2018identifying}, higher
    volatility tends to decrease the
    risk-adjusted mean consumption growth rate (see \eqref{eq:lrma} for
    intuition) and hence $\Lambda$.  We tested
        different combinations of $m$ and $n$ in these calculations but none
    altered our conclusions.} Thus, there is strong evidence that
    $\Lambda < 1$ holds for the original unbounded model.

\begin{figure}
    \centering
    \hspace{-1em}
    \scalebox{0.66}{\includegraphics[clip=true, trim=0mm 0mm 0mm 0mm]{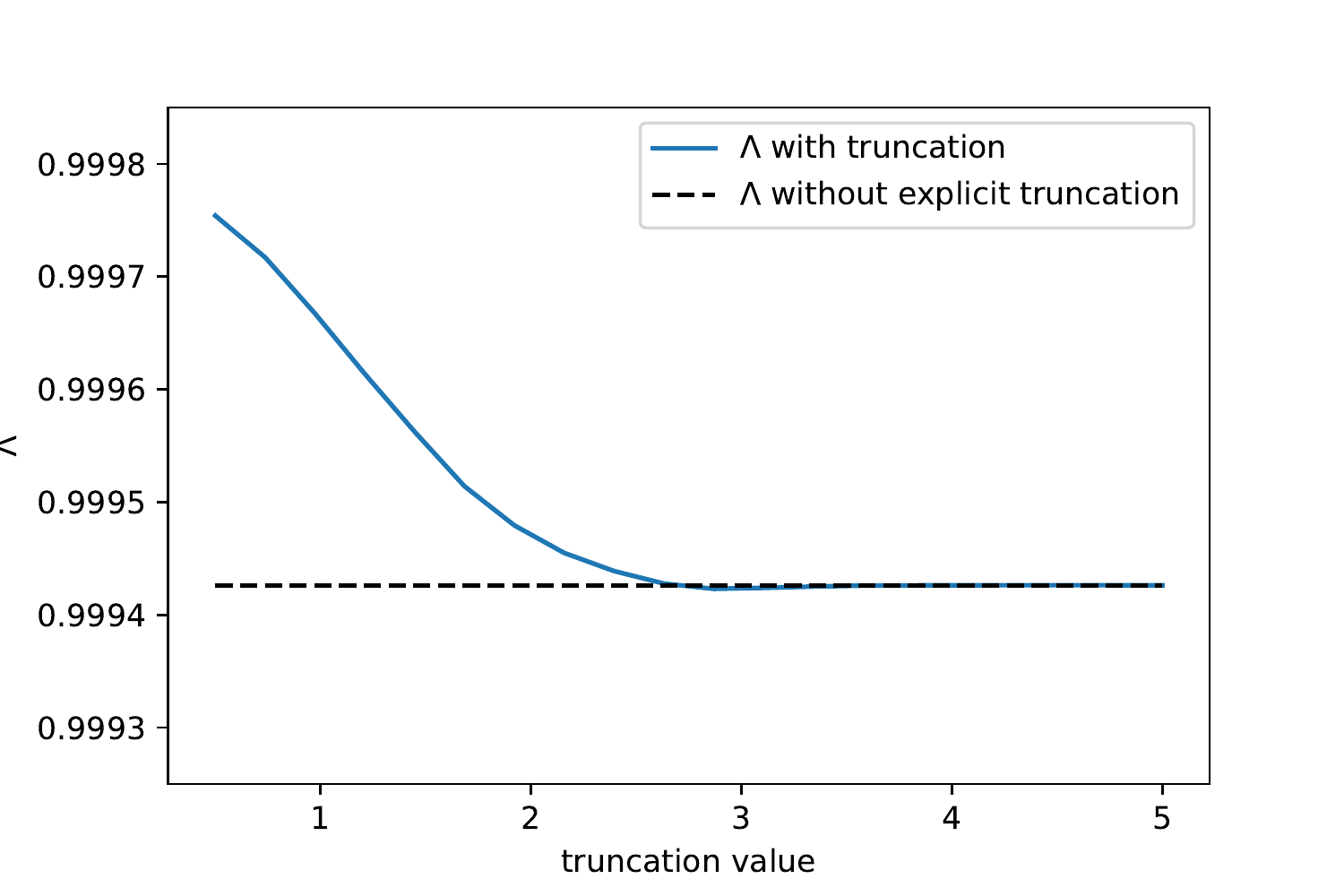}}
    \caption{\label{f:tmc} The impact of truncation on $\Lambda$ in the long run risk model}
\end{figure}

The second question asks what does the condition $\Lambda < 1$ imply when the
state space is not compact and theorem~\ref{t:bkl1c} cannot be applied. Some
guidance in the unbounded setting is provided by theorem~\ref{t:bkgts} from
section~\ref{s:ub} below, which shows that necessity and sufficiency of
$\Lambda < 1$ for existence extends to the unbounded case.   However,
theorem~\ref{t:bkgts} relies on a side condition on $K$ that is difficult to
evaluate in a complex nonlinear model such as the specification of
\cite{schorfheide2018identifying}. At the same time, this condition is
standard in the literature that uses eigendecompositions to analyze long-run
valuations.  Section~\ref{s:ub} gives more details.

%\footnote{studied in section~\ref{ss:lrr}, innovations are normally
    %distributed. Truncation at any value produces a compact state space, and
    %the probability of observing a standard normal draw more than, say, 25
    %standard deviations from the mean is $6 \times 10^{-138}$, which is
    %essentially indistinguishable from zero.  More generally, the probability
    %of a draw exceeding $M$ standard deviations from the mean is bounded above
    %by $\exp(-M^2/2)$.  Any finite choice of $M$ produces a compact state
%space.}

\subsection{A Discretized Long-Run Risk Model}

\label{ss:lrrd}

A frequently used solution method is to discretize the long-run risk model to
a finite state space, since such a model is relatively easy to manipulate and
allows for straightforward calculation of all endogenous quantities.
We also use the model to compute the equilibrium wealth-consumption
ratio at a range of parameter values in order to shed light on the asset
pricing consequences of moving in the parameter space toward the region where
a solution ceases to exist.

Discretization is achieved by applying the Rouwenhorst method to the laws of
motion for the state variables (equations \eqref{al:ssyc2} and
\eqref{al:ssyc3}) in each of the three dimensions. The state space is then
finite and the Markov chain for the state process is aperiodic and
irreducible, so assumptions~\ref{a:i} and \ref{a:c} hold. Hence, by
theorem~\ref{t:bkl1c}, a unique and globally stable solution exists for the
discretized model when $\Lambda < 1$.  We found this to be true at the baseline
parameterization for all levels of discretization we considered.\footnote{To handle the stochastic volatility component,
    we proceed as follows:  First we apply the Rouwenhorst method
    independently to $\{h_{c, t}\}$ and $\{h_{z, t}\}$, both of which are
    linear AR(1). Then we translate the results into discretized dynamics for
    $\{\sigma_{c, t}\}$ and $\{\sigma_{z, t}\}$, with $H$ and $I$ possible
    values respectively. Last, for each of the $I$
possible values of $\sigma_z$, we again use the Rouwenhorst method to
discretize $\{z_t\}$ across $J$ possible states.  With $H=I=J=3$, we obtain
$\Lambda = 0.99944$.  For the finer discretizations the value of $\Lambda$ tends
to decline, which is consistent with the intuition we obtain from
figure~\ref{f:tmc}.}

The contour diagrams in figure~\ref{f:ssy} suggest that, if we shift the
parameters of the model sufficiently far from the baseline setting of
\cite{schorfheide2018identifying}, we can find parameterizations where
$\Lambda \geq 1$.  The same is true in the discretized setting.  How should we interpret such outcomes?  One way to understand failure
of existence of a solution to the valuation problem in this model is to study
what happens to equilibrium quantities as we move close to the boundary
between stability and instability.  Consider, for example,
table~\ref{tab:mwc},  which shows how the mean wealth-consumption ratio varies
with the parameters $\psi$ and $\mu_c$, while other parameters are held fixed
at the default values in table~\ref{tab:ssy_params}.
The equilibrium wealth-consumption ratio is equal in this model to $(1 - \beta)^{-1}
g^*(X_t)^{1/\theta}$, where $g^*$ is the fixed point of the operator $A$
discussed in theorem~\ref{t:bkl1c} (see footnote~\ref{fn:wc}).  The function
$g^*$ is computed by iterating with $A$ on the arbitrary initial choice $g
\equiv 1$, a process for which theorem~\ref{t:bkl1c} guarantees convergence
whenever $\Lambda < 1$.\footnote{The calculation was carried out for the discretized representation of the state process,
    which was in turn computed via multiple iterations of the Rouwenhorst
    technique, in the manner discussed above.  In each case we set $H=I=J=3$,
so the state space had 27 elements.  Iteration continued until the maximal
absolute deviation between successive iterates fell below 1e-6.}

Table~\ref{tab:mwc} covers some of the parameter values explored in
figure~\ref{f:psi_mu_ssy_spec_rad}, which showed that no solution
exists when both $\mu_c$ and $\psi$ are sufficiently large.  For those
pairs---or, more precisely, for any pair where $\Lambda \geq 1$---we printed
the string NA to indicate that no solution exists.  For other pairs we printed
the mean wealth-consumption ratio.  Table~\ref{tab:mwc} shows that, as the
parameters approach the boundary between stability and instability, the mean
wealth-consumption ratio explodes.  This is the nature of nonexistence of
solutions: parameters are such that wealth and forward looking valuations are
infinite.

\begin{table}
    \centering
    {\small
    \begin{tabular}{ccccccc}
     &   $\psi$ = 1.1 & $\psi$ = 1.68 & $\psi$ = 2.26 & $\psi$ = 2.84 & $\psi$ = 3.42 & $\psi$ = 4.0 \\
    \hline \hline
    $\mu_c$ = 0.0030 & 1290.3  & 46604.4  & NA  & NA  & NA  & NA  \\
    \hline
    $\mu_c$ = 0.0025 & 1219.3  & 4610.7  & 4.6e+25  & NA  & NA  & NA  \\
    \hline
    $\mu_c$ = 0.0020 & 1155.7  & 2423.3  & 4986.7  & 12840.7  & 3596674.7  & 1.7e+31  \\
    \hline
    $\mu_c$ = 0.0015 & 1098.4  & 1642.7  & 2142.0  & 2600.6  & 3022.8  & 3412.6  \\
    \hline
    $\mu_c$ = 0.0010 & 1046.5  & 1242.0  & 1362.4  & 1443.9  & 1502.7  & 1547.2  \\
    \hline
    $\mu_c$ = 0.0005 & 999.5  & 998.3  & 998.3  & 998.3  & 998.5  & 998.6   \\
    \hline
    \vspace{0.2em}
    \end{tabular}
    }
    \caption{Mean of the wealth-consumption ratio under the stationary
    distribution of the state process, across different $\mu_c, \psi$
    combinations. Monthly consumption in the denominator. As $\psi\to 1$, the
    wealth-consumption ratio converges to the constant $(1-\beta)^{-1}=1000$.}
    \label{tab:mwc}
\end{table}

\subsection{Production Economies}

\label{ss:production}

% Campanale, Castro, Clementi
% Petrosky-Nadeau, Zhang, Kuehn

We have so far assumed that consumption follows an exogenously specified process,
as given in (\ref{eq:kappa}). A large class of models with Epstein--Zin
preferences studied in the macro-finance literature (\cite{tallarini2000risk},
\cite{kaltenbrunner_lochstoer:2010}, \cite{croce:2014}, and many others)
involves endogenously determined consumption processes of the form%
\begin{eqnarray*}
C_{t} &=&c\left( X_{t}\right) A_{t} \\
\log A_{t+1}-\log A_{t} &=&\kappa _{A}\left( X_{t},X_{t+1},\varepsilon
_{t+1}\right) .
\end{eqnarray*}%
For example, suppose we are interested in a production economy where $A_{t}$ is an
exogenous technology process with stochastic growth and are solving for an
endogenous function $c\left( X_{t}\right) $ that represents stationary
deviations from the stochastic trend. Let the assumptions from Section~\ref%
{s:cp} imposed on $X_{t}$ and $\varepsilon _{t}$ hold. Further assume that $%
c\left( x\right) $ is bounded above and away from zero, and denote $%
\overline{\zeta }=\max_{x_{0},x_{n}}\left( c\left( x_{n}\right) /c\left(
x_{0}\right) \right) ^{1-\gamma }$ and $\underline{\zeta }=\min_{x_{0},x_{n}} \left(
c\left( x_{n}\right) /c\left( x_{0}\right) \right) ^{1-\gamma }$. In this
case%
\begin{equation*}
\underline{\zeta }E\left[ \left( \frac{A_{n}}{A_{0}}\right) ^{1-\gamma }%
\right] \leq E\left[ \left( \frac{C_{n}}{C_{0}}\right) ^{1-\gamma }\right]
\leq \overline{\zeta }E\left[ \left( \frac{A_{n}}{A_{0}}\right) ^{1-\gamma }%
\right]
\end{equation*}%
and, as a consequence, the risk-adjusted long-run mean consumption growth
rate $\mathcal{M}_{C}$ from (\ref{eq:drc}) is given by%
\begin{equation}
\mathcal{M}_{C}=\lim_{n\rightarrow \infty } \left[ \rR\left( \frac{A_{n}}{%
A_{0}}\right) \right] ^{1/n}.  \label{eq:mCA}
\end{equation}%
We can therefore infer the region of the parameter space for which the
endogenously determined continuation value exists directly from the
knowledge of the properties of the process $A_{t}$, without knowing the
details of the function $c\left( x\right) $.\footnote{%
\label{f:production}Another sufficient condition for the result in (\ref{eq:mCA}) that is often
satisfied in applications is an exponential decay rate in the correlation
between $c\left( X_{t+n}\right) ^{1-\gamma }$ and $\left(
A_{t+1}/A_{t}\right) ^{1-\gamma }$ as $n\rightarrow \infty $. The argument
can also be further extended to cases when the economy involves additional
endogenous nonstationary state variables cointegrated with $A_{t}$, such as
aggregate capital. In these cases, the transition law for $X_{t}$ may not
satisfy the regularity conditions imposed in Section~\ref{s:cp} but such
economies can often be regularized using perturbation arguments similar to
that described in Section~\ref{oapp-s:learning} of the online appendix.}
This is another manifestation of the fact that our condition for existence
and uniqueness depends only on long-run properties of the consumption
process and transitory details of that process are irrelevant.

\section{Comparisons with Alternative Conditions}

\label{s:pob}

A number of related tests for existence and uniqueness of recursive utilities
were discussed in the introduction.  In this section we provide a brief
comparison of these alternative results with the necessary and sufficient condition $\Lambda < 1$.

\subsection{Probability One Bounds}

\label{ss:pob}

Several conditions based on probability one bounds have been proposed in the
literature.  A representative example is theorem~3.1
of \cite{epstein1989}, which shows that a solution to the recursive utility
problem exists whenever
\begin{equation}
    \label{eq:pob}
    \psi > 1
    \; \text{ and }  \;
    \beta B_c^{1 - 1/\psi}  < 1,
\end{equation}
where $B_c$ is a probability one upper bound on $C_{t+1} /  C_t$.  If such a
$B_c$ exists, then condition \eqref{eq:pob} is directly comparable with the
condition $\Lambda < 1$ because $B_c$ always exceeds the risk-adjusted mean
consumption growth rate. Indeed, $C_{t+1} / C_t \leq B_c$ for all $t$ implies
$C_n / C_0 \leq B_c^n$ for all $n$, and hence, by the definition of $\lrM$ in
\eqref{eq:drc},
\begin{equation}
    \label{eq:rcvmc}
    \lrM  \leq B_c.
\end{equation}
Thus, condition \eqref{eq:pob} implies $\Lambda < 1$.  In other words, the
condition $\Lambda < 1$ is weaker than the condition of \cite{epstein1989}.
Moreover, unless consumption growth is deterministic, the inequality in
\eqref{eq:rcvmc} is strict.  This fact recalls the point made earlier
that focusing only on the upper tail of the consumption growth
distribution leads to overly pessimistic restrictions.

To illustrate, consider the case of the long-run risk model,
where consumption growth is as specified in \eqref{al:ssyc1}.
Since consumption growth innovations are normally distributed and have no finite upper bound in this case, we have $B_c = +\infty$ and the Epstein--Zin stability
coefficient $\beta B_c^{1 - 1/\psi}$ from condition \eqref{eq:pob} is also
$+\infty$.  In contrast, $\Lambda < 1$ at the baseline parameters, as
discussed in section~\ref{ss:lrr}.

\subsection{Contraction Arguments}

\label{ss:ca}

Recall the specification for consumption used in \cite{alvarez2005using} and
discussed in section~\ref{ss:aj}, where $C_t = \tau^t X_t$ with $\tau > 0$ and
$\{X_t\}$ positive and {\sc iid}.  Let $X_t$ have distribution $F$ on $[a, b]
\subset \RR$ for some positive scalars $a < b$.  When consumption obeys
\eqref{eq:ac}, \cite{alvarez2005using} show that a unique solution to the
recursive utility problem exists whenever
\begin{equation}
    \label{eq:alv}
    \beta \tau^{1 - \frac{1}{\psi}}
    \max_{a \leq x \leq b}
    \left\{
        \int \left(\frac{y}{x} \right)^{1-\gamma}
        \pi(\diff y)
    \right\}^{\frac{1}{\theta}} < 1.
\end{equation}
See \cite{alvarez2005using}, proposition~9 and lemma~A.1 for
details.

By way of comparison, we know from section~\ref{ss:aj} that a unique solution
exists if and only if $\beta \, \tau^{1-\frac{1}{\psi }} < 1$.
This condition is weaker than \eqref{eq:alv}, since the additional maximized term
in \eqref{eq:alv} always exceeds unity.  The difference between the stability
conditions arises because the condition from \cite{alvarez2005using} enforces
contraction in one step.  In contrast, the condition $\Lambda < 1$ is an
asymptotic condition that ignores short-run fluctuations in consumption.  This
leads to a weaker condition because short-run fluctuations do not impinge on
asymptotic outcomes.

\section{Unbounded State Spaces}

\label{s:ub}

Lastly, we return to the setting of theorem~\ref{t:bkl1c} but now dropping the
compactness assumption on the state space (i.e., assumption~\ref{a:c}) and
replacing it with

\begin{assumption}
    \label{a:nc}
     The operator $K$ is continuous and eventually weakly compact.
\end{assumption}

Condition~\ref{a:nc} is similar to assumption~2.1 in
\cite{christensen2017nonparametric}, which is used to study estimation of
positive eigenfunctions of valuation operators.  An explanation of the
terminology and discussion of sufficient conditions can be found in appendix~\ref{s:pi}.

\begin{theorem}
    \label{t:bkgts}
    If assumptions~\ref{a:i} and \ref{a:nc} both hold, then  the following
    statements are equivalent:
    \begin{enumerate}
        \item $\Lambda < 1$.
        \item $A$ has a fixed point in $\cC$.
        \item There exists a $g \in \cC$ such that $\{A^n g\}_{n \geq 1}$
            converges to an element of $\cC$.
    \end{enumerate}
\end{theorem}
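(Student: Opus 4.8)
The plan is to prove the cycle (a)$\Rightarrow$(b)$\Rightarrow$(c)$\Rightarrow$(b)$\Rightarrow$(a), two of whose links are essentially free. The implication (b)$\Rightarrow$(c) is trivial: if $g^*\in\cC$ is a fixed point of $A$, the constant sequence $g=g^*$ satisfies $A^ng=g^*$, which converges in $\cC$. For (c)$\Rightarrow$(b) I would first record that $A$ is continuous with respect to $L_1$ convergence, which follows by combining continuity of $K$ (part of assumption~\ref{a:nc}) with continuity of the scalar map $\phi$ in \eqref{eq:defphi}; then the limit $\bar g$ of a convergent sequence $\{A^ng\}$ in $\cC$ is a fixed point, since $A(A^ng)=A^{n+1}g$ and both sides converge to $A\bar g$ and $\bar g$ respectively. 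All the real content is therefore the equivalence (a)$\Leftrightarrow$(b). Before addressing it I would invoke the extended local spectral radius results (theorem~\ref{t:lsr2} and proposition~\ref{p:lsr}): under assumptions~\ref{a:i} and \ref{a:nc} these give that $\lrM$ is well defined with $\lrM=r(K)^{1/(1-\gamma)}$, so that $\Lambda$ is well defined and $\Lambda^\theta=\beta^\theta r(K)$; moreover irreducibility (from assumption~\ref{a:i}) together with eventual weak compactness supplies, via Krein--Rutman, a dominant eigenvalue $r(K)$ with a strictly positive eigenfunction $e$ and a positive eigenfunctional $\ell$.

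For sufficiency, (a)$\Rightarrow$(b), I would bootstrap from theorem~\ref{t:bkl1c} by exhausting $\XX$ with an increasing sequence of compact sets $\XX_k\uparrow\XX$. On each $\XX_k$ form the approximating model obtained by conditioning the transition density $q$ to $\XX_k$; this preserves both the Markov property and the regularity in assumption~\ref{a:i}, and produces operators $K_k$, $A_k$ of the same form. Eventual weak compactness of $K$ is used to show that the spectral radii converge, $r(K_k)\to r(K)$ (with eigenfunctions $e_k$ normalized so that $e_k\to e$), hence that the associated exponents satisfy $\Lambda_k=\beta\,(r(K_k)^{1/(1-\gamma)})^{1-1/\psi}\to\Lambda<1$; so for all large $k$ theorem~\ref{t:bkl1c} delivers a fixed point $g_k$ of $A_k$ in the compact-model analogue of $\cC$. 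I would then derive uniform order bounds on $\{g_k\}$ --- bounded below by the positive constant $(1-\beta)^\theta$ when $\theta>0$, and bounded above by a fixed multiple $Me$ of the eigenfunction, the factor $M$ being chosen from the supersolution inequality $\phi(M\,r(K_k)\,e_k)\leq M e_k$, which holds uniformly in $k$ once $\Lambda_k<1$ --- and pass to the limit: the weak compactness of $K$ turns the uniform bounds into relative compactness of $\{K_kg_k\}$, and continuity of $A$ then yields a limit $g^*\in\cC$ with $g^*=Ag^*$ on all of $\XX$. The case $\theta<0$ runs the same way, with the order bounds adjusted and boundedness of $\phi$ replacing the lower bound.

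For necessity, (b)$\Rightarrow$(a), suppose $g^*\in\cC$ satisfies $g^*=\phi(Kg^*)$. An elementary comparison of $\phi$ with its linear minorant/majorant --- $\phi(t)>\beta^\theta t$ for all $t\geq 0$ when $\theta>0$, and $\phi(t)<\beta^\theta t$ for all $t>0$ when $\theta<0$ --- together with $Kg^*>0$ shows that $g^*$ is a strict supersolution of the linear operator $\beta^\theta K$ when $\theta>0$ and a strict subsolution when $\theta<0$. Pairing the resulting strict inequality between $\beta^\theta Kg^*$ and $g^*$ with the dominant eigenfunctional $\ell$ gives, with strict inequality, $\beta^\theta r(K)\langle\ell,g^*\rangle<\langle\ell,g^*\rangle$ when $\theta>0$ and the reverse when $\theta<0$; since $\langle\ell,g^*\rangle\in(0,\infty)$ this yields $\beta^\theta r(K)<1$ for $\theta>0$ and $\beta^\theta r(K)>1$ for $\theta<0$. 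As $\beta^\theta r(K)=\Lambda^\theta$, both cases give $\Lambda<1$.

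The main obstacle is the interplay between unboundedness of the state space and the order/integrability bookkeeping. In the sufficiency step this appears as the need to (i) establish $\Lambda_k\to\Lambda$, i.e.\ stability of the spectral radius under the conditioning approximation, and (ii) produce order bounds on the $g_k$ that survive the limit even though $e$ need not be bounded --- this is exactly where assumption~\ref{a:nc} does the work. In the necessity step the delicate point is justifying $\langle\ell,g^*\rangle\in(0,\infty)$ when $g^*$ is only $\pi$-integrable; for $\theta<0$ this is immediate since $\phi$ (and hence $g^*$) is bounded, but for $\theta>0$ it relies on the properties of the eigenfunctional secured by assumption~\ref{a:nc}, or alternatively on the iterated estimate $g^*\geq\beta^{n\theta}(1-\beta)^\theta K^n\1$ together with $\|K^n\1\|^{1/n}\to r(K)$ from theorem~\ref{t:lsr2}, which gives $\Lambda\leq 1$ directly and, with a little extra care on the slack in the comparison, the strict inequality.
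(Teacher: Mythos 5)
Your proposal matches the paper's strategy: the necessity direction is the adjoint-eigenfunctional comparison of proposition~\ref{p:bkn} carried over to the noncompact setting, and the sufficiency direction is exactly the approximation argument the paper uses---exhaust $\XX$ by compact sets, apply theorem~\ref{t:bkl1c} to each truncated model, and use a spectral continuity result ($r(K_k)\to r(K)$, which the paper takes from \cite{schep1980positive} and which is where assumption~\ref{a:nc} does its work) to pass to the limit. The sketch is correct in outline and essentially the same as the paper's proof.
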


The proof of theorem~\ref{t:bkgts} relies heavily on a rather specific spectral continuity
result for positive operators obtained by \cite{schep1980positive}.  This
continuity is combined with an approximation argument that allows us to
bootstrap some results from theorem~\ref{t:bkl1c}.
Assumption~\ref{a:nc} is used to show that the
approximation step is valid.

Compared to theorem~\ref{t:bkl1c}, the calculation of the test
value $\Lambda$ in theorem~\ref{t:bkgts} is more problematic when no analytical solution exists, as even a
flexible method such as Monte Carlo restricts the state space to a finite set of
floating point numbers. Moreover, the conclusions are weaker and the technical
condition in assumption~\ref{a:nc} may be nontrivial to test.  Nevertheless, the theorem
shows us that the core existence result extends to a large class of
unbounded theoretical models.

%Overall, theorem~\ref{t:bkgts} is less attractive from the perspective of
%applications than theorem~\ref{t:bkl1c}, since (i) calculation of the test
%value $\Lambda$ is problematic when no analytical solution exists, as even a
%flexible method such as Monte Carlo restricts the state space to a finite set of
%floating point numbers, and (ii) the conclusions are weaker and the technical
%condition in assumption~\ref{a:nc} is nontrivial to test.  Nevertheless, it
%shows us that the core existence result from the paper and the significance of
%the test $\Lambda < 1$ typically extends to unbounded theoretical models.  A
%discussion of sufficient conditions for assumption~\ref{a:nc} in the context
%of generic positive operators on $L_1$ can be found in appendix~\ref{s:pi}.

\section{Concluding Remarks}

\label{s:c}

We derived a simple condition that is both necessary and sufficient for existence and
uniqueness of the continuation value and aggregate wealth-consumption ratio in
recursive utility models with homothetic Epstein--Zin preferences. Despite the
fact that the nonlinear preference recursion intertwines the role of risk
aversion and intertemporal elasticity of substitution, our condition separates
the two forces. What matters is the average long-run risk-adjusted consumption
growth rate and its relationship to intertemporal substitutability of
consumption and time discounting. In asset pricing terms, the condition
imposes a bound on the long-horizon decay rate of the value of consumption
strips as their maturity increases.

We also provided a range of analytical examples, a globally stable method of
computing solutions whenever they exist, and numerical methods that allow for
efficient evaluation of our condition.  Since transitory
details of the consumption process do not affect the average long-run
risk-adjusted consumption growth rate and hence are irrelevant for our
stability condition, insights from this paper can also be applied in
production economy settings without explicitly solving the model, as long as
cointegration properties of consumption with an exogenous driver of
uncertainty can be directly inferred.

While our sharpest results restrict the state space for the underlying Markov state to
be compact, we also give extensions to unbounded state spaces and study
sensitivity of numerical methods as we relax truncation of the state space.
Our results for the unbounded setting are more limited than those for
the compact case, but we conjecture that the combination of ever sharper
numerical methods and the kinds of theoretical insights provided above
forms the most promising road to further understanding of stability and
equilibrium properties of asset pricing models used in modern
quantitative applications.

\appendix

\section{General Fixed Point and Spectral Radius Results}

In what follows, the state space $\XX$ is allowed to be any compact metric
space.  In particular, $\XX$ can be a compact subset of $\RR^n$ with the
Euclidean distance or an arbitrary finite set endowed with the discrete metric
$d(x, y) = \1\{x \not= y\}$.  Note that, in the latter case, $\kappa$ and $q$
are automatically continuous.
The collection of Borel measurable functions $g$
from $\XX$ to $\RR$ such that $\| g \| := \int |g| \diff \pi < \infty$ is denoted
by $L_1(\pi)$.  Convergence is with respect to $\| \cdot \|$ unless
otherwise stated.
For $g, h \in L_1(\pi)$, the statement $g \leq h$
means that $g \leq h$ holds $\pi$-almost everywhere,
while $g \ll h$
indicates that $g < h$ holds $\pi$-almost everywhere.
The symbol $\cC$ denotes all $g \in L_1(\pi)$ such that $g \gg 0$.

Let $M$ be a linear operator mapping $L_1(\pi)$ to itself.  The \emph{operator
norm} and \emph{spectral radius} of $M$  are defined by $\|M\| := \sup
\setntn{\|Mg\|}{g \in L_1(\pi) ,\; \| g \| \leq 1}$ and $r(M) := \lim_{n \to
\infty} \| M^n \|^{1/n}$ respectively.   The operator $M$ is called
\emph{positive} if $Mg \geq 0$ whenever $g \geq 0$.  It is called
\emph{bounded} if $\| M \|$ is finite and \emph{compact} if the image of the
unit ball in $L_1(\pi)$ under $M$ is precompact in
the norm topology.  A (possibly nonlinear) operator $S$ mapping
a convex subset $E$ of $L_1(\pi)$ into itself
is called \emph{convex} on $E$ if $S (\lambda f + (1-\lambda) g) \leq \lambda S
+ (1-\lambda) S g$ for all $f,g \in E$ and all $\lambda \in [0,1]$; and
\emph{concave} if the reverse inequality holds. It is called \emph{isotone} if
$f \leq g$ implies $Sf \leq Sg$.

The following is a local spectral radius result suitable for $L_1(\pi)$.
The proof provided here is due to Miros\l{}awa Zima (private communication)
and draws on earlier results from \cite{zabreiko1967bounds}.

\begin{theorem}[Zabreiko--Krasnosel'skii--Stetsenko--Zima]
    \label{t:lsr}
    Let $h$ be an element of $L_1(\pi)$ and let $M$ be a positive compact
    linear operator.  If $h \gg 0$, then
    \begin{equation}
        \label{eq:lsr}
        \lim_{n \to \infty} \| M^n h \|^{1/n} = r(M).
    \end{equation}
\end{theorem}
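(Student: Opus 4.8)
The plan is to establish the two inequalities $\limsup_{n} \|M^n h\|^{1/n} \leq r(M)$ and $\liminf_{n} \|M^n h\|^{1/n} \geq r(M)$ separately; together with the fact that $r(M)$ is the radius of the peripheral spectrum, this forces the limit to exist and equal $r(M)$. The upper bound is the easy half: since $h \in L_1(\pi)$ is fixed, $\|M^n h\| \leq \|M^n\| \|h\|$, so $\|M^n h\|^{1/n} \leq \|M^n\|^{1/n} \|h\|^{1/n}$, and Gelfand's formula $\|M^n\|^{1/n} \to r(M)$ together with $\|h\|^{1/n} \to 1$ gives $\limsup_n \|M^n h\|^{1/n} \leq r(M)$. (If $r(M) = 0$ the theorem is then immediate, so we may assume $r(M) > 0$ below.)

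For the lower bound I would exploit that $h \gg 0$, which lets $h$ dominate, up to a positive scalar, any reasonable test function. The first step is to use compactness of $M$ together with the Krein--Rutman theorem: a positive compact operator with $r(M) > 0$ has $r(M)$ as an eigenvalue with a nonnegative eigenfunction $e \geq 0$, $e \neq 0$, so $M e = r(M) e$. The second step is a domination argument: because $h \gg 0$ (i.e.\ $h > 0$ $\pi$-a.e.) and $e \in L_1(\pi)$ is nonnegative, there is some constant $c > 0$ with $c\, h \geq e$ $\pi$-a.e.\ — here one must be slightly careful, since $e$ need not be bounded and $h$ need not be bounded below by a positive constant; the route (following Zabreiko--Krasnosel'skii) is to first replace $e$ by $\min(e, N\mathbf{1})$ or to truncate, pick up $c h \geq \min(e,N)$, and control the tail. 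Applying the positive operator $M^n$ preserves the inequality, giving $c\, M^n h \geq M^n e = r(M)^n e$ (modulo the truncation correction), hence $\|M^n h\| \geq c^{-1} r(M)^n \|e\|$ after a limiting argument in $N$, and therefore $\liminf_n \|M^n h\|^{1/n} \geq r(M)$.

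The main obstacle is the domination step: turning $h \gg 0$ into a usable pointwise bound $ch \geq e$ when neither $h$ nor $e$ is controlled at the level of $L_\infty$, and making the truncation/tail estimate rigorous so that the scalar $c$ does not blow up. This is precisely where the argument of \cite{zabreiko1967bounds} — and Zima's refinement — does the real work, typically via a measure-theoretic exhaustion of $\XX$ by sets on which $h$ is bounded below and $e$ is bounded above, combined with a uniform-integrability or monotone-convergence argument to pass to the limit. A secondary technical point is justifying that the eigenfunction $e$ can be taken with $\|e\| > 0$ and that $r(M)^n \|e\| > 0$ for all $n$, which is immediate once $e \neq 0$ and $r(M) > 0$; irreducibility is not needed here since we only want a lower bound, not strict positivity of $e$.
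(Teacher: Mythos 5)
Your upper bound $\limsup_n \|M^n h\|^{1/n}\leq r(M)$ is fine and coincides with what the paper implicitly uses. The gap is in the lower bound, specifically the domination step: from $h \gg 0$ (i.e.\ $h>0$ $\pi$-a.e.) you cannot conclude that $c\,h \geq e$ for some scalar $c>0$, and truncating $e$ to $\min(e,N)$ does not repair this. The obstruction is not that $e$ is unbounded above but that $h$ may be arbitrarily small on sets of positive measure where $e$ is bounded below (e.g.\ $h(x)=x$ and $e\equiv 1$ on $[0,1]$); no finite $c$ works even after truncation. Your fallback — restricting to sets where $h$ is bounded below and $e$ bounded above — also fails, because once you replace $e$ by $e\1_B$ the identity $M^n(e\1_B)=r(M)^n e\1_B$ is lost, so the $r(M)^n$ growth you need does not survive the truncation; recovering it would require a comparison between $M^n(e\1_B)$ and $M^n e$ that is itself the hard part of the theorem. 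So the argument as written does not close.

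The paper's proof sidesteps domination of the eigenfunction entirely. For $\lambda > r(h,M):=\limsup_n\|M^nh\|^{1/n}$ it forms the resolvent-type series $h_\lambda=\sum_{n\geq 0}M^nh/\lambda^{n+1}$, which converges by the root test, satisfies $Mh_\lambda\leq\lambda h_\lambda$ (since $\lambda h_\lambda-Mh_\lambda=h\geq0$), and is quasi-interior because it dominates $h/\lambda\gg0$ — note the inequality runs in the direction that $h\gg0$ actually supplies. A comparison theorem for positive compact operators (Krasnosel'skii et al., theorem~5.5(a)) then gives $r(M)\leq\lambda$, hence $r(h,M)\geq r(M)$, and a separate result of Dane\v{s} upgrades the $\limsup$ to a genuine limit. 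If you want to keep the spirit of your eigenvector approach, the workable variant is to pass to the adjoint: Krein--Rutman applied to the compact positive operator $M^*$ on $L_\infty(\pi)$ yields $e^*\geq0$, $e^*\neq0$, with $M^*e^*=r(M)e^*$, and then $\|M^nh\|\geq\|e^*\|_\infty^{-1}\la e^*,M^nh\ra=r(M)^n\la e^*,h\ra/\|e^*\|_\infty$ with $\la e^*,h\ra>0$ automatic from $h\gg0$ — no domination constant is needed because integration against a nonzero nonnegative functional replaces the pointwise bound.
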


\begin{proof}[Proof of theorem~\ref{t:lsr}]
    Let $h$ and $M$ be as in the statement of the theorem.
    Recall that
    $r(h, M) = \limsup_{n \to \infty} \| M^n h \|^{1/n}$ is the local spectral
    radius of $M$ at $h$.  From the definition of $r(M)$ it suffices to show that
    $r(h, M) \geq r(M)$.  To this end, let $\lambda$ be a constant
    satisfying $\lambda > r(h, M)$ and let
    \begin{equation}
        \label{eq:xl}
        h_\lambda := \sum_{n=0}^\infty \frac{M^n h}{\lambda^{n+1}}.
    \end{equation}
    The point $h_\lambda$ is a well-defined element of $L_1(\pi)$ by
        $\limsup_{n \to \infty} \| M^n h \|^{1/n} < \lambda$
    and Cauchy's root test for convergence.  It is also positive $\pi$-almost
    everywhere, since
    the sum in \eqref{eq:xl} includes $h \gg 0$, and since $M$ is a positive
    operator.  Moreover, by standard Neumann series theory (e.g.,
    \cite{krasnosel2012approximate}, theorem~5.1), the point $h_\lambda$
    also has the representation $h_\lambda = (\lambda I - M)^{-1} h$,
    from which we obtain $\lambda h_\lambda - M h_\lambda = h$.  Because
    $h \in \cC$, this implies that
         $M h_\lambda \leq \lambda h_\lambda$.
    Applying this last inequality, compactness of $M$,
    quasi-interiority of $h_\lambda$ and theorem~5.5 (a) of
    \cite{krasnosel2012approximate}, we must have $r(M) \leq \lambda$.
    Since this inequality was established for an arbitrary $\lambda$
    satisfying $\lambda > r(h, M)$, we conclude that $r(h, M) \geq r(M)$.
    Hence $r(h, M) = r(M)$.
    Finally, since $M$ is compact, corollary~1 of \cite{danevs1987local}
    implies that $r(h, M) = \lim_{n \to \infty} \| M^n h \|^{1/n}$, so
    \eqref{eq:lsr} holds.
\end{proof}

%In general, $r(h, M) := \limsup_{n \to \infty} \| M^n h \|^{1/n}$
%is called the \emph{local spectral radius} of $M$ at $h$.
%Under the conditions of theorem~\ref{t:lsr}, the limit supremum equals the
%limit, which in turn equals the spectral radius.

The next result is an extension of theorem~\ref{t:lsr}, which weakens the
compactness condition in that theorem while requiring additional positivity.

\begin{theorem}
    \label{t:lsr2}
    Let $h$ be an element of $L_1(\pi)$ and let $M$ be a
    linear operator on $L_1(\pi)$.
        If $M^i$ is compact for some $i \in \NN$ and
        $M f \gg 0$ whenever $f \in L_1(\pi)$ and $f \gg 0$, then
    \begin{equation}
        \label{eq:lsr2}
        \lim_{n \to \infty}
        \left\{ \int M^n h \, \diff \pi \right\}^{1/n} = r(M)
        \quad \text{for all } h \gg 0.
    \end{equation}
\end{theorem}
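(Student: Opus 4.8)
The strategy is to reduce to theorem~\ref{t:lsr} by passing to the compact power $N := M^i$ and then interpolating from the arithmetic progression $\{i, 2i, 3i, \dots\}$ to all of $\NN$. Two elementary consequences of the hypotheses come first. Since $\pi$ is a probability measure we have $\1 \gg 0$, so for $f \geq 0$ and $\epsilon > 0$ the function $f + \epsilon \1$ is $\gg 0$, whence $Mf + \epsilon M\1 = M(f + \epsilon \1) \gg 0$; letting $\epsilon \downarrow 0$ along a sequence and passing to a subsequence converging $\pi$-a.e.\ shows $Mf \geq 0$. Hence $M$ is positive, and therefore bounded, since a positive linear operator defined on all of $L_1(\pi)$ is automatically continuous; in particular $r(M) < \infty$. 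Iterating the standing hypothesis gives $M^n h \gg 0$ for every $n$, so each power $M^k$ is nonzero (hence $\| M^k \| > 0$) and $\int M^n h \diff \pi = \| M^n h \| > 0$; thus the quantity in \eqref{eq:lsr2} is well defined and equals $\| M^n h \|^{1/n}$.

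Next I would apply theorem~\ref{t:lsr} to $N := M^i$, which is compact by assumption and positive because $M$ is, together with the point $h \gg 0$, obtaining $\| N^n h \|^{1/n} \to r(N)$, i.e.\ $\| M^{in} h \|^{1/(in)} \to r(N)^{1/i}$. Gelfand's formula $\| M^k \|^{1/k} \to r(M)$, applied along multiples of $i$, gives $r(N) = r(M^i) = r(M)^i$, so $\| M^{in} h \|^{1/(in)} \to r(M)$. This settles the limit along $\{i, 2i, 3i, \dots\}$.

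To fill the gaps, write $n = i n' + j$ with $0 \leq j \leq i - 1$ and $n' = \lfloor n / i \rfloor \to \infty$. Boundedness of the powers $M^0, \dots, M^i$ yields the sandwich
\[
  \| M^{i - j} \|^{-1} \, \| M^{i(n' + 1)} h \|
  \;\leq\; \| M^n h \|
  \;=\; \| M^j M^{i n'} h \|
  \;\leq\; \| M^j \| \, \| M^{i n'} h \|,
\]
the left inequality following from $M^{i(n'+1)} h = M^{i - j}(M^n h)$ and $\| M^{i-j}\| > 0$. Taking $n$-th roots, using $i n'/n \to 1$ and $i(n'+1)/n \to 1$, the convergences $\| M^{i n'} h \|^{1/(i n')} \to r(M)$ and $\| M^{i(n'+1)} h \|^{1/(i(n'+1))} \to r(M)$, and continuity of $x \mapsto x^y$ away from the origin, both outer terms tend to $r(M)$ when $r(M) > 0$; when $r(M) = 0$ the upper bound alone forces $\| M^n h \|^{1/n} \to 0$. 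Hence $\| M^n h \|^{1/n} \to r(M)$, which is \eqref{eq:lsr2}.

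I expect the interpolation to be the only delicate point: one must track the exponents $i n'/n$ and $i(n' + 1)/n$ as they approach $1$, and treat the degenerate case $r(M) = 0$ separately since $x \mapsto x^y$ is not continuous at the origin. The reduction to $N = M^i$, the identity $r(M^i) = r(M)^i$, and the verification that $M$ is positive, bounded and preserves strict positivity of its iterates are all routine.
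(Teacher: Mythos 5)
Your proof is correct and follows essentially the same route as the paper's: reduce to theorem~\ref{t:lsr} via the compact power $M^i$, use $r(M^i) = r(M)^i$, and then recover the full limit from the decomposition $n = in' + j$. The only difference is in that last step --- the paper applies theorem~\ref{t:lsr} once per residue class with starting point $M^j h \gg 0$ (which is exactly where the hypothesis that $M$ preserves strict positivity is used), whereas you apply it only at $h$ and interpolate with an operator-norm sandwich; both work, and your added remarks on automatic boundedness of positive operators on $L_1(\pi)$ and on the degenerate case $r(M) = 0$ tidy up points the paper leaves implicit.
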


\begin{proof}
    Fix $h \in L_1(\pi)$ with $h \gg 0$ and
    choose $i \in \NN$ such that $M^i$ is a compact linear operator on
    $L_1(\pi)$.  Fix $j \in \NN$ with $0 \leq j \leq i-1$.
    By our assumptions on $M$ we know that $M^j h \gg 0$, so theorem~\ref{t:lsr} applied to $M^i$
    with initial condition $M^j h$ yields
    \begin{equation*}
        \left\{ \int M^{in} M^j h \, \diff \pi  \right\}^{1/n}
        = \left\{ \int M^{in+j} h \, \diff \pi  \right\}^{1/n}
        \to r(M)^i
        \qquad (n \to \infty).
    \end{equation*}
    But $r(M^i) = r(M)^i$, so
    \begin{equation*}
        \left\{ \int M^{in+j} h \, \diff \pi  \right\}^{1/(in)} \to r(M)
        \qquad (n \to \infty).
    \end{equation*}
    It follows that
    \begin{equation*}
        \left\{ \int M^{in+j} h \, \diff \pi  \right\}^{1/(in+j)} \to
        r(M)
        \qquad (n \to \infty).
    \end{equation*}
    As $j$ is an arbitrary integer satisfying  $0 \leq j \leq i-1$,
    we conclude that \eqref{eq:lsr2} holds.
\end{proof}

The next lemma is useful for detecting fixed points in $L_1(\pi)$.

\begin{lemma}
    \label{l:udfp}
    Let $\{g_n\}$ be a monotone increasing sequence in $L_1(\pi)$.
    If $\{g_n\}$ is bounded above by some $h$ in $L_1(\pi)$, then
    there exists a $g$ in $L_1(\pi)$ such that $g_n \to g$.
    Moreover, if $g_n = T^n g_0$ for some continuous operator $T$
    mapping a subset of $L_1(\pi)$ to itself, then $g$ is a fixed point of $T$.
\end{lemma}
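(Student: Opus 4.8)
The plan is to read off the limit $g$ pointwise as a supremum, upgrade the pointwise convergence to $L_1(\pi)$ convergence by monotone convergence, and then deduce the fixed-point property from continuity of $T$. For the first step, note that ``monotone increasing in $L_1(\pi)$'' means $g_n \leq g_{n+1}$ holds $\pi$-almost everywhere for each $n$, while the hypothesis supplies $g_n \leq h$ $\pi$-a.e.\ for each $n$. Discarding the union of these countably many $\pi$-null exceptional sets, we may restrict attention to a set of full $\pi$-measure on which, for every $x$, the real sequence $\{g_n(x)\}$ is nondecreasing and bounded above by $h(x)$. Define $g(x) := \sup_n g_n(x)$ on that set (and, say, $g(x) := 0$ off it). Then $g$ is measurable and satisfies $g_1 \leq g \leq h$ $\pi$-a.e., so $g \in L_1(\pi)$, and $g_n(x) \to g(x)$ for $\pi$-almost every $x$.

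Next I would show $g_n \to g$ in norm. The sequence $g_n - g_1$ is nonnegative and increases pointwise $\pi$-a.e.\ to $g - g_1 \in L_1(\pi)$, so the monotone convergence theorem gives $\int (g_n - g_1) \diff \pi \uparrow \int (g - g_1) \diff \pi < \infty$. Subtracting, $\|g - g_n\| = \int (g - g_n) \diff \pi \to 0$, which is exactly $g_n \to g$ in $L_1(\pi)$.

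For the ``moreover'' claim, suppose $g_n = T^n g_0$, so that $g_{n+1} = T g_n$ for all $n \geq 0$. The limit $g$ lies in the domain of $T$: each $g_n$ does, and in the applications below this domain contains $g$ directly (and in any case it does so automatically when the domain is closed in $L_1(\pi)$). Continuity of $T$ then yields $T g_n \to T g$ in $L_1(\pi)$, while $T g_n = g_{n+1} \to g$ by the previous paragraph; since $L_1(\pi)$-limits are unique, $T g = g$. The argument is otherwise routine, and the only point meriting care is the bookkeeping in the first step---converting the almost-everywhere monotonicity built into ``increasing in $L_1(\pi)$'' into a genuinely pointwise-increasing sequence by intersecting countably many full-measure sets---together with the observation just made that the limit belongs to $\dom T$, so that continuity of $T$ can legitimately be applied.
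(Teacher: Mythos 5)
Your proposal is correct and follows essentially the same route as the paper: the paper likewise obtains the limit from the Monotone Convergence Theorem and then derives the fixed-point property from continuity of $T$ together with $Tg_n = g_{n+1} \to g$. Your write-up merely fills in details the paper leaves implicit (the a.e.\ pointwise supremum, the shift by $g_1$ before applying MCT, and the remark that $g$ must lie in $\dom T$), all of which are fine.
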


\begin{proof}
    The first claim is a direct consequence of the Monotone Convergence
    Theorem for integrals.  Regarding the second, we have $g_n \to g$ and
    hence, by continuity, $T g_n \to Tg$.  But, by the definition of the sequence $\{g_n\}$,
    we also have $T g_n \to g$.  Hence $Tg=g$.
\end{proof}

%Below, a self-map $S$ on a topological space $E$ is called \emph{globally stable}
%if $S$ has a unique fixed point $u^*$ in $E$ and $S^t (u) \to u^*$ as $t \to
%\infty$ for every $u \in E$.

%\begin{lemma}
    %\label{l:fito}
    %If $S^i$ is globally stable on $E$ for some $i \in \NN$ and $S$ is
    %continuous at the fixed point of $S^i$, then $S$ is globally stable on $E$ with
    %the same steady state.
%\end{lemma}

%\begin{proof}
    %Let $E$, $S$ and $i$ satisfy the stated hypotheses.  Fix $u \in E$.  We
    %claim that $S^t(u)$ converges to the fixed point $u^*$ of $S^i$.  To see this, observe that any
    %$m \in \NN$ can be expressed as $ni +j$ for some $n \in \NN$ and
    %some $j$ in $0, \ldots, i-1$.  Let $U$ be a neighborhood of $u^*$ and,
    %for each such $j$, choose $N_j$ such that
        %$S^{ni + j}(u) = S^{ni}(S^j(u)) \in U$  whenever $n \geq N_j$.
    %With $N := \max_j N_j$, we have $S^t (u) \in U$ whenever $t \geq Ni$, so
    %$S^t(u) \to u^*$.
    %Since $S$ is continuous at $u^*$, the point $u^*$ is fixed for $S$.  Since
    %$u$ was arbitrary, $S$ is globally stable.
%\end{proof}

\section{Remaining Proofs, Compact Case}

\label{s:pi}

As before, $q$ is the transition density kernel for $\{X_t\}$.  We define
$q^i$ recursively by $q^1 = q$ and $q^i(x, y) = \int q(x, z)
q^{i-1}(z, y) \diff z$ for each $x,y \in \XX$.  By the conditions in
section~\ref{s:cp}, we can take $\ell \in \NN$ such that $q^\ell > 0$,
and $\ell$ has this meaning throughout. Also, in all of what remains, we adopt the notation
\begin{equation}
    \label{eq:defkern}
    k(x, y) =
    \int \exp[ (1-\gamma) \kappa(x, y, \epsilon) ] \nu(\diff \epsilon)
    q(x, y) ,
\end{equation}
so that the operator $K$ satisfies $Kg(x) = \int k(x, y) g(y) \diff y$.
Let $k^i$ be defined on $\XX \times \XX$ by $k^1 = k$ and $k^i(x,
y) = \int k(x, z) k^{i-1} (z, y) \diff z$.  As is easily verified by
induction, the  $i$-th element $k^i$ is the kernel of the $i$-th power of $K$.
That is, for all $x \in \XX$, $i \in \NN$ and $g \in \cC$,
\begin{equation*}
    K^i g(x) = \int k^i(x, y) g(y) \diff y.
\end{equation*}
Throughout this section, assumptions~\ref{a:i} and \ref{a:c} are taken to be
true.

\begin{lemma}
    \label{l:posd}
    The density $\pi$ is the unique stationary density for $q$ on $\XX$.  In
    addition, $\pi$ is everywhere positive and continuous on $\XX$.
\end{lemma}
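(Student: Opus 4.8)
\textbf{Proof proposal for Lemma~\ref{l:posd}.}

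The plan is to exploit Assumption~\ref{a:i}, which gives continuity of $q$ together with positivity of some $\ell$-step kernel $q^\ell$, and to run a standard Perron--Frobenius / Doeblin-type argument adapted to the possibly non-finite compact state space $\XX$. First I would establish existence of a stationary density: the Markov operator $P$ acting on $L_1(\pi_0)$ for any reference measure, or more concretely the dual acting on probability densities via $(Pf)(y) = \int q(x,y) f(x)\,\diff x$, maps the set of probability densities into itself. Because the excerpt already assumes $\pi$ is \emph{given} as the unconditional (stationary) density of the process $\{X_t\}$, existence is not really in question; the substance of the lemma is \emph{uniqueness} and the regularity claims (everywhere positive, continuous). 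So the real work is to show that $q^\ell > 0$ forces uniqueness and that $\pi = P\pi$ inherits positivity and continuity from $q$.

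For uniqueness, I would argue as follows. Suppose $\pi$ and $\tilde\pi$ are both stationary densities on $\XX$. Since $q^\ell > 0$ everywhere and $\XX$ is compact with $q$ continuous, $q^\ell$ is bounded below by some $\delta > 0$ on $\XX \times \XX$; hence the $\ell$-step transition satisfies a Doeblin minorization $q^\ell(x,\cdot) \geq \delta\, \mathbbm 1_\XX(\cdot)$ (or $\geq \delta\, \lambda(\cdot)$ for a fixed finite reference measure $\lambda$, e.g.\ normalized Lebesgue on a compact subset of $\RR^n$, or counting measure in the finite case). Standard Markov chain theory then gives a uniform geometric contraction of the $\ell$-step operator in total variation, so there is at most one stationary distribution for $P^\ell$, and therefore at most one for $P$. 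Equivalently, one can note $\|P^\ell f - P^\ell g\|_{TV} \leq (1-\delta')\|f-g\|_{TV}$ for densities $f,g$, apply Banach's fixed point theorem to $P^\ell$ on the (complete) set of probability densities, and observe that the unique fixed point of $P^\ell$ must also be fixed by $P$ since $P$ commutes with $P^\ell$ and maps the fixed-point set of $P^\ell$ (a singleton) to itself.

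For the positivity and continuity claims, I would write $\pi = P^\ell \pi$ (using stationarity iterated $\ell$ times), so that for every $y \in \XX$,
\begin{equation*}
    \pi(y) = \int_\XX q^\ell(x, y)\, \pi(x)\, \diff x .
\end{equation*}
Since $q^\ell > 0$ everywhere on the compact product $\XX \times \XX$, and $\pi$ is a probability density (hence $\pi \geq 0$ with $\int \pi = 1$, so $\pi$ is not identically zero), the integrand is strictly positive on a set of positive $\pi$-measure, giving $\pi(y) > 0$ for all $y$. Continuity follows from continuity of $q^\ell$ (a finite composition of the continuous kernel $q$, with the compactness of $\XX$ ensuring the defining integrals are finite and well-behaved) together with a dominated-convergence / uniform-continuity argument: $q^\ell$ is uniformly continuous on the compact set $\XX \times \XX$, so $y \mapsto q^\ell(\cdot, y)$ is continuous into $L_\infty$, and integrating against the finite measure $\pi$ preserves continuity. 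In the finite-state case all of this is immediate: $P$ is a stochastic matrix, $P^\ell$ is strictly positive, and the Perron--Frobenius theorem delivers a unique, strictly positive stationary vector.

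The main obstacle I anticipate is purely a matter of keeping the functional-analytic bookkeeping clean in the general (uncountable) compact case: one must fix an appropriate dominating reference measure so that ``density'' is meaningful, verify that the Markov operator $P$ genuinely maps $L_1$ densities to $L_1$ densities given only continuity of $q$ and compactness of $\XX$, and make sure the Doeblin minorization is stated with respect to that same reference measure. None of these steps is deep, but the statement of the lemma quietly bundles three conclusions (uniqueness, positivity, continuity), and the cleanest route is to first nail down positivity of $q^\ell$ on the compact product space, then read off positivity of $\pi$ from $\pi = P^\ell\pi$, then get continuity from continuity of $q^\ell$, and finally get uniqueness from the minorization/contraction argument.
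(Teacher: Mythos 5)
Your proposal is correct, and two of its three components---positivity and continuity of $\pi$---coincide with the paper's argument almost verbatim: positivity is read off from $\pi(y) = \int q^\ell(x,y)\pi(x)\,\diff x$ together with $q^\ell > 0$ and the fact that a density cannot vanish $\pi$-almost everywhere, and continuity follows from a dominated convergence argument using boundedness of the kernel on the compact product space (the paper applies this to the one-step identity $\pi(y)=\int q(x,y)\pi(x)\,\diff x$, which avoids having to first verify continuity of $q^\ell$; your route through $q^\ell$ works but adds that small extra verification). Where you genuinely diverge is on uniqueness. The paper deduces $\pi$-irreducibility of $\{X_t\}$ directly from pointwise positivity of $q^\ell$ and then cites theorem~10.0.1 of \cite{meyn2009markov} for uniqueness of the invariant measure. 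You instead upgrade pointwise positivity to a uniform lower bound $q^\ell \geq \delta > 0$ (using continuity of $q^\ell$ and compactness of $\XX\times\XX$), obtain a Doeblin minorization, and conclude via a total-variation contraction and Banach's fixed point theorem applied to $P^\ell$, passing back to $P$ by the commutation argument. Your route is more self-contained and in fact delivers more (geometric ergodicity), but it leans harder on compactness and continuity and requires the bookkeeping you flag about the reference measure; the paper's irreducibility argument is weaker in its hypotheses---it needs only pointwise positivity of $q^\ell$---at the cost of invoking general Markov chain theory as a black box. Both are valid proofs of the lemma as stated.
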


\begin{proof}
    Regarding uniqueness, the condition that $q^\ell > 0$ for some positive
    $\ell \in \NN$ implies irreducibility of $\{X_t\}$, which in turn implies
    uniqueness of the stationary distribution.  See, e.g.,
    \cite{meyn2009markov}, theorem~10.0.1.\footnote{To verify irreducibility, pick any $A$ with $\int_A \pi(x) \diff x >
    0$.  Evidently $A$ has positive Lebesgue measure, so, with $\ell$ chosen
    so that $q^\ell$ is everywhere positive, we have $\int_A q^\ell(x, y)
    \diff y > 0$ for all $x \in \XX$.  Thus, $A$ is visited with positive
    probability from any state in $\ell$ steps. In other words,
    $\pi$-irreducibility holds.}
    Regarding positivity, suppose to the contrary, there exists a $y$ with
    $\pi(y) = 0$. Since $\pi$ is a stationary density, this means that $\pi(y)
    = \int q^\ell(x, y) \pi(x) \diff x = 0$.  But $q^\ell$ is everywhere positive and
    $\pi$ is a density, and hence positive on a set of positive measure.
    Thus, the integral must be positive. Contradiction.
    Finally, take $y_n \to y$ in $\XX$ and observe that
    \begin{equation*}
        \pi(y_n) = \int q(x, y_n) \pi(x) \diff x
        \to \int q(x, y) \pi(x) \diff x
        = \pi(y)
    \end{equation*}
    as $n \to \infty$.  The convergence of the integrals is due to continuity
    of $q$ and the Dominated Convergence Theorem.  The latter can be employed
    because $q$ is bounded on $\XX \times \XX$ by compactness of $\XX$ and
    continuity of $q$.  Hence $\pi$ is continuous on $\XX$.
\end{proof}

\begin{lemma}
    \label{l:wpok}
    Regarding the operator $K$, the following statements are true:
    \begin{enumerate}
        \item $K$ is a bounded linear operator on $L_1(\pi)$ that maps $\cC$ to itself.
        \item $Kg$ is continuous at each $g \in \cC$.
        \item $Kg \not= 0$ whenever $g \in \cC$ and $g \not= 0$.
        \item $Kg \gg 0$ whenever $g \in \cC$ and $g \gg 0$.
    \end{enumerate}
\end{lemma}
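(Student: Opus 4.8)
The plan is to read all four claims off the kernel representation $Kg(x)=\int k(x,y)g(y)\diff y$, where $k$ is the function in \eqref{eq:defkern}, using two structural facts. First, $\XX$ is compact, so a continuous function on $\XX$ or on $\XX\times\XX$ attains a finite maximum and, if everywhere positive, a strictly positive minimum. Second, by Lemma~\ref{l:posd} the stationary density $\pi$ is continuous and everywhere positive, so $\underline{\pi}:=\min_{\XX}\pi>0$ and $\bar{\pi}:=\max_{\XX}\pi<\infty$; consequently $\int|g|\diff y\leq\|g\|/\underline{\pi}$ for every $g\in L_1(\pi)$, i.e.\ $L_1(\pi)$ embeds into $L_1$ with respect to Lebesgue measure on $\XX$. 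As a preliminary step I would record that $k$ is continuous on $\XX\times\XX$ --- this follows from continuity of $\kappa$ and $q$ together with integrability of $\epsilon\mapsto\exp[(1-\gamma)\kappa(x,y,\epsilon)]$ --- hence bounded, $0\leq k\leq\bar{k}<\infty$. Writing $k(x,y)=p(x,y)\,q(x,y)$ with $p(x,y):=\int\exp[(1-\gamma)\kappa(x,y,\epsilon)]\,\nu(\diff\epsilon)$, the factor $p$ is continuous and strictly positive, so $p\geq\underline{p}>0$ on $\XX\times\XX$; since $k\geq\underline{p}\,q$ pointwise, iterating gives $k^\ell\geq\underline{p}^{\,\ell}q^\ell$, and as $q^\ell$ is continuous (a routine induction) and everywhere positive on the compact set $\XX\times\XX$, it follows that $k^\ell\geq c>0$ on $\XX\times\XX$ for some constant $c$.

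For (a), linearity is immediate. For boundedness, Tonelli's theorem gives $\|Kg\|\leq\int|g(y)|\bigl(\int k(x,y)\pi(x)\diff x\bigr)\diff y\leq\bar{k}\int|g(y)|\diff y\leq(\bar{k}/\underline{\pi})\|g\|$, using $\int\pi=1$ and the embedding above; so $\|K\|\leq\bar{k}/\underline{\pi}<\infty$. That $K$ maps $\cC$ into itself then follows from $Kg\in L_1(\pi)$ together with $Kg\gg0$, the latter being claim (d). For (b), fix $g\in\cC$ and let $x_n\to x$ in $\XX$; then $k(x_n,y)\to k(x,y)$ for each $y$ by continuity of $k$, while $|k(x_n,y)g(y)|\leq\bar{k}\,|g(y)|$ is Lebesgue integrable over $\XX$, so the Dominated Convergence Theorem yields $Kg(x_n)\to Kg(x)$; as $\XX$ is metric, $x\mapsto Kg(x)$ is continuous on $\XX$.

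For (d), suppose $g\gg0$. Then $\{g\leq0\}$ has $\pi$-measure zero, hence Lebesgue measure zero because $\pi>0$, so for every $x\in\XX$ we have $Kg(x)\geq\underline{p}\int q(x,y)g(y)\diff y=\underline{p}\int_{\{g>0\}}q(x,y)g(y)\diff y$. Here the integrand is strictly positive on $\{g>0\}\cap\{q(x,\cdot)>0\}$, a set of positive Lebesgue measure since $\{g>0\}$ has full measure while $\{q(x,\cdot)>0\}$ has positive measure ($q(x,\cdot)$ being a probability density); hence $Kg(x)>0$ for every $x$, so $Kg\gg0$. Claim (c) is then immediate from (d) under the convention $\cC=\{g\in L_1(\pi):g\gg0\}$; and more robustly, if merely $g\geq0$ with $g\neq0$, then $\int g\diff y\geq\|g\|/\bar{\pi}>0$, so the $\ell$-step bound gives $K^\ell g(x)\geq c\int g(y)\diff y>0$ for every $x$, whence $K^\ell g\neq0$ and therefore $Kg\neq0$, since $K^\ell g=K^{\ell-1}(Kg)$.

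The step I expect to be the main obstacle is the positivity bookkeeping in (d): converting ``$q(x,\cdot)$ is a probability density'' into ``$\int_{\{g>0\}}q(x,y)g(y)\diff y>0$ for \emph{every} $x$'' requires ruling out, uniformly in $x$, that all the mass of $q(x,\cdot)$ sits on the Lebesgue-null set $\{g\leq0\}$, which is exactly where everywhere-positivity of $\pi$ (Lemma~\ref{l:posd}) enters essentially. The preliminary continuity and boundedness claim for $k$ and $p$ also warrants care, since it rests on the standing integrability of $\epsilon\mapsto\exp[(1-\gamma)\kappa(x,y,\epsilon)]$ rather than on continuity of $\kappa$ alone.
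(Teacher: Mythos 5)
Your proposal is correct and follows essentially the same route as the paper: boundedness from the uniform bound on $k$ together with the positive lower bound on $\pi$ from lemma~\ref{l:posd}, continuity of $Kg$ by dominated convergence, and (c)--(d) from strict positivity of the exponential factor combined with everywhere-positivity of $q^\ell$. The only cosmetic difference is in (c), where you argue directly via the uniform lower bound $k^\ell \geq \underline{p}^{\,\ell} q^\ell \geq c > 0$ while the paper argues by contradiction from $\int_B q^\ell(x,y)\,\diff y = 0$; both hinge on exactly the same facts, and your explicit treatment of (d) merely fills in a step the paper declares immediate.
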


\begin{proof}
    Regarding claim (a), $k$ is continuous and hence bounded by some constant
    $M$ on $\XX$, while $\pi$ is positive and continuous on a compact set, and
    hence bounded below by some positive constant $\delta$.  This yields, for
    arbitrary $f \in L_1(\pi)$,
    \begin{equation}
        \label{eq:kib}
        |Kf(x)| =
        \left| \int k(x, y) f(y) \diff y \right|
        \leq M \int \frac{|f(y)|}{\pi(y)} \pi(y) \diff y
        \leq  \frac{M}{\delta} \|f\|.
    \end{equation}
    It follows directly that $K$ is a bounded linear operator on $L_1(\pi)$.

    Regarding continuity, fix $g \in \cC$, $x \in \XX$ and $x_n \to x$.  Using
    the inequality from \eqref{eq:kib}, we have
    \begin{equation*}
        k(x_n, y) g(y)
        \leq M \frac{g(y)}{\pi(y)} \pi(y)
        \leq  \frac{M}{\delta} g(y)\pi(y).
    \end{equation*}
    Since $g \in L_1(\pi)$, the function on the right hand side is
    integrable with respect to Lebesgue measure, so we can apply the
    Dominated Convergence Theorem to obtain
    \begin{equation*}
        \lim_{n \to \infty} Kg(x_n)
        = \int \lim_{n \to \infty} k(x_n, y) g(y) \diff y
        = Kg(x).
    \end{equation*}
    In particular, $Kg$ is continuous at any $x \in \XX$.

    Regarding claim (c), suppose that, to the contrary, we have
    $Kg = 0$ for some nonzero $g \in \cC$.  Let $B = \{g > 0\}$.
    Since $g$ is nonzero, we have $\pi(B) > 0$.  Since $Kg = 0$,
    it must be the case that $\int_B k(x, y) \diff y = 0$ for any $x \in \XX$.
    But then $\int_B q(x, y) \diff y = 0$ for any $x \in \XX$.
    A simple induction argument shows that this extends to the $n$-step
    kernels, so that, in particular, $\int_B q^\ell (x, y) \diff y
    = 0$ for all $x \in \XX$.  The last equality contradicts $q^\ell > 0$, as guaranteed by
    assumption~\ref{a:i}.

    Part (d) is immediate from $Kg(x) = \int g(y) k(x, y) \diff y$ and \eqref{eq:defkern}.
\end{proof}

The next lemma discusses irreducibility of $K$ as a linear operator on
$L_1(\pi)$.  See p.~262 of \cite{meyer2012banach} for the definition
and further discussion.

\begin{lemma}
    \label{l:irr}
    The operator $K$ is irreducible and $K^2$ is compact.
\end{lemma}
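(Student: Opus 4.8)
The plan is to prove the two assertions separately, with a common first step: showing that some power of $K$ has an everywhere strictly positive kernel. Using the notation $k^i$ for the kernel of $K^i$ introduced above, I would first check that $k^\ell > 0$ on $\XX \times \XX$, where $\ell$ is as in assumption~\ref{a:i}. Indeed, $k^\ell(x,y)$ is the iterated integral over $\XX^{\ell-1}$ of products $k(z_0,z_1)\cdots k(z_{\ell-1},z_\ell)$ with $z_0 = x$ and $z_\ell = y$, and by \eqref{eq:defkern} each factor $k(u,v)$ is a strictly positive number (the $\nu$-integral of an exponential) times $q(u,v)$. Hence the integrand is strictly positive exactly on the set where $q(z_0,z_1)\cdots q(z_{\ell-1},z_\ell) > 0$, and that set has positive Lebesgue measure in $\XX^{\ell-1}$ because the integral of $q(z_0,z_1)\cdots q(z_{\ell-1},z_\ell)$ over $\XX^{\ell-1}$ equals $q^\ell(x,y)$, which is positive by assumption~\ref{a:i}. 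Therefore $k^\ell(x,y) > 0$ for every $(x,y) \in \XX \times \XX$.

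For irreducibility I would appeal to the characterization via invariant closed ideals (cf.\ p.~262 of \cite{meyer2012banach}): every closed ideal of $L_1(\pi)$ has the form $I_A := \{f \in L_1(\pi) : f = 0 \text{ $\pi$-a.e.\ on } A^c\}$ for some measurable $A \subseteq \XX$, and $K$ is irreducible precisely when $I_A$ is $K$-invariant only when $\pi(A) \in \{0,1\}$. Suppose instead that $0 < \pi(A) < 1$ and $K I_A \subseteq I_A$. Then $K^\ell I_A \subseteq I_A$, so, taking $f = \1_A \in I_A$, we would need $K^\ell \1_A(x) = \int_A k^\ell(x, y)\diff y = 0$ for every $x \in A^c$. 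But by lemma~\ref{l:posd} $\pi$ is everywhere positive, so both $A$ and $A^c$ have positive Lebesgue measure; combined with $k^\ell > 0$ everywhere, this forces the integral to be strictly positive for each $x$, a contradiction. Hence $K$ is irreducible.

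For compactness I would show that $K$ carries the unit ball of $L_1(\pi)$ into a relatively compact subset of $C(\XX)$; since the supremum norm dominates $\|\cdot\|$ on $\XX$, compactness of $K$ as an operator on $L_1(\pi)$ follows, and in particular $K^2$ is compact. Write $Kf(x) = \int \tilde k(x,y) f(y)\, \pi(y)\diff y$ with $\tilde k(x,y) := k(x,y)/\pi(y)$. By lemma~\ref{l:posd}, $\pi$ is continuous and bounded below by some $\delta > 0$ on the compact set $\XX$, and $k$ is continuous, so $\tilde k$ is continuous, hence bounded and uniformly continuous on $\XX \times \XX$. Uniform boundedness of $\{Kf : \|f\| \leq 1\}$ in $C(\XX)$ is then immediate, and equicontinuity follows from $|Kf(x) - Kf(x')| \leq \sup_{y \in \XX} |\tilde k(x,y) - \tilde k(x',y)|\,\|f\|$, the right-hand side being controlled by the modulus of continuity of $\tilde k$. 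Arzelà--Ascoli then gives the desired relative compactness in $C(\XX)$.

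The main obstacle is essentially bookkeeping rather than depth: the one place requiring care is the claim $k^\ell > 0$, where one must keep track of the strictly positive exponential factors multiplying the transition densities inside the $\ell$-fold integral and justify the identification of $\int q(z_0,z_1)\cdots q(z_{\ell-1},z_\ell)\diff z$ with $q^\ell(x,y)$ via Tonelli under the standing integrability assumptions on $\kappa$. The remaining steps are routine applications of the theory of positive operators on Banach lattices and of the Arzelà--Ascoli theorem.
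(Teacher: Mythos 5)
Your proof is correct, but it follows a genuinely different route from the paper on both halves of the lemma. For irreducibility, the paper works with the one-step kernel: it shows that if some $A$ with $0<\pi(A)<1$ violated the Meyer--Nieberg criterion, then $A$ would be absorbing for $q$, and invokes $\pi$-irreducibility of the chain (proposition~4.2.3 of \cite{meyn2009markov}) to force $\pi(A)=1$. You instead establish everywhere-positivity of the $\ell$-step kernel $k^\ell$ and rule out nontrivial invariant closed ideals directly; this is a clean and self-contained alternative, though you should note that the identity $\int \prod_i q(z_i,z_{i+1})\,\diff z = q^\ell(x,y)$ and the finiteness of the exponential factors (i.e.\ the standing assumption that $k$ is finite and continuous, which the paper also uses in lemma~\ref{l:wpok}) are doing real work in the claim $k^\ell>0$. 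For compactness, the paper only proves that $K$ is \emph{weakly} compact (via a uniform integrability estimate) and then uses the Dunford--Pettis property of $L_1$ (theorem~9.9 of \cite{schaeferbanach}) to conclude that $K^2$ is compact. Your Arzel\`a--Ascoli argument, exploiting uniform continuity of $\tilde k(x,y)=k(x,y)/\pi(y)$ on the compact set $\XX\times\XX$, shows the stronger statement that $K$ itself is compact (as it factors continuously through $C(\XX)$), from which compactness of $K^2$ is immediate. That stronger conclusion would even let one apply theorem~\ref{t:lsr2} with $i=1$; the paper's weak-compactness route is slightly more economical in its hypotheses (it needs only boundedness of $k$ and the lower bound on $\pi$), but since the paper assumes continuity of $k$ anyway, nothing is lost by your approach.
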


\begin{proof}
    Consider first irreducibility of $K$.
    By positivity of $\kappa$ and the condition for irreducibility of kernel
    operators on p.~262 of \cite{meyer2012banach}, it suffices to show that,
    for any Borel set $A$ of $\XX$ with $0 < \pi(A) < 1$, there exists a Borel
    set $B$ of $\XX$ with
    \begin{equation}
        \label{eq:qti}
        0 < \pi(B) < 1
        \quad \text{and} \quad
        \int_B \int_A q(x, y) \diff x \diff y > 0.
    \end{equation}
    To verify \eqref{eq:qti}, suppose to the contrary that we can choose
    $A \subset \XX$ with $0 < \pi(A) < 1$ and
    \begin{equation*}
        \int_{A^c} \int_A q(x, y) \diff x \diff y
        = \int_A \left[ \int_{A^c} q(x, y) \diff y \right] \diff x
        = 0.
    \end{equation*}

    Then $\int_A  \int_{A^c} q(x, y) \diff y \pi(x) \diff x = 0$.  Hence $A$
    is absorbing for $q$, in which case $\pi(A) = 1$ by $\pi$-irreducibility
    of $q$ and proposition~4.2.3 of \cite{meyn2009markov}.  Contradiction.

    Regarding compactness, theorem~9.9 of \cite{schaeferbanach} implies that
    $K^2$ will be compact whenever $K$ is weakly compact, which requires that the
    image of the unit ball $B_1$ in $L_1(\pi)$ under $K$ is relatively compact
    in the weak topology.  To prove this it suffices to to show that, given
    $\epsilon > 0$, there exists a $\delta > 0$ such that $\int_A K|f| \diff
    \pi < \epsilon$ whenever $f \in B_1$ and $\pi(A) < \delta$.  This is true
    because $k$ is continuous and hence bounded on $\XX$, yielding
    \begin{equation*}
        \int_A \int k(x, y) |f(y)| \diff y \pi(x) \diff x
        = \int \int_A k(x, y) \pi(x) \diff x |f(y)| \diff y
        \leq M \pi(A) \int |f(y)| \diff y
    \end{equation*}
    for some constant $M$.  By continuity and positivity of $\pi$ on $\XX$
    we have $\pi \geq a$ for some constant $a > 0$, and hence $\int |f(y)|
    \diff y \leq a \| f \| \leq a$.  Hence $\int_A K|f| \diff \pi < \epsilon$
    whenever $\pi(A) < \epsilon / (a M)$.
\end{proof}

\begin{proposition}
    \label{p:lsr}
    $\Lambda$ is well defined and satisfies $\Lambda = \beta \, r(K)^{1/\theta}$.
\end{proposition}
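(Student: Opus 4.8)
The plan is to reduce the identity $\Lambda = \beta\, r(K)^{1/\theta}$ to the local spectral radius theorem~\ref{t:lsr2}, after first expressing the risk-adjusted long-run mean consumption growth rate in terms of the iterates of $K$ acting on the constant function $\1$. Concretely, the first step is to establish, for every $n \in \NN$, the identity
\begin{equation*}
    \int K^n \1 \, \diff \pi = \left[ \rR \left( \frac{C_n}{C_0} \right) \right]^{1-\gamma}.
\end{equation*}
This follows exactly as in the finite-state discussion around \eqref{eq:gsi}: since $(C_n/C_0)^{1-\gamma} = \exp\{(1-\gamma)\sum_{t=1}^n \kappa(X_{t-1}, X_t, \epsilon_t)\}$ by \eqref{eq:kappa}, a routine induction on $n$ using the definition of $K$ in \eqref{eq:defk} gives $K^n \1(x) = \EE_x (C_n/C_0)^{1-\gamma}$; integrating against $\pi$ and applying the Law of Iterated Expectations yields $\int K^n \1 \, \diff \pi = \EE (C_n/C_0)^{1-\gamma}$, which equals $[\rR(C_n/C_0)]^{1-\gamma}$ by the definition of the unconditional Kreps--Porteus operator.

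Next I would apply theorem~\ref{t:lsr2} with $M = K$ and $h = \1$. Lemma~\ref{l:irr} gives compactness of $K^2$, lemma~\ref{l:wpok}(d) gives $Kf \gg 0$ whenever $f \gg 0$, and the constant function $\1$ lies in $\cC$ (so $\1 \gg 0$); hence the hypotheses of theorem~\ref{t:lsr2} are satisfied and $\{\int K^n \1 \, \diff \pi\}^{1/n} \to r(K)$ as $n \to \infty$. Combining this with the identity above and the definition \eqref{eq:drc} of $\lrM$, and using continuity of $t \mapsto t^{1/(1-\gamma)}$ on $(0,\infty)$,
\begin{equation*}
    \lrM = \lim_{n \to \infty} \left\{ \int K^n \1 \, \diff \pi \right\}^{\frac{1}{n(1-\gamma)}} = r(K)^{1/(1-\gamma)}.
\end{equation*}

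To finish, and to settle well-definedness of $\lrM$ and hence of $\Lambda$, I would verify $0 < r(K) < \infty$. Finiteness is immediate from boundedness of $K$ on $L_1(\pi)$ (lemma~\ref{l:wpok}(a)). For positivity, $K\1$ is continuous (lemma~\ref{l:wpok}(b)) and strictly positive $\pi$-almost everywhere (lemma~\ref{l:wpok}(d)); since $\pi$ is everywhere positive on the compact set $\XX$ (lemma~\ref{l:posd}), $K\1$ is bounded below by a constant $\delta > 0$, so iterating the positive operator $K$ gives $K^n \1 \geq \delta^n \1$ and thus $\{\int K^n \1 \, \diff \pi\}^{1/n} \geq \delta$, whence $r(K) \geq \delta > 0$. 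With $\lrM = r(K)^{1/(1-\gamma)} \in (0,\infty)$ in hand, the definition \eqref{eq:kc} yields
\begin{equation*}
    \Lambda = \beta\, \lrM^{1-1/\psi} = \beta\, r(K)^{(1-1/\psi)/(1-\gamma)} = \beta\, r(K)^{1/\theta},
\end{equation*}
since $\theta = (1-\gamma)/(1-1/\psi)$.

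The main obstacle is the correct invocation of theorem~\ref{t:lsr2}: one must pair the strong-positivity statement of lemma~\ref{l:wpok}(d) with the compactness of $K^2$ from lemma~\ref{l:irr} and confirm that the constant function $\1$ is an admissible (strictly positive) seed vector. Once that is in place, the remaining work is bookkeeping with exponents, keeping track of the sign of $1-\gamma$ (so that roots of the positive quantities $\int K^n \1 \, \diff \pi$ are always well defined) and recording that $r(K) > 0$ so that the negative powers arising when $\psi < 1$ make sense.
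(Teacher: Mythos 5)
Your proof is correct and follows essentially the same route as the paper's: establish $K^n\1(x)=\EE_x(C_n/C_0)^{1-\gamma}$ by induction, integrate against $\pi$, invoke theorem~\ref{t:lsr2} with $h=\1$ (using lemmas~\ref{l:irr} and~\ref{l:wpok}), and then do the exponent bookkeeping with $\theta=(1-\gamma)/(1-1/\psi)$. Your additional explicit check that $0<r(K)<\infty$ is a sound refinement (the paper establishes $r(K)>0$ separately via Krein--Rutman in theorem~\ref{t:sri}), but it does not change the argument's structure.
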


\begin{proof}
    Since $K^i$ is compact for some $i$ and maps positive functions into
    positive functions (see lemmas~\ref{l:irr} and~\ref{l:wpok}), we can apply theorem~\ref{t:lsr2} to
    $\1 \equiv 1$ to obtain $r(K) = \lim_{n \to \infty} \| K^n \1 \|^{1/n}$.
    An inductive argument based on \eqref{eq:kappa} shows that, for each $n$
    in $\NN$, we have $K^n \1 (x)
        = \EE_x \left( C_n/C_0 \right)^{1-\gamma}$.
    Hence, by the law of iterated expectations,
    \begin{equation}
        \label{eq:lim1}
        \| K^n \1 \|^{1/n}
        =
        \
        \left\{
            \EE \left( \frac{C_n}{C_0} \right)^{1-\gamma}
        \right\}^{1/n}
        \\
        =
        \left\{
            \rR \left( \frac{C_n}{C_0} \right)
        \right\}^{(1-\gamma)/n}.
    \end{equation}
    Since $r(K) = \lim_{n \to \infty} \| K^n \1 \|^{1/n}$, this yields
    \begin{equation*}
        r(K)
         =
        \lim_{n \to \infty} \,
        \left\{
            \rR \left( \frac{C_n}{C_0} \right)
        \right\}^{(1-\gamma)/n}
        =
        \lrM^{1 - \gamma}.
    \end{equation*}
    Because $\theta := (1-\gamma)/(1 - 1/\psi)$, we now have
    \begin{equation*}
        \beta r(K)^{1/\theta}
        = \beta \lrM^{1 - 1/\psi}
        = \Lambda.
        \qedhere
    \end{equation*}
\end{proof}

\begin{theorem}
    \label{t:sri}
    The spectral radius $r(K)$ of $K$ is strictly positive.  Moreover, there
    exists an eigenfunction $e$ of $K$ satisfying
    \begin{equation}
        \label{eq:pie}
        K e = r(K) e
        \quad \text{and} \quad
        e \gg 0.
    \end{equation}
    The function $e$ is continuous everywhere on $\XX$.
\end{theorem}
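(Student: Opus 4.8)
The plan is to establish strict positivity of $r(K)$ by an elementary comparison, then to extract a positive eigenfunction from the Krein--Rutman theorem applied to $K^2$, and finally to upgrade nonnegativity to $e \gg 0$ using that the $\ell$-step kernel is everywhere positive. For the first step I would write $k(x, y) = b(x, y)\, q(x, y)$ with $b(x, y) := \int \exp[(1-\gamma)\kappa(x, y, \epsilon)]\,\nu(\diff \epsilon)$, and note that $b > 0$ on $\XX \times \XX$ since the integrand is strictly positive and $\nu$ is a probability measure. Hence $K\1(x) = \int k(x, y)\,\diff y > 0$ for every $x$, because $q(x, \cdot)$ is a density; and since $\1 \in \cC$, lemma~\ref{l:wpok}(b) makes $K\1$ continuous, so compactness of $\XX$ gives $K\1 \geq c\1$ for some constant $c > 0$. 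Positivity and linearity of $K$ then yield $K^n\1 \geq c^n\1$ by induction, whence $\|K^n\1\| \geq c^n$. Combined with the identity $r(K) = \lim_n \|K^n\1\|^{1/n}$ proved in proposition~\ref{p:lsr} (via theorem~\ref{t:lsr2}), this gives $r(K) \geq c > 0$.

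For the eigenfunction I would apply the Krein--Rutman theorem for compact positive operators (see, e.g., \cite{schaeferbanach}) to $K^2$, which is compact and positive by lemma~\ref{l:irr} and has $r(K^2) = r(K)^2 > 0$; this furnishes $\tilde e \in L_1(\pi)$ with $\tilde e \geq 0$, $\tilde e \neq 0$ and $K^2\tilde e = r(K)^2\tilde e$. Setting $e := K\tilde e + r(K)\tilde e$ produces an eigenfunction of $K$ itself, since $Ke = K^2\tilde e + r(K)K\tilde e = r(K)(r(K)\tilde e + K\tilde e) = r(K)e$; moreover $e \geq 0$, and $e \neq 0$ because $e = 0$ would force $K\tilde e = -r(K)\tilde e \leq 0$, hence $K\tilde e = 0$ and, since $r(K) > 0$, $\tilde e = 0$, a contradiction.

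It remains to show $e \gg 0$ and continuity of $e$. An induction on $i$ shows that $k^i(x, y) > 0$ whenever $q^i(x, y) > 0$: the case $i = 1$ is $k = bq$ with $b > 0$, and the inductive step follows by applying the hypothesis on the set $\{z : q(x, z) > 0,\ q^{i-1}(z, y) > 0\}$, which has positive Lebesgue measure whenever $q^i(x, y) > 0$. Taking $i = \ell$ and invoking assumption~\ref{a:i} gives $k^\ell > 0$ everywhere on $\XX \times \XX$; and since $e \geq 0$ is not Lebesgue-null (being not $\pi$-null and $\pi$ everywhere positive by lemma~\ref{l:posd}), we get $K^\ell e(x) = \int k^\ell(x, y)e(y)\,\diff y > 0$ for every $x$. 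As $K^\ell e = r(K)^\ell e$ with $r(K) > 0$, it follows that $e \gg 0$, so $e \in \cC$; then $Ke$ is continuous by lemma~\ref{l:wpok}(b), and $e = r(K)^{-1}Ke$ inherits continuity on all of $\XX$. I expect the main obstacle to be the spectral step of the second paragraph: $K$ is only weakly compact rather than norm-compact, so the compact-operator form of Krein--Rutman must be applied to $K^2$ and the eigenfunction transferred back via $e = K\tilde e + r(K)\tilde e$, with the nonvanishing of this transferred function resting on the strict positivity of $r(K)$ obtained first; the remaining arguments are routine given lemmas~\ref{l:wpok}, \ref{l:irr} and~\ref{l:posd}.
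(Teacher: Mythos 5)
Your proof is correct, but it takes a more elementary and self-contained route than the paper's. The paper disposes of everything except continuity in one line by citing theorem~4.1.4 and lemma~4.2.9 of \cite{meyer2012banach}: for an irreducible positive operator with a compact power (lemma~\ref{l:irr}), the spectral radius is positive and the Perron--Frobenius eigenvector is quasi-interior, so $r(K)>0$ and $e \gg 0$ arrive packaged together. You instead (i) obtain $r(K)>0$ from the pointwise bound $K\1 \geq c\1$ together with the local spectral radius identity of theorem~\ref{t:lsr2} (one could equally skip that theorem and use Gelfand's formula directly, since $\|K^n\| \geq \|K^n\1\| \geq c^n$); (ii) invoke only the classical Krein--Rutman theorem for the compact positive operator $K^2$ and transfer the eigenvector back to $K$ via $e = K\tilde e + r(K)\tilde e$, with nondegeneracy of the transferred function resting on the previously established $r(K)>0$; and (iii) upgrade $e \geq 0$ to $e \gg 0$ by showing $k^\ell>0$ everywhere and exploiting $K^\ell e = r(K)^\ell e$, which also uses lemma~\ref{l:posd} to pass between $\pi$-null and Lebesgue-null sets. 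All three steps are sound. Notably, your argument bypasses the lattice-irreducibility of $K$ proved in lemma~\ref{l:irr}, relying only on compactness of $K^2$ and the everywhere-positivity of $q^\ell$ from assumption~\ref{a:i}; what the paper's citation buys is brevity, while your version makes explicit how assumption~\ref{a:i} forces the eigenfunction to be quasi-interior. The final continuity step ($e \gg 0$ puts $e$ in $\cC$, so $e = r(K)^{-1}Ke$ is continuous by lemma~\ref{l:wpok}) coincides with the paper's.
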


\begin{proof}
    Theorem~4.1.4 and lemma~4.2.9 of \cite{meyer2012banach} together with the
    irreducibility and compactness properties of $K$ obtained in
    lemma~\ref{l:irr} yield positivity of $r(K)$ and existence of the positive
    eigenfunction in \eqref{eq:pie}.  Lemma~\ref{l:wpok} implies that $e$ is
    continuous, since $e \in \cC$ and $e = (K e) / r(K)$.
\end{proof}

In what follows, $e$ in \eqref{eq:pie} is called the Perron--Frobenius
eigenfunction.

Since $e$ is positive and continuous, the constants $\underline e := \min_{x
\in \XX} e(x)$ and $\bar e := \max_{x \in \XX} e(x)$ are finite and strictly
positive.  These facts are now used to study $A$ from \eqref{eq:defa}.

\begin{lemma}
    \label{l:eb}
    Let $e$ be the Perron--Frobenius eigenfunction of $K$.  If
    \begin{equation}
        \label{eq:cpk}
        \lim_{t \downarrow 0} \frac{\phi(t)}{t} \, r(K) > 1
        \quad \text{and} \quad
        \lim_{t \uparrow \infty} \frac{\phi(t)}{t} \, r(K) < 1,
    \end{equation}
    then there exist positive constants $c_1 < c_2$ with the following
    properties:
    \begin{enumerate}
        \item If $\, 0 < c \leq c_1$ and $f = ce$, then there exists a
            $\delta_1
            > 1$ such that $Af \geq \delta_1 f$
        \item If $\, c_2 \leq c < \infty$ and $f = ce$, then  there exists a
            $\delta_2 < 1$ such that $Af \leq \delta_2 f$.
    \end{enumerate}
\end{lemma}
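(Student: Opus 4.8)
The plan is to reduce the two operator inequalities to a single scalar comparison by exploiting that $e$ is a $K$-eigenfunction. The key identity is that, for any constant $c > 0$, linearity of $K$ together with $Ke = r(K)e$ (Theorem~\ref{t:sri}) gives $K(ce) = c\,r(K)\,e$, so that, writing $f := ce$ and using $Ag = \phi(Kg)$ from \eqref{eq:defa},
\begin{equation*}
    A f(x) = \phi\big(c\,r(K)\,e(x)\big)
    = \frac{\phi\big(c\,r(K)\,e(x)\big)}{c\,r(K)\,e(x)}\; r(K)\; c\,e(x)
    \qquad (x \in \XX).
\end{equation*}
Hence $Af \geq \delta_1 f$ will hold as soon as $\frac{\phi(s)}{s}\,r(K) \geq \delta_1$ at every point of the form $s = c\,r(K)\,e(x)$, and $Af \leq \delta_2 f$ as soon as $\frac{\phi(s)}{s}\,r(K) \leq \delta_2$ at every such point. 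The argument is then entirely about controlling the range of $s = c\,r(K)\,e(x)$ as $x$ varies over $\XX$, and here I would invoke the two-sided bounds $0 < \underline e \leq e(x) \leq \bar e < \infty$, which hold because $e$ is continuous on the compact set $\XX$ (again Theorem~\ref{t:sri}), together with $r(K) > 0$: for $0 < c \leq c_1$ one has $s \in (0,\, c_1 r(K)\bar e\,]$, while for $c \geq c_2$ one has $s \in [\,c_2 r(K)\underline e,\, \infty)$.

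For part (a): by the left-hand inequality in \eqref{eq:cpk}, $\lim_{t\downarrow 0}\frac{\phi(t)}{t}\,r(K) > 1$, so I choose $\delta_1$ with $1 < \delta_1 < \lim_{t\downarrow 0}\frac{\phi(t)}{t}\,r(K)$ and then $t_0 > 0$ with $\frac{\phi(s)}{s}\,r(K) \geq \delta_1$ for all $s \in (0, t_0]$. Setting $c_1 := t_0 /(r(K)\,\bar e)$, any $c \in (0, c_1]$ and any $x \in \XX$ give $0 < c\,r(K)\,e(x) \leq t_0$, and the displayed identity yields $Af(x) \geq \delta_1 f(x)$. Part (b) is the mirror image: by the right-hand inequality in \eqref{eq:cpk} I choose $\delta_2$ with $\lim_{t\uparrow\infty}\frac{\phi(t)}{t}\,r(K) < \delta_2 < 1$ and $T$ with $\frac{\phi(s)}{s}\,r(K) \leq \delta_2$ for all $s \geq T$, set $c_2 := T/(r(K)\,\underline e)$, and observe that $c \geq c_2$ forces $c\,r(K)\,e(x) \geq T$ for every $x$, whence $Af(x) \leq \delta_2 f(x)$. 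If the resulting $c_1$ is not already smaller than $c_2$, I simply shrink $c_1$ (property (a) is inherited by every smaller value of $c$), so that $c_1 < c_2$ as required.

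I do not expect a real obstacle; the content of the lemma is just the translation of the scalar slope conditions \eqref{eq:cpk} into sub- and super-solution inequalities for $A$ along the ray $\{ce : c > 0\}$, and the two points that need care are bookkeeping. First, all estimates must be uniform over $\XX$, which is exactly why one needs a strictly positive lower bound $\underline e$ and a finite upper bound $\bar e$ on $e$; these rest on continuity of $e$ and compactness of $\XX$, and the argument genuinely breaks without compactness. Second, $\phi(t)/t$ must actually have limits as $t\downarrow 0$ and $t\uparrow\infty$; I would take this from the hypothesis \eqref{eq:cpk} as written, though it can also be read off the explicit form $\phi(t) = \{1-\beta+\beta t^{1/\theta}\}^\theta$ in \eqref{eq:defphi} (with the convention $\infty^\alpha = 0$ for $\alpha < 0$ handling $\theta < 0$), which additionally shows $\phi(t) > 0$ for $t > 0$, so that $f = ce$ and $Af$ are bona fide elements of $\cC$.
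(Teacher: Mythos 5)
Your argument is correct and is essentially identical to the paper's proof: both reduce the operator inequalities to the scalar slope conditions via $K(ce)=c\,r(K)e$, use the uniform bounds $0<\underline e\leq e\leq\bar e$ from continuity of $e$ on the compact $\XX$, and set $c_1,c_2$ via the thresholds at which $\phi(t)r(K)/t$ crosses $\delta_1$ and $\delta_2$. The final adjustment to guarantee $c_1<c_2$ matches the paper's choice of $c_2>\max\{M/(r(K)\underline e),\,c_1\}$.
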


\begin{proof}
    Let $\lambda := r(K)$ and let $e$ be the Perron--Frobenius eigenfunction.
    Let $\underline e$ and $\bar e$ be the maximum and minimum of $e$ on
    $\XX$, as defined above.
    Regarding claim (a), observe that, in view of \eqref{eq:cpk}, there exists
    a $\delta_1 > 1$ and an $\epsilon > 0$ such that
    \begin{equation*}
        \frac{\phi(t)}{t} \lambda \geq \delta_1
        \quad \text{whenever} \quad 0 < t < \epsilon.
    \end{equation*}
    Choosing $c_1$ such that $0 < c_1 \lambda \bar e < \epsilon$ and $c \leq c_1$, we have
    $c \lambda e(x) < \epsilon$ for all $x \in \XX$, and hence
    \begin{equation*}
        A ce(x)
        = \phi(c Ke (x))
        = \phi(c \lambda e (x))
        = \frac{ \phi(c \lambda e (x)) }{c \lambda e (x)} c \lambda e (x)
        \geq \delta_1 c e (x).
    \end{equation*}

    Turning to claim (b) and using again the hypotheses in
    \eqref{eq:cpk}, we can choose a $\delta_2 <1$ and finite constant $M$ such that
    \begin{equation*}
        \frac{\phi(t)}{t} \lambda \leq \delta_2
        \quad \text{whenever} \quad t > M.
    \end{equation*}
    Let $c_2$ be a constant strictly greater than
    $\max\{M/(\lambda \underline e), c_1\}$ and fix $c \geq c_2$.  By the
        definition of $\underline e$ we have $c \lambda e(x) \geq c_2 \lambda
        \underline e > M$
        for all $x \in \XX$, so
    \begin{equation*}
        A c e (x)
        = \phi(c \lambda e (x))
        = \frac{ \phi(c \lambda e (x)) }{c \lambda e (x)} \lambda c e (x)
        \leq \delta_2 c e (x).
    \end{equation*}
    By construction, $0 < c_1 < c_2$, so all claims are now established.
\end{proof}

\begin{lemma}
    \label{l:rk21}
    If the conditions in \eqref{eq:cpk} hold and $A$ has a fixed point $g^*$
    in $\cC$, then, given any $g\in \cC$, there exist functions $f_1, f_2 \in
    \cC$ such that
    \begin{equation}
        \label{eq:egb}
        f_1 \leq A g, g^* \leq f_2,
        \quad
        Af_1 \geq f_1 + \epsilon (f_2 - f_1)
        \quad \text{and} \quad
        Af_2 \leq f_2 - \epsilon (f_2 - f_1).
    \end{equation}
\end{lemma}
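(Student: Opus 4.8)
The plan is to take $f_1$ and $f_2$ to be scalar multiples of the Perron--Frobenius eigenfunction $e$ of $K$ from theorem~\ref{t:sri}, choosing the multiples so that lemma~\ref{l:eb} applies and, simultaneously, $f_1$ lies below $Ag$ and $f_2$ lies above $g^*$. Write $\underline e := \min_{x \in \XX} e(x)$ and $\bar e := \max_{x \in \XX} e(x)$, both finite and strictly positive by theorem~\ref{t:sri}, and let $c_1 < c_2$ and $\delta_1 > 1 > \delta_2$ be as in lemma~\ref{l:eb}, whose hypothesis \eqref{eq:cpk} is in force.

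The only step that needs an argument is to bound $Ag$ below by a strictly positive constant and $g^*$ above by a finite one. Since $g, g^* \in \cC$ we have $g \gg 0$ and $g^* \gg 0$, so $Kg$ and $Kg^*$ are continuous (lemma~\ref{l:wpok}(b)); moreover, since $\pi$ is everywhere positive on $\XX$ (lemma~\ref{l:posd}) the function $g$ is Lebesgue-a.e.\ positive, and the kernel $k$ in \eqref{eq:defkern} is a strictly positive factor times the probability density $q(x, \cdot)$, so in fact $Kg(x) = \int k(x, y) g(y) \diff y > 0$ for \emph{every} $x \in \XX$ --- and likewise for $Kg^*$. Being continuous and strictly positive on the compact set $\XX$, both $Kg$ and $Kg^*$ take values in a compact subset of $(0, \infty)$, and since $\phi$ from \eqref{eq:defphi} is continuous and strictly positive on $(0, \infty)$, the same holds for $Ag = \phi(Kg)$ and $g^* = Ag^* = \phi(Kg^*)$ by \eqref{eq:defa}. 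Fix $m_g > 0$ with $Ag \geq m_g$ and $M^* < \infty$ with $g^* \leq M^*$. This passage from the a.e.\ statement $\gg 0$ to genuine positive bounds on $\XX$, which relies on compactness of $\XX$, continuity of $K$ and full support of $\pi$, is the main (indeed the only) obstacle; what remains is arithmetic with constants.

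Set $a := \min\{c_1, \, m_g / \bar e\}$, $f_1 := a\,e$, $b := \max\{c_2, \, M^* / \underline e\}$ and $f_2 := b\,e$. Then $0 < a \leq c_1$, so lemma~\ref{l:eb}(a) gives $A f_1 \geq \delta_1 f_1$ for the associated $\delta_1 > 1$, while $f_1 \leq a \bar e \leq m_g \leq Ag$ pointwise, so $f_1 \leq Ag$; symmetrically $b \geq c_2$, so lemma~\ref{l:eb}(b) gives $A f_2 \leq \delta_2 f_2$ for the associated $\delta_2 < 1$, while $f_2 \geq b \underline e \geq M^* \geq g^*$ pointwise, so $g^* \leq f_2$. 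Since $0 < a < b$ and $e \gg 0$, both $f_1, f_2$ belong to $\cC$ and $f_2 - f_1 = (b - a)e \gg 0$. (Should the downstream argument also want $f_1 \leq g^*$ and $Ag \leq f_2$, these follow by further shrinking $a$ and enlarging $b$ via the same bounds.)

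Finally, because $f_1 = a e$ and $f_2 - f_1 = (b - a) e$ are proportional, the inequality $A f_1 \geq \delta_1 f_1 = f_1 + (\delta_1 - 1) a\,e$ yields $A f_1 \geq f_1 + \epsilon(f_2 - f_1)$ whenever $\epsilon(b - a) \leq (\delta_1 - 1) a$, and $A f_2 \leq \delta_2 f_2 = f_2 - (1 - \delta_2) b\,e$ yields $A f_2 \leq f_2 - \epsilon(f_2 - f_1)$ whenever $\epsilon(b - a) \leq (1 - \delta_2) b$. It therefore suffices to take
\[
\epsilon := \min\left\{ \frac{(\delta_1 - 1)\,a}{b - a}, \;\; \frac{(1 - \delta_2)\,b}{b - a}, \;\; 1 \right\} > 0,
\]
shrinking it further if a value in $(0, 1)$ is wanted for the later application of the fixed point theorem of \cite{du1990fixed}. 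With this choice all the relations in \eqref{eq:egb} hold.
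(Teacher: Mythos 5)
Your proof is correct and follows essentially the same route as the paper's: take $f_1, f_2$ proportional to the Perron--Frobenius eigenfunction $e$, use the continuity of $Ag$ and $g^* = Ag^*$ on the compact set $\XX$ to sandwich them by multiples of $e$, invoke lemma~\ref{l:eb} for the slope inequalities, and convert $\delta_1 > 1 > \delta_2$ into the $\epsilon$-form by elementary arithmetic with the proportional functions. If anything you are more careful than the paper, which asserts without detail that $Ag$ attains a strictly positive minimum (nontrivial when $\theta < 0$, since then $\phi(0)=0$); your argument that $Kg(x)>0$ for every $x$ supplies exactly the missing justification.
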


\begin{proof}
    Fix $g \in \cC$. Since $A g$ is continuous and
    $\XX$ is compact, $Ag$ attains a finite maximum and strictly positive
    minimum on $\XX$.  The same is true of the fixed point $g^* = A g^*$ and
    the Perron--Frobenius eigenfunction $e$.  Hence, we can choose constants
    $a_1$ and $a_2$ such that $0 \ll a_1 e \leq g^*, A g \leq a_2 e$.  With $a_1$
    chosen sufficiently small lemma~\ref{l:eb} implies that $A(a_1 e)
    \geq \delta a_1 e$ for some $\delta > 1$.
    Setting $f_i := a_i e$, we then have $Af_1 \geq \delta a_1 e$.
    Since $\delta > 1$, we can write this as
    $Af_1 \geq a_1 e + \epsilon (a_2 - a_1) e$ for some positive $\epsilon$.
    In other words, $Af_1 \geq f_1 + \epsilon (f_2 - f_1)$.
    The proof of the last inequality is similar.
\end{proof}

\begin{theorem}
    \label{t:bkb}
    If $\Lambda < 1$, then $A$ is globally stable on $\cC$.
\end{theorem}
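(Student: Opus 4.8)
The plan is to translate the scalar inequality $\Lambda < 1$ into the pair of slope conditions \eqref{eq:cpk} used in Lemma~\ref{l:eb}, then run a monotone iteration of $A$ and invoke the fixed point theorem of \cite{du1990fixed} to obtain existence, uniqueness and global convergence in one package. For the translation, recall from Proposition~\ref{p:lsr} that $\Lambda = \beta\, r(K)^{1/\theta}$, that $r(K) > 0$ by Theorem~\ref{t:sri}, and that $\theta \neq 0$ because $\gamma \neq 1$ and $\psi \neq 1$. I would compute the one-sided limits of $\phi(t)/t$ from the explicit formula \eqref{eq:defphi}. If $\theta > 0$, then $\phi(0^+) = (1-\beta)^\theta \in (0,\infty)$, so $\phi(t)/t \to +\infty$ as $t \downarrow 0$ and the first half of \eqref{eq:cpk} is automatic, while $\phi(t) \sim \beta^\theta t$ as $t \uparrow \infty$, so the second half reads $\beta^\theta r(K) < 1$, which is exactly $\Lambda < 1$ upon raising to the power $1/\theta > 0$. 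If $\theta < 0$ the two ends swap roles: $\phi(t)/t \to 0$ as $t \uparrow \infty$ and $\phi(t) \sim \beta^\theta t$ as $t \downarrow 0$, so \eqref{eq:cpk} reduces to $\beta^\theta r(K) > 1$, which is again $\Lambda < 1$ since raising to the negative power $1/\theta$ reverses the inequality. Hence $\Lambda < 1$ is equivalent to \eqref{eq:cpk}, and Lemmas~\ref{l:eb} and~\ref{l:rk21} are in force.

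With \eqref{eq:cpk} in hand, Lemma~\ref{l:eb} supplies constants $0 < c_1 < c_2$ such that $f_1 := c_1 e$ and $f_2 := c_2 e$ satisfy $f_1 \leq A f_1$ and $A f_2 \leq f_2$, and clearly $f_1 \leq f_2$. Since $A$ is isotone, an easy induction gives $f_1 \leq A^n f_1 \leq A^{n+1} f_1 \leq f_2$ for all $n$, so by Lemma~\ref{l:udfp} the iterates $A^n f_1$ converge in $L_1(\pi)$ to a fixed point $g^*$ of $A$; the passage to the limit through $A$ along a monotone, order-bounded sequence is justified by positivity and continuity of $K$ (Lemma~\ref{l:wpok}) together with monotonicity and continuity of $\phi$, everything dominated by $f_2$. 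Because $A^n f_1 \geq f_1 = c_1 e \gg 0$, the limit satisfies $g^* \gg 0$, i.e.\ $g^* \in \cC$.

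Now fix an arbitrary $g \in \cC$. By Lemma~\ref{l:rk21} (and its proof, which chooses the bracketing functions small and large enough) there are $f_1, f_2 \in \cC$, depending on $g$, with $f_1 \leq \min\{g^*, Ag\}$, $\max\{g^*, Ag\} \leq f_2$, and the strict slope gaps $A f_1 \geq f_1 + \epsilon (f_2 - f_1)$ and $A f_2 \leq f_2 - \epsilon (f_2 - f_1)$. These are precisely the hypotheses under which the theorem of \cite{du1990fixed} — in its concave form when $\theta < 0$ or $\theta \geq 1$ and its convex form when $0 < \theta < 1$, the operator $A$ being of the corresponding type — yields that $A$ has a unique fixed point in the order interval $[f_1, f_2]$ and that $A^n f_1 \uparrow$ and $A^n f_2 \downarrow$ both converge to it. Since $f_1 \leq g^* \leq f_2$ and $g^*$ is a fixed point, that unique fixed point must be $g^*$; and since $f_1 \leq Ag \leq f_2$, applying $A^{n-1}$ gives the sandwich $A^{n-1} f_1 \leq A^n g \leq A^{n-1} f_2$, which forces $A^n g \to g^*$. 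As $g \in \cC$ was arbitrary, and any fixed point $g' \in \cC$ satisfies $g' = A^n g' \to g^*$, uniqueness follows as well. Hence $A$ has a unique fixed point $g^*$ in $\cC$ with $A^n g \to g^*$ for every $g \in \cC$, which is the asserted global stability.

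The main obstacle is the uniqueness and global-convergence argument: aligning the bracketing inequalities furnished by Lemma~\ref{l:rk21} with the precise hypotheses of \cite{du1990fixed}, and treating the convex regime $0 < \theta < 1$ — where one must invoke the convex version of Du's theorem rather than the more familiar concave one — on exactly the same footing as the concave regime. A secondary, purely technical point is the justification that limits pass through the nonlinear operator $A$ along the monotone iterates used above; this relies on the order bounds by $c_2 e$ (respectively by $f_2$) and on the regularity of $K$ and $\phi$ recorded in Lemma~\ref{l:wpok} and around \eqref{eq:defphi}.
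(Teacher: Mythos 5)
Your proposal is correct and follows essentially the same route as the paper: translate $\Lambda<1$ into the slope conditions \eqref{eq:cpk} via proposition~\ref{p:lsr}, use lemmas~\ref{l:eb} and~\ref{l:rk21} to build the order brackets, and invoke Du's fixed point theorem (in its concave or convex form according to $\theta$) for uniqueness and convergence on the order interval, sandwiching $A^n g$ to get global stability. The only cosmetic difference is that you construct the fixed point by monotone iteration from $c_1 e$ together with lemma~\ref{l:udfp}, whereas the paper reads existence directly off theorem~2.1.2 of \cite{zhang2013}; both are valid.
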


\begin{proof}
    First we show that, if $\Lambda < 1$, then the conditions in
    \eqref{eq:cpk} hold.  Throughout we use the fact that $\Lambda = \beta
    r(K)^{1/\theta}$, as shown in proposition~\ref{p:lsr}.  To start, observe that
    \begin{equation}
        \label{eq:rat}
        \frac{\phi(t)}{t}
        = \left\{ \frac{1 - \beta}{t^{1/\theta}} + \beta \right\}^\theta .
    \end{equation}
    If, on one hand, $\theta < 0$, then $\Lambda < 1$ implies $\beta^\theta
    r(K) > 1$ and, in addition, \eqref{eq:rat}
    increases to $\beta^{\theta}$ as $t \to 0$.  Thus, the first inequality in
    \eqref{eq:cpk} holds.  The second inequality also holds because $\phi(t)/t
    \to 0$ as $t \to \infty$.  If, on the other hand, $\theta > 0$, then
    $\beta^\theta r(K)
    < 1$ and  \eqref{eq:rat} diverges to $+\infty$ as $t \to 0$, so the
    first inequality in \eqref{eq:cpk} holds.  The second inequality also
holds because $\phi(t)/t \to \beta^\theta$ as $t \to \infty$.

    To complete the proof of theorem~\ref{t:bkb}, note that $\phi$ is either
    concave or convex, depending on the value of $\theta$. Suppose first that
    $\phi$ is concave, which implies that $A$ is both isotone and concave on $\cC$.
    Lemma~\ref{l:eb} yields positive constants $c_1 < c_2$ such that $A c_1 e
    \geq c_1 e$ and $A c_2 e \leq c_2 e$.  Theorem~2.1.2 of \cite{zhang2013},
    which in turn is based on \cite{du1990fixed},
    now implies that $A$ has a fixed point $g^* \in L_1(\pi)$ satisfying $c_1
    e \leq g^* \leq c_2 e$.  Since $e \gg 0$ and $c_1 > 0$, we have $g^* \gg
    0$.

    Let $g$ be a nonzero element of $\cC$.  Choose $f_1, f_2$ as in
    lemma~\ref{l:rk21}.  Theorem~2.1.2 of \cite{zhang2013} now implies that
    every element of $[f_1, f_2]$ converges to $g^*$ under iteration of $A$.
    In particular, $A^n (A g) \to g^*$ as $n \to \infty$.  But then $A^n g \to
    g^*$ also holds.  We conclude that $A$ is globally asymptotically stable
    on $\cC$.

    The proof for the convex case is essentially identical.
\end{proof}

The next result expands on a line of argument developed by
\cite{toda2018wealth}, shifting up to infinite dimensions and allowing $\theta
< 0$.

\begin{proposition}
    \label{p:bkn}
    If $A$ has a nonzero fixed point in $\cC$, then $\Lambda < 1$.
\end{proposition}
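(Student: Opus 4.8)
The plan is to exploit the Perron--Frobenius structure of the linear operator $K$ by pairing the fixed point equation against a strictly positive \emph{left} eigenfunctional of $K$. Suppose $g^* \in \cC$ is a fixed point of $A$, so that $g^* \gg 0$ and $g^*(x) = \phi(Kg^*(x))$ for $\pi$-almost every $x$. By lemma~\ref{l:wpok}(d) we have $Kg^* \gg 0$, hence $Kg^*(x) > 0$ for a.e.\ $x$. The first ingredient is an elementary pointwise comparison between $\phi(t)$ and $\beta^\theta t$: since $t = (t^{1/\theta})^\theta$ we may write $\phi(t)/t = \{(1-\beta)\,t^{-1/\theta} + \beta\}^\theta$ for every $t > 0$, and because $(1-\beta)\,t^{-1/\theta} > 0$ the bracketed quantity strictly exceeds $\beta > 0$. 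Raising to the power $\theta$, this gives $\phi(t) > \beta^\theta t$ for all $t > 0$ when $\theta > 0$, and $\phi(t) < \beta^\theta t$ for all $t > 0$ when $\theta < 0$ (the inequality reversing since the exponent is negative). As $\theta \neq 0$ always, exactly one of these alternatives holds.

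The second ingredient is the existence of a strictly positive linear functional $\ell$ on $L_1(\pi)$ satisfying $\ell(Kf) = r(K)\,\ell(f)$ for all $f$. To obtain it, note that the adjoint $K^*$ of $K$ on $L_\infty(\pi)$ is the kernel operator with kernel $\tilde k(y,x) = k(x,y)\,\pi(x)/\pi(y)$; since $\pi$ is everywhere positive (lemma~\ref{l:posd}) and the exponential factor in $k$ is strictly positive, positivity of $q^\ell$ (assumption~\ref{a:i}) forces the $\ell$-fold kernel of $K^*$ to be everywhere positive, so $K^*$ is itself positive, irreducible, and has a compact power. Applying the Perron--Frobenius/Krein--Rutman results of \cite{meyer2012banach} used in theorem~\ref{t:sri} to $K^*$, and recalling $r(K^*) = r(K) > 0$, we obtain a nonzero $e^* \in L_\infty(\pi)$ with $e^* \geq 0$ and $K^* e^* = r(K)\,e^*$; the positivity argument of lemma~\ref{l:posd}, now applied to the everywhere-positive $\ell$-fold kernel of $K^*$, upgrades this to $e^* \gg 0$. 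Setting $\ell(f) := \int f e^* \diff \pi$ yields the desired strictly positive eigenfunctional, with $\ell(g^*) > 0$ because $g^* \gg 0$.

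It remains to combine the two. Suppose $\theta > 0$. Then $g^* = \phi(Kg^*) \gg \beta^\theta Kg^*$, and applying $\ell$ (which preserves strict a.e.\ inequalities) gives $\ell(g^*) > \beta^\theta \ell(Kg^*) = \beta^\theta r(K)\,\ell(g^*)$; dividing by $\ell(g^*) > 0$ yields $\beta^\theta r(K) < 1$, i.e.\ $r(K) < \beta^{-\theta}$, so raising to the power $1/\theta > 0$ and multiplying by $\beta$ gives $\Lambda = \beta\,r(K)^{1/\theta} < 1$. If instead $\theta < 0$, the reversed comparison gives $\ell(g^*) < \beta^\theta r(K)\,\ell(g^*)$, hence $\beta^\theta r(K) > 1$ and $r(K) > \beta^{-\theta}$; raising to the negative power $1/\theta$ reverses the inequality to $r(K)^{1/\theta} < \beta^{-1}$, so again $\Lambda < 1$. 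This proves the proposition.

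I expect the main obstacle to be the second step: producing a \emph{strictly} positive left eigenfunctional of $K$. Existence of a nonnegative eigenfunctional of $K^*$ at $r(K)$ follows quickly from the machinery already invoked for $K$, but the strict positivity needed to pass strict a.e.\ inequalities through $\ell$ requires carefully identifying $K^*$ as a kernel operator with the correct $\pi$-weighting and then re-running the absorbing-set argument of lemma~\ref{l:posd}. Everything else---the bound on $\phi(t)/t$ and the final manipulation of the exponent $1/\theta$---is routine.
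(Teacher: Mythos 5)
Your proof is correct and follows essentially the same route as the paper's: compare $\phi(t)$ with $\beta^\theta t$ according to the sign of $\theta$, pair the fixed-point equation against a strictly positive eigenfunctional $e^*$ of the adjoint $K^*$ at $r(K)$, and divide through by $\langle e^*, g^*\rangle > 0$. The only cosmetic difference is that you reconstruct the strictly positive left eigenfunctional by hand via the kernel of $K^*$, where the paper obtains it directly from lemma~4.2.11 of \cite{meyer2012banach} together with the Riesz Representation Theorem.
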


\begin{proof}
    Let $K^*$ be the adjoint operator associated with $K$.
    Since $K$ is irreducible and $K^2$ is compact,
    we can employ the version of the Krein--Rutman theorem presented in
    lemma~4.2.11 of \cite{meyer2012banach}.  Combining this result with the
    Riesz Representation Theorem, there exists an $e^* \in L_\infty(\pi)$ such
    that
    \begin{equation}
        \label{eq:pao}
        e^* \gg 0
        \quad \text{and} \quad
        K^* e^* = r(K) e^*.
    \end{equation}
    Let $g$ be a nonzero fixed point of $A$ in $\cC$.  It is convenient in
    what follows to use the inner product notation $\la f, h \ra := \int f h
    \diff \pi$ for $f \in L_1(\pi)$ and $g \in L_\infty(\pi)$.

    First consider the case where $\theta < 0$, so that, by the definition of
    $\phi$ we have $\phi(t) \leq \beta^\theta t$ with strict inequality
    whenever $t > 0$.  As a result, $g(x) = Ag(x) = \phi(K g(x)) \leq
    \beta^\theta K g(x)$, with strict inequality whenever $Kg(x) > 0$.  By
    part (c) of lemma~\ref{l:wpok}, we have $Kg \geq 0$ and $Kg \not= 0$.  So
    it must be that $g \leq \beta^\theta Kg$ and $g < \beta^\theta K g$ on a
    set of positive $\pi$-measure.  But then, taking $e^*$ as in
    \eqref{eq:pao}, we have $\la e^*, \beta^\theta Kg -g \ra > 0$, or,
    equivalently, $\beta^\theta \la e^*, Kg \ra > \la e^*, g \ra$.  Using the
    definition of the adjoint and \eqref{eq:pao} gives
        $r (K) \la e^*, g \ra
        = \la K^* e^*, g \ra
        = \la e^*, Kg \ra$, so it must be that $\beta^\theta r (K) \la e^*, g
        \ra > \la e^*, g \ra$.  Hence $\Lambda = \beta^\theta r (K) > 1$.
        Because $\theta < 0$, this implies that $\Lambda = \beta
        r(K)^{1/\theta} < 1$.

    Next consider the case where $\theta > 0$, so that $\phi(t) > \beta^\theta t$ whenever
    $t > 0$.
    As a result, we have $g(x) = Ag(x) = \phi(K g(x)) \geq \beta^\theta K g(x)$, with
    strict inequality whenever $Kg(x) > 0$.
    By part (c) of lemma~\ref{l:wpok}, we have $Kg \geq 0$ and $Kg \not= 0$.
    So it must be that $g \geq \beta^\theta Kg$ and $g > \beta^\theta K g$ on a set of positive $\pi$-measure.
    But then, taking $e^*$ as in \eqref{eq:pao}, we have
    $\la e^*, \beta^\theta Kg -g \ra < 0$, or, equivalently,
    $\beta^\theta \la e^*, Kg \ra < \la e^*, g \ra$.  As already shown, we
    have $r (K) \la e^*, g \ra = \la e^*, Kg \ra$,
    so it must be that $\beta^\theta r (K) \la e^*, g \ra < \la e^*, g \ra$.
    Hence $\beta^\theta r (K) < 1$.
    Because $\theta > 0$, this implies that $\Lambda = \beta r(K)^{1/\theta} < 1$.
\end{proof}

\begin{proof}[Proof of theorem~\ref{t:bkl1c}]
    \label{pr:bkl1c}
    Clearly (e) $\implies$ (d), which in turn implies (c) since we can take
    $g$ equal to the fixed point.  In addition, (c) implies (b), since
    $K$ is a bounded linear operator on $L_1(\pi)$ and $\phi$ is continuous
    on $\RR_+$, from which it follows that $A$ is continuous on $\cC$, and
    hence any limit of a sequence of iterates $\{A^n g\}_{n \geq 1}$ of $A$ is
    a fixed point of $A$.
    The implication (b) $\implies$ (a) is due to proposition~\ref{p:bkn}.
    Finally, (a) $\implies$ (e) by theorem~\ref{t:bkb} and
    proposition~\ref{p:lsr}.
\end{proof}

\bibliographystyle{apa}

\bibliography{sru_bib}

\begin{thebibliography}{}

\bibitem[\protect\astroncite{Alvarez and Jermann}{2005}]{alvarez2005using}
Alvarez, F. and Jermann, U.~J. (2005).
\newblock Using asset prices to measure the persistence of the marginal utility
  of wealth.
\newblock {\em Econometrica}, 73(6):1977--2016.

\bibitem[\protect\astroncite{Bansal
  et~al.}{2014}]{bansal_kiku_shaliastovich_yaron:2014}
Bansal, R., Kiku, D., Shaliastovich, I., and Yaron, A. (2014).
\newblock Volatility, the macroeconomy, and asset prices.
\newblock {\em Journal of Finance}, 69(6):2471--2511.

\bibitem[\protect\astroncite{Bansal et~al.}{2012}]{bansal2012empirical}
Bansal, R., Kiku, D., and Yaron, A. (2012).
\newblock An empirical evaluation of the long-run risks model for asset prices.
\newblock {\em Critical Finance Review}, 1(1):183--221.

\bibitem[\protect\astroncite{Bansal and Yaron}{2004}]{bansal2004risks}
Bansal, R. and Yaron, A. (2004).
\newblock Risks for the long run: A potential resolution of asset pricing
  puzzles.
\newblock {\em The Journal of Finance}, 59(4):1481--1509.

\bibitem[\protect\astroncite{Borovi\v{c}ka
  et~al.}{2016}]{borovicka_hansen_scheinkman:2016}
Borovi\v{c}ka, J., Hansen, L.~P., and Scheinkman, J.~A. (2016).
\newblock Misspecified recovery.
\newblock {\em Journal of Finance}, 71(6):2493--2544.

\bibitem[\protect\astroncite{Christensen}{2017}]{christensen2017nonparametric}
Christensen, T.~M. (2017).
\newblock Nonparametric stochastic discount factor decomposition.
\newblock {\em Econometrica}, 85(5):1501--1536.

\bibitem[\protect\astroncite{Croce}{2014}]{croce:2014}
Croce, M.~M. (2014).
\newblock Long-run productivity risk: {A} new hope for production-based asset
  pricing?
\newblock {\em Journal of Monetary Economics}, 66:13--31.

\bibitem[\protect\astroncite{Dane{\v{s}}}{1987}]{danevs1987local}
Dane{\v{s}}, J. (1987).
\newblock On local spectral radius.
\newblock {\em {\v{C}}asopis pro p{\v{e}}stov{\'a}n{\'\i} matematiky},
  112(2):177--187.

\bibitem[\protect\astroncite{Du}{1990}]{du1990fixed}
Du, Y. (1990).
\newblock Fixed points of increasing operators in ordered banach spaces and
  applications.
\newblock {\em Applicable Analysis}, 38(01-02):1--20.

\bibitem[\protect\astroncite{Epstein and Zin}{1989}]{epstein1989}
Epstein, L.~G. and Zin, S.~E. (1989).
\newblock Risk aversion and the temporal behavior of consumption and asset
  returns: A theoretical framework.
\newblock {\em Econometrica}, 57(4):937--969.

\bibitem[\protect\astroncite{Guo and He}{2018}]{guo2016}
Guo, J. and He, X.~D. (2018).
\newblock Recursive utility with investment gains and losses: Existence,
  uniqueness and convergence.
\newblock Technical report, SSRN working paper 2790768.

\bibitem[\protect\astroncite{Hansen et~al.}{2008}]{hansen2008consumption}
Hansen, L.~P., Heaton, J.~C., and Li, N. (2008).
\newblock Consumption strikes back? {M}easuring long-run risk.
\newblock {\em Journal of Political Economy}, 116(2):260--302.

\bibitem[\protect\astroncite{Hansen and Scheinkman}{2009}]{hansen2009}
Hansen, L.~P. and Scheinkman, J.~A. (2009).
\newblock Long-term risk: An operator approach.
\newblock {\em Econometrica}, 77(1):177--234.

\bibitem[\protect\astroncite{Hansen and Scheinkman}{2012}]{hansen2012recursive}
Hansen, L.~P. and Scheinkman, J.~A. (2012).
\newblock Recursive utility in a {M}arkov environment with stochastic growth.
\newblock {\em Proceedings of the National Academy of Sciences},
  109(30):11967--11972.

\bibitem[\protect\astroncite{Johannes et~al.}{2016}]{johannes2016learning}
Johannes, M., Lochstoer, L.~A., and Mou, Y. (2016).
\newblock Learning about consumption dynamics.
\newblock {\em The Journal of finance}, 71(2):551--600.

\bibitem[\protect\astroncite{Kaltenbrunner and
  Lochstoer}{2010}]{kaltenbrunner_lochstoer:2010}
Kaltenbrunner, G. and Lochstoer, L.~A. (2010).
\newblock Long-run risk through consumption smoothing.
\newblock {\em Review of Financial Studies}, 23(8):3190--3224.

\bibitem[\protect\astroncite{Koopmans}{1960}]{koopmans1960stationary}
Koopmans, T.~C. (1960).
\newblock Stationary ordinal utility and impatience.
\newblock {\em Econometrica}, pages 287--309.

\bibitem[\protect\astroncite{Krasnosel'skii
  et~al.}{2012}]{krasnosel2012approximate}
Krasnosel'skii, M., Vainikko, G., Zabreyko, R., Ruticki, Y., and Stet'senko, V.
  (2012).
\newblock {\em Approximate Solution of Operator Equations}.
\newblock Springer Netherlands.

\bibitem[\protect\astroncite{Marinacci and Montrucchio}{2010}]{marinacci2010}
Marinacci, M. and Montrucchio, L. (2010).
\newblock Unique solutions for stochastic recursive utilities.
\newblock {\em Journal of Economic Theory}, 145(5):1776--1804.

\bibitem[\protect\astroncite{Meyer-Nieberg}{2012}]{meyer2012banach}
Meyer-Nieberg, P. (2012).
\newblock {\em Banach lattices}.
\newblock Springer Science \& Business Media.

\bibitem[\protect\astroncite{Meyn and Tweedie}{2009}]{meyn2009markov}
Meyn, S. and Tweedie, R.~L. (2009).
\newblock {\em Markov chains and stochastic stability}.
\newblock Cambridge University Press.

\bibitem[\protect\astroncite{Qin and Linetsky}{2017}]{qin2017long}
Qin, L. and Linetsky, V. (2017).
\newblock Long-term risk: A martingale approach.
\newblock {\em Econometrica}, 85(1):299--312.

\bibitem[\protect\astroncite{Rouwenhorst}{1995}]{rouwenhorst1995}
Rouwenhorst, K.~G. (1995).
\newblock Asset pricing implications of equilibrium business cycle models.
\newblock In {\em Frontiers of Business Cycle Research}, pages 294--330.
  Princeton University Press.

\bibitem[\protect\astroncite{Schaefer}{1974}]{schaeferbanach}
Schaefer, H. (1974).
\newblock {\em Banach Lattices and Positive Operators}.
\newblock Berlin-Heidelberg, New York.

\bibitem[\protect\astroncite{Schep}{1980}]{schep1980positive}
Schep, A. (1980).
\newblock Positive diagonal and triangular operators.
\newblock {\em Journal of Operator Theory}, pages 165--178.

\bibitem[\protect\astroncite{Schorfheide
  et~al.}{2018}]{schorfheide2018identifying}
Schorfheide, F., Song, D., and Yaron, A. (2018).
\newblock Identifying long-run risks: A bayesian mixed-frequency approach.
\newblock {\em Econometrica}, 86(2):617--654.

\bibitem[\protect\astroncite{Tallarini}{2000}]{tallarini2000risk}
Tallarini, T.~D. (2000).
\newblock Risk-sensitive real business cycles.
\newblock {\em Journal of Monetary Economics}, 45(3):507--532.

\bibitem[\protect\astroncite{Toda}{2018}]{toda2018wealth}
Toda, A.~A. (2018).
\newblock Wealth distribution with random discount factors.
\newblock {\em Journal of Monetary Economics}.

\bibitem[\protect\astroncite{Weil}{1990}]{Weil1990}
Weil, P. (1990).
\newblock Nonexpected utility in macroeconomics.
\newblock {\em The Quarterly Journal of Economics}, 105(1):29--42.

\bibitem[\protect\astroncite{Zabreiko et~al.}{1967}]{zabreiko1967bounds}
Zabreiko, P., Krasnosel'skii, M., and Stetsenko, V.~Y. (1967).
\newblock Bounds for the spectral radius of positive operators.
\newblock {\em Mathematical Notes}, 1(4):306--310.

\bibitem[\protect\astroncite{Zhang}{2013}]{zhang2013}
Zhang, Z. (2013).
\newblock {\em Variational, topological, and partial order methods with their
  applications}.
\newblock Springer.

\end{thebibliography}

\end{document}